\newtheorem{lemma}{Lemma}[section]
\g@addto@macro\bfseries{\boldmath}\makeatother
\newcommand{\appendixref}[1]{\hyperref[#1]{appendix~\ref{#1}}}
\def\equationautorefname~#1\null{eq.\,(#1)\null}
\newcommand{\ie}{\begin{equation}\begin{aligned}}
\newcommand{\fe}{\end{aligned}\end{equation}}
\newcommand{\figref}[1]{Fig.~\ref{#1}}
\newcommand{\tr}{\text{tr}}
\newcommand{\rank}{\text{rank}}
\crefname{appendix}{App.}{Apps.}
\crefname{equation}{Eq.}{Eqs.}
\crefname{figure}{Fig.}{Figs.}
\crefname{table}{Tab.}{Tabs.}
\crefname{section}{Sec.}{Secs.}
\begin{document}

\title{Quantum Fragmentation in the Extended Quantum Breakdown Model}
\date{\today}

% authors
\author{Bo-Ting Chen}
\affiliation{Department of Physics, Princeton University, Princeton, New Jersey 08544, USA}
\author{Abhinav Prem}
\affiliation{School of Natural Sciences, Institute for Advanced Study, Princeton, New Jersey 08540, USA}
\author{Nicolas Regnault}
\affiliation{Laboratoire de Physique de l’Ecole normale sup\'erieure, ENS, Universit\'e PSL, CNRS, Sorbonne Universit\'e, Universit\'e Paris-Diderot, Sorbonne Paris Cit\'e, 75005 Paris, France}
\affiliation{Department of Physics, Princeton University, Princeton, New Jersey 08544, USA}
\author{Biao Lian}
\affiliation{Department of Physics, Princeton University, Princeton, New Jersey 08544, USA}

% abstract
\begin{abstract}
We introduce a one-dimensional (1D) extended quantum breakdown model comprising a fermionic and a spin degree of freedom per site, and featuring a spatially asymmetric breakdown-type interaction between the fermions and spins. We analytically show that, in the absence of any magnetic field for the spins, the model exhibits Hilbert space fragmentation within each symmetry sector into exponentially many Krylov subspaces and hence displays non-thermal dynamics. Here, we demonstrate that the fragmentation naturally occurs in an entangled basis and thus provides an example of ``quantum fragmentation." Besides establishing the nature of fragmentation analytically, we also study the long-time behavior of the entanglement entropy and its deviation from the expected Page value as a probe of ergodicity in the system. Upon introducing a non-trivial magnetic field for the spins, most of the Krylov subspaces merge and the model becomes chaotic. Finally, we study the effects of strong randomness on the system and observe behavior similar to that of many-body localized systems.
\end{abstract}

\maketitle

%%%%%%%%%%%%%%%%%%%%%%%%%%%%%%%%%%%%%%%%
\section{Introduction}
\label{sec:intro}

Understanding the non-equilibrium dynamics of isolated many-body quantum systems is a key area of research in modern condensed matter physics, spurred in large part by experimental progress in manipulating ultra-cold atoms which can well approximate systems decoupled from any environment~\cite{weiss2006,gring2012,schreiber2015,smith2016,kaufman2016,kucsko2018}. On the theoretical front, there has been significant progress in understanding when (and how) a closed quantum system, evolving under its own dynamics, attains thermal equilibrium. A fundamental insight is provided by the Eigenstate Thermalization Hypothesis (ETH)~\cite{deutsch1991quantum,srednicki1994chaos,rigol2008thermalization, polkovnikov2011colloquium}, the strong version of which states that all eigenstates of an ergodic system behave like thermal states, as far as expectation values of local observables are concerned (see Refs.~\cite{d2016quantum,gogolin2016review,ueda2018review} for a review).

While generically one expects that most interacting quantum systems obey the ETH, many-body localised (MBL) systems~\cite{anderson1958,gornyi2005,baa2006} provide striking counterexamples. Unlike integrable systems, which are fine-tuned and have extensively many conserved quantities, the interplay of disorder and interactions in MBL systems leads to the emergence of extensively many local integrals of motion, endowing such systems with rich dynamical features~\cite{rahul2015review,altman2015review,abanin2018review}. A related question is whether non-ergodic dynamics can be obtained in non-integrable, disorder-free systems~\cite{deroeck2014trans,deroeck2014AMBL,groverfisher,schiulaz2015,papic2015nodisorder,yao2016,prem2017glass,smith2017a,smith2017b,michailidis2018,brenes2018}. Recently, the phenomenon of quantum many-body scars (QMBS)~\cite{bernien2017probing,turner2017quantum,vafek2017entanglement, moudgalya2018exact,schecter2018many,turner2018quantum, choi2018emergent,ho2018periodic,moudgalya2018nonint} i.e., the presence of a measure zero set of non-thermal eigenstates embedded in an otherwise thermal spectrum was discovered as a novel mechanism for weak ergodicity breaking which, for an appropriate choice of initial conditions, leads to late-time dynamics that deviates strongly from that of a thermal system~\cite{serbyn_review,chandran_review}. A closely related phenomenon is that of Hilbert space fragmentation (HSF)~\cite{sala2019ergodicity,khemani2020hilbert,rakovszky2019statistical,moudgalya2019thermalization,sanjay_review}, in which there exist exponentially many dynamically disconnected sectors--or Krylov subspaces--that are not merely distinguished by conventional global symmetries~\cite{moudgalya2021hilbert,moudgalya2022from}. In such systems, a physically motivated choice of basis leads to dynamically disconnected Krylov subspaces, each of which may be chaotic or non-ergodic, leading to a notion of ``Krylov restricted thermalization"~\cite{moudgalya2019thermalization}. 

Given the rapid theoretical progress in identifying various mechanisms for non-ergodicity in isolated many-body quantum systems, it is of much interest to identify experimentally feasible models which realize such physics and for which at least some results can be obtained in an exact manner. Recently, Ref.~\cite{Lian:2022nqj} introduced the ``quantum breakdown model" which provides a simplified quantum model for the breakdown of dielectrics in an electric field (the Townsend avalanche). The model consists of a one-dimensional (1D) fermionic chain with multiple degrees of freedom per site, where the conventional hopping is replaced by a spatially asymmetric interaction that converts between one fermion and multiple fermions (dubbed a ``breakdown" interaction), resulting in rich dynamical behaviour which includes HSF, MBL, and a many-body scar flat band. The quantum breakdown model has been generalized to similar models with spin and bosonic degrees of freedom~\cite{liu2023,bosonicqbhm}, and which include various types of spatially asymmetric breakdown interactions.

In this paper, we build upon Ref.~\cite{Lian:2022nqj} and further investigate the non-equilibrium dynamics of systems with both  hopping and breakdown-type spatially asymmetric interactions. Specifically, we consider a 1D chain with a fermionic and a spin degree of freedom on each site, where the spin-fermion interaction is spatially asymmetric and only the total fermionic charge $Q$ is conserved. Unlike the original breakdown model in Ref.~\cite{Lian:2022nqj}, here we additionally allow the fermions to hop freely, but we continue to find that the model displays non-thermal behavior across a wide range of parameters. In the absence of disorder and an external magnetic field, we analytically show that the system displays \textit{quantum HSF} (i.e., fragmentation in an entangled basis, in the lexicon of Ref.~\cite{moudgalya2021hilbert}) for even $Q$ at any system size but only for odd chain lengths when $Q$ is odd; this model hence displays ergodicity breaking without disorder. We further analyze the model in the presence of a random magnetic field, which causes the dynamically disconnected Krylov subspaces to mix and thus the system to thermalize; however, as the strength of the random magnetic field is increased even further, we observe a crossover from this chaotic state to an MBL state. Thus, this model provides a rich playground for exploring the non-equilibrium dynamics of interacting quantum systems.

The rest of this paper is organized as follows: In Sec.~\ref{sec:model}, we introduce the model and its associated symmetries. We first show the presence of HSF in our model in the absence of fermionic hopping in Sec.~\ref{section: disconnected subspaces by blocking}. We then turn on the symmetric hopping term for fermions: in Sec.~\ref{section: Without magnetic field: Q=1} and~\ref{section: Without magnetic field: Q>=2}, we study the model in the absence of a magnetic field and show that it displays Hilbert space fragmentation. Here, analytically identify the one and two dimensional Krylov subspaces within the $Q=1$ symmetry sectors and further discuss the manifestation of HSF in symmetry sectors with $Q > 1$. In Sec.~\ref{section: With magnetic field}, we add a random magnetic field to the model, and numerically show that the system displays a crossover from chaotic to MBL behavior. We conclude with a summary of our results and some avenues for future research in Sec.~\ref{section: Discussion}.

%%%%%%%%%%%%%%%%%%%%%%%%%%%%%%%%%%%%%%%%%%%%%%%%%%%%%%%%%%%%%%%%%%%%%%%%%%%%%%%%%%%%%%%%%%%%%%%%%%%%%%%%%%%%%%%%%%%%%%%%%%%%%%%%%%%%%%%%%%%%%%%%%%%%%%%%%%%%%%%%%%%%%%%%%%%%%%%%%%%%%%%%%%%%%%%%%%%%%%%%%%%%%%%%%%%%%%%%%%%%%%%%%%%%%%%%%%%%%%%%%%%%%%%%%%%%%%%%%%%%%%%%%%%%%%%%%%%%%%%%%%%%%%%%%%%%%%%%%%%%%%%%%%%%%%%%%%%%%%%%%%%%

\section{Extended Quantum Breakdown Model}
\label{sec:model}

In this paper, we study a 1D chain with two degrees of freedom on each site -- a spinless fermion and a spin-$1/2$ -- which, in the absence of any hopping for the fermions, maps onto a particular sector of the original quantum breakdown model introduced in Ref.~\cite{Lian:2022nqj} (see Appendix~\ref{appendix: the original breakdown model} for details). In the model we consider here, the fermions can move either through an ordinary symmetric hopping term or through a spatially asymmetric interaction that does not conserve the total number of spins. This extension of the original breakdown model thus allows us to investigate how the competition between the symmetric fermionic hopping and the asymmetric spin-fermion interaction influences the long-time dynamics. We discuss the model and its various symmetries below. Note that we restrict attention to open boundary condition (OBC) in this paper.

\subsection{Model}

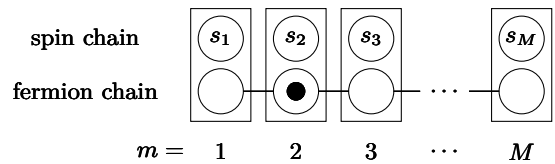
\begin{figure}[t]
    \centering
    \begin{tikzpicture}
\def\RectangleDistance{1}
\def\RectangleWidth{0.8}
\def\RectangleHeight{1.5}
\def\CircleDistance{0.7}
\def\CircleRadius{0.3}

\foreach \i in {0,1,2,4}{
    % Draw rectangle
    \draw (\i*\RectangleDistance-\RectangleWidth/2,0) rectangle (\i*\RectangleDistance+\RectangleWidth/2,\RectangleHeight);
    
    % Draw circles
    \draw (\i*\RectangleDistance, \RectangleHeight/2-\CircleDistance/2) circle (\CircleRadius);
    \draw (\i*\RectangleDistance,\RectangleHeight/2+\CircleDistance/2) circle (\CircleRadius);
    
    \foreach \i in {0,1,3}{
        \draw (\i*\RectangleDistance+\CircleRadius,\RectangleHeight/2-\CircleDistance/2) -- (\i*\RectangleDistance+\RectangleDistance-\CircleRadius,\RectangleHeight/2-\CircleDistance/2);
        }
    \draw (2*\RectangleDistance+\CircleRadius,\RectangleHeight/2-\CircleDistance/2) -- (3*\RectangleDistance-\CircleRadius-0.05,\RectangleHeight/2-\CircleDistance/2);
    
    \node at (\RectangleDistance*3,\RectangleHeight/2-\CircleDistance/2) {\small $\cdots$};

    \node at (\RectangleDistance*0,\RectangleHeight/2+\CircleDistance/2) {$s_1$};
    \node at (\RectangleDistance*1,\RectangleHeight/2+\CircleDistance/2) {$s_2$};
    \node at (\RectangleDistance*2,\RectangleHeight/2+\CircleDistance/2) {$s_3$};
    \node at (\RectangleDistance*4,\RectangleHeight/2+\CircleDistance/2) {$s_M$};

    \node at (-0.8,-0.4) {$m=$};
    \node at (\RectangleDistance*0,-0.4) {$1$};
    \node at (\RectangleDistance*1,-0.4) {$2$};
    \node at (\RectangleDistance*2,-0.4) {$3$};
    \node at (\RectangleDistance*3,-0.4) {$\cdots$};
    \node at (\RectangleDistance*4,-0.4) {$M$};

    \node at (-1.8,\RectangleHeight/2-\CircleDistance/2) {fermion chain};
    \node at (-1.8,\RectangleHeight/2+\CircleDistance/2) {spin chain};
    \fill (\RectangleDistance, \RectangleHeight/2-\CircleDistance/2) circle (3.5pt);
}
\end{tikzpicture}
\caption{The extended quantum breakdown model Eq.~\eqref{eqn: new hamiltonian} is defined on a chain of length $M$ where each site $m$ hosts a four-dimensional Hilbert space $\mathcal{H}_m = \mathbb{C}^2 \times \mathbb{C}^2$ consisting of a spinless fermion and a spin-$1/2$ degree of freedom.}
\label{fig: two chains}
\end{figure}

We consider a chain of length $M$ with total Hilbert space $\mathcal{H} = \otimes_{m=1}^M \mathcal{H}_m$, where the local Hilbert space $\mathcal{H}_m = \mathbb{C}^2 \otimes \mathbb{C}^2$ on site $m$ consists of a spin-$1/2$ and a spinless fermion, as illustrated in \figref{fig: two chains}. The Hamiltonian for our model is given by
\ie
\label{eqn: new hamiltonian}
H = H_J + H_\gamma + H_\mu + H_h 
.\fe
where the first term
\ie
H_J = \sum_{m=1}^{M} \left( J_m c_{m+1}^{\dagger} c_{m} \sigma_{m+1}^{+} + h.c. \right) 
,\fe
represents the spatially asymmetric spin-fermion interaction. Here, $\sigma_m^{\pm} = \sigma_m^x \pm i \sigma_m^y$ where $\sigma_m^\alpha$ are Pauli matrices acting on the spin on site $m$, and the fermionic creation/annihilation operators $c_m^\dagger, c_m$ obey canonical anti-commutation relations $\{c_m,c_{m^\prime}^\dagger\} = \delta_{m,m^\prime}$. This term permits fermions to hop between neighboring sites but only by simultaneously changing the local spin configuration on the right site. It resembles the breakdown phenomena in that a particle which is incident from the left can locally excite an atom \cite{Lian:2022nqj}, which we represent by a spin here for simplicity. For a similar interaction but for bosons, see Ref.~\cite{liu2023}. In this paper, we will consider both uniform ($J_m = J$) or random interaction strengths. The second term in the Hamiltonian
\ie
H_\gamma = - \sum_{m=1}^{M} \left( \gamma_m c_{m+1}^{\dagger} c_{m} + h.c. \right) 
,\fe
is the conventional nearest-neighbor hopping, which allows fermions to move without changing the spin configuration. We will consider both uniform hopping $\gamma_m = \gamma$ or random hopping strengths in this paper. The third term is the on-site potential for the fermions:
\ie
\label{eqn: H_mu}
H_\mu = \sum_{m=1}^{M} \mu_m \hat{n}_m\;,\;\;\; \hat{n}_m = c_{m}^{\dagger} c_m
.\fe
The on-site potential $\mu_m$ is a Gaussian random potential with mean value and variance given by
\ie
\label{eqn: mean std of mu}
\langle \mu_{m} \rangle = 0 \;, \;\;\; \langle \mu_{m}^2 \rangle = W^2
.\fe
Note that we do not have to separately consider $\langle \mu_{m} \rangle \ne 0$, since this would only induce a constant shift to the energy levels which can be eliminated through a re-definition of the ground state energy. Finally, we also consider the effect of a magnetic field:
\ie
H_h = \sum_{m=1}^{M} \vec{h}_m \cdot \vec{\sigma}_m
,\qquad \vec{h}_m=(h_m^x,h_m^y,h_m^z)\ ,\fe
where the field $\vec{h}_m$ represents an external magnetic field that couples to the spins. In this paper, the field $\vec{h}_m$ can either be uniform ($\vec{h}_m = \vec{h}$) or be generated from a Gaussian distribution
\ie
\langle h_{m}^i \rangle = \bar{h}^i \;, \;\;\; \langle \left( h_{m}^i - \bar{h}^i \right)^2 \rangle = \left( \Delta h_{m}^i \right)^2
.\fe
We note that only the terms $H_J$ and $H_\mu$ are present in the original breakdown model (see Appendix~\ref{appendix: the original breakdown model} for details on the mapping between the two models), which we have extended here by also including the terms $H_\gamma$ and $H_h$.

In what follows, we will denote terms in the Hamiltonian with uniform coefficients with a bar over the subscript, such as $H_{\bar{\gamma}}$ for uniform hopping $\gamma_m=\gamma$. Rather than exploring the entire four dimensional parameter space, in this paper we will restrict attention to certain subspaces that illustrate the rich dynamics of the model in Eq.~\eqref{eqn: new hamiltonian}; specifically, we consider the following Hamiltonians:
\begin{align}
\label{eq:hams}
H_{J \mu h} &= H_J + H_\mu + H_h \, , \nonumber \\
H_{\gamma J} &= H_\gamma + H_J \, , \nonumber \\
H_{\gamma J \mu} &= H_{\gamma} + H_{J} + H_\mu \, , \nonumber \\
H_{\gamma J h} &= H_\gamma + H_J + H_h \, .
\end{align}

\subsection{Symmetries}

Let us now turn to the symmetries of the Hamiltonian Eq.~\eqref{eqn: new hamiltonian} (which are also thus symmetries of each of the Hamiltonians in Eq.~\eqref{eq:hams}). While the extended breakdown model does not conserve the total spin, the total charge
\ie
\label{eqn: conserved charge}
Q = \sum_{m=1}^{M}\hat{n}_m
,\fe
which counts the total number of fermions, is conserved. This enables us to block diagonalize the Hamiltonian into charge $Q$ sectors and then analyze each $Q$ sector separately. The total Hilbert space dimension for a fixed charge $Q$ sector is $|\mathcal{H}_Q|=\text{dim}(\mathcal{H}_Q) = 2^M C^{M}_{Q}$, where $C^M_Q = \binom{M}{Q} =\frac{M!}{Q!(M-Q)!}$. Here, the factor of $2^M$ comes from the spin-$1/2$ per site, while the additional factor corresponds to filling a chain of length $M$ with $Q$ spinless fermions.

We can also define a charge conjugation transformation $\hat{C}$ as follows:
\ie
\hat{C}
&= \hat{C}_{\text{fermion}} \otimes \hat{C}_{\text{spin}}
\\
&= \left( \prod_{m=1}^{M} (c_m + c_m^\dagger) \right) 
\otimes \left( \prod_{m=1}^{M} \sigma_m^x \right)
.\fe
The charge conjugation operator $\hat{C}$ has the following properties:
\begin{itemize}
    \item It is its own inverse:
    \ie
        \hat{C} \hat{C} = \mathds{1} \Rightarrow \hat{C}^{-1} = \hat{C}
    .\fe
    \item It interchanges the roles of the fermions and the holes on the fermion chain:
    \ie
        \hat{C} c_m \hat{C}^{-1} = c_m^\dagger
        \\
        \hat{C} c_m^\dagger \hat{C}^{-1} = c_m
    .\fe
    \item It flips spin-up (spin-down) components to spin-down (spin-up), and changes the Pauli matrices in the Hamiltonian by:
    \ie
        \hat{C} \sigma_m^{x} \hat{C}^{-1} &= \sigma_m^{x}
        \\
        \hat{C} \sigma_m^{y} \hat{C}^{-1} &= -\sigma_m^{y}
        \\
        \hat{C} \sigma_m^{z} \hat{C}^{-1} &= -\sigma_m^{z}
        \\
        \hat{C} \sigma_m^{\pm} \hat{C}^{-1} &= \sigma_m^{\mp}
    \fe
\end{itemize}
In this paper, we will focus on Hamiltonians that are symmetric around $h_m^x=h_m^y=0$. Examples include $h_m^x=h_m^y=0$ and $h_m^x,h_m^y\sim \mathcal{N}(0,\sigma)$ (here, $\mathcal{N}$ denotes the normal distribution). Under $\hat{C}$, the Hamiltonian transforms as
\ie
\hat{C} H \hat{C}^{-1}
= \left(\sum_{m}\mu_m \right) - H_{\mu} - H_{\gamma} - H_J - H_h
=
-H
,\fe
which follows from Eq.~\eqref{eqn: mean std of mu}, while the charge $\mathcal{Q}$ transforms as
\ie
\hat{C} Q \hat{C}^{-1}
= M - Q
,\fe
such that for a simultaneous eigenstate $\ket{\Psi}$ of $H, Q$ with eigenvalues $E,q$, there exists another eigenstate $\left(\hat{C} \ket{\Psi} \right)$ with eigenvalues $-E, M- q$. In other words, the full spectrum of all charge sectors is particle-hole symmetric around $E = 0$ and, additionally, the energy spectrum in the charge $Q$ sector is opposite to that of the charge $M - Q$ sector.

Finally, the Hamiltonian Eq.~\eqref{eqn: new hamiltonian} hosts an additional SO(2) symmetry of the left boundary spin degree of freedom. Specifically, the operator $\vec{\sigma}_{m=1}$ is decoupled from all other operators, which we can see by first re-writing $H$ as
\ie
H = \vec{h}_1\cdot \vec{\sigma}_1 + H^\prime(\left\{ c_{m}, c_{m}^\dagger, \vec{\sigma}_{m^\prime} \right\})
,\fe
where $m=1, \dots, M$, $m^\prime = 2, \dots, M$. Note that $H^\prime$ does not act on the spin on the first site and hence trivially conserves it. The presence of this extra boundary symmetry allows us to block-diagonalize each $Q$-sector into two disconnected blocks, distinguished by the spin configuration on the first site. The total Hilbert space therefore decomposes as
\ie
\label{eqn: 2-block Hilbert space fragmentation}
\mathcal{H} = \mathcal{H}_{s_1} \otimes \mathcal{H}_{\text{sub}}
,\fe
where $\mathcal{H}_{s_1}$ is the two-dimensional Hilbert space for the first spin, and $\mathcal{H}_{\text{sub}}$ is the Hilbert space spanned by the remaining degrees of freedom. This Hilbert space $\mathcal{H}_{\text{sub}}$ is a tensor product Hilbert space of $M$ spinless fermion chain and $(M-1)$ spin-$1/2$ states:
\ie
\mathcal{H}_{\text{sub}} = \mathcal{H}_{\text{fermion}} \otimes \mathcal{H}_{s_2} \otimes \cdots \otimes \mathcal{H}_{s_M}
,\fe
and the states in $\mathcal{H}_{\text{sub}}$ can be expressed in terms of linear combinations of the following product states:
\ie
c_{f_Q}^\dagger \cdots c_{f_2}^\dagger c_{f_1}^\dagger | \Omega \rangle \otimes | s_2 \rangle \otimes \cdots \otimes | s_M \rangle
,\fe
where $f_1<f_2<\cdots<f_Q$. The state $|\Omega\rangle$ denotes the vacuum states of the fermion chain (no fermion), and $| s_m \rangle$ denotes the spin state on site $m$. The state $| s_m \rangle$ can be represented as
\ie
\ket{s_m} 
= \alpha  \ket{\uparrow} 
+ \beta \ket{\downarrow} 
\mapsto
\left(
\begin{matrix}
\alpha
\\
\beta
\end{matrix}
\right)_m
,\fe
or, in the computational basis 
\ie
\ket{s_m}
= \alpha  \ket{\uparrow} 
+ \beta \ket{\downarrow}
= \alpha  \ket{1} 
+ \beta \ket{0}
.\fe
In the rest of the paper, if not explicitly mentioned otherwise, the Hilbert space refers to $\mathcal{H}_{\text{sub}}$ with $|s_1\rangle=\ket{\uparrow}$, denoted as $\mathcal{H}_{\text{sub}}^{(\uparrow)}$, which has dimension $2^{M-1} C^{M}_{Q}$ in a given charge $Q$ sector. Additionally, we adopt the notation $s_m=1, 0$ instead of $s_m=\uparrow, \downarrow$ for consistency.

Typically, for a non-integrable Hamiltonian such as Eq.~\eqref{eqn: new hamiltonian}, once all global symmetries are resolved, one expects chaotic behavior in each symmetry sector. More specifically, for an ergodic system with some symmetry group $G$ and with $\ket{\Psi}$ a random many-body state of the symmetry, one expects that $\{H^n \ket{\phi} | n \in \mathbb{Z}_{\geq 0}\}$ ($\mathbb{Z}_{\geq 0}$ stands for positive integers) spans a basis for all states with the same symmetry quantum number as $\ket{\Psi}$. However, recent work has surprisingly revealed the phenomenon of Hilbert space fragmentation~\cite{sala2019ergodicity,khemani2020hilbert,rakovszky2019statistical,moudgalya2019thermalization,sanjay_review}, where the Hilbert space within each symmetry sector further fractures into dynamically disconnected Krylov subspaces, defined as follows: given a Hamiltonian $H$ acting on an $\mathcal{N}$-dimensional Hilbert space, a Krylov subspace $\mathcal{K}$ is the space spanned by all states $\{H^n \ket{\Psi} | n \in \mathbb{Z}_{\geq 0}\}$. If $n_K$ of them are linearly independent, we say the Krylov subspace has dimension $n_K$. If $n_K < \mathcal{N}$, we say the Krylov subspace $\mathcal{K}$ is non-trivial, while a Krylov subspace that spans the full Hilbert space (within the symmetry sector) is referred to as trivial.

Consider a given symmetry sector $\mathcal{H}_Q$ that fractures into $I$ distinct Krylov subspaces, where degenerate subspaces are treated as indistinct. We denote the degeneracy of the $i$-th subspace by $d_i$, and label each Krylov subspace by $\mathcal{K}_{i, \delta}^{(n_i)}$ ($i \in [1, I]$, $\delta \in [1, d_i]$). The superscript $n_i$ represents the dimension of the $i$-th (distinct) Krylov subspace: $n_i \equiv |\mathcal{K}_{i, \delta}^{(n_i)}|$. The number of $n$-dimensional Krylov subspaces is the sum of all $d_i$ with $n_i = n$:
\ie
D(\mathcal{K}^{(n)})
= \sum_{i | n_i=n} d_i
.\fe
The total number of all Krylov subspaces in the symmetry sector is denoted as:
\begin{equation}
D_{\text{tot}}(\mathcal{K}) \equiv \sum_n D(\mathcal{K}^{(n)})
= \sum_{i=1}^{I} d_i
.\end{equation}

Finally, we note that Hilbert space fragmentation can be categorized into so-called \emph{strong fragmentation} and \emph{weak fragmentation}. The former violates weak ETH, while the later violates strong ETH: the former holds that, for an ergodic system, most (aside from a measure zero set) eigenstates in the bulk of the energy spectrum behave as thermal states (as far as expectation values of local observables are concerned), while the latter holds that \textit{all} eigenstates behave thermally. Weak and strong fragmentation can thus be distinguished by the ratio of the dimension of the largest Krylov subspace $|\mathcal{K}^{\text{max}}|$ and the dimension of the total Hilbert space $|\mathcal{H}_Q|$ in the thermodynamic limit (system size $M\rightarrow \infty$):
\ie
\lim_{M\rightarrow \infty} \frac{|\mathcal{K}^{\text{max}}|}{|\mathcal{H}_Q|}
\begin{cases}
\rightarrow 0 & \text{: strong fragmentation}
\\
\rightarrow c & \text{: weak fragmentation}
\end{cases}
,\fe
where $0<c\le 1$. By definition, a weakly fragmented symmetry sector can only host a measure zero set of finite-dimensional closed Krylov subspaces, and therefore violates the strong form of the ETH. In the rest of this paper, we will discuss the conditions under which the extended breakdown model also exhibits Hilbert space fragmentation, along with the structure of its maximal Krylov subspaces.

\subsection{Notation}
\label{section: Notations}

Here, we introduce some notation that will prove convenient for analytically deriving the zero modes and Krylov subspaces.

\subsubsection{Hamiltonian Decomposition}

Let us begin with the Hamiltonian $H_{\gamma J}$ in \cref{eq:hams} (with a vanishing magnetic field and on-site potential), which can be decomposed as:
\ie
H_{\gamma J} 
= \sum_{m=1}^{M} \hat{h}_{m}
= \sum_{m=1}^{M} \left( \hat{h}_{m}^{(L)} + \hat{h}_{m}^{(R)} \right)
,\fe
where
\ie
\label{eqn: decomposition of the Hamiltonian}
&\hat{h}_{m}^{(L)} = - \gamma_{m-1} c_{m-1}^\dagger c_m + J_{m-1} c_{m-1}^\dagger c_m \sigma_{m}^{-}\ ,
\\
&\hat{h}_{m}^{(R)} = - \gamma_m c_{m+1}^\dagger c_m + J_m c_{m+1}^\dagger c_m \sigma_{m+1}^{+}
\ ,
\\
&\hat{h}_m =\hat{h}_{m}^{(L)}+\hat{h}_{m}^{(R)}\ .\fe
We find that this decomposition makes it easier to track the positions occupied by the fermions and facilitates a straightforward identification of the Krylov subspaces.

From the above definitions, we see that the operator $\hat{h}_{m}^{(L)}$ moves a fermion on site $m$ to site $(m-1)$ and includes a term that lowers the spin state $|s_m\rangle$ on site $m$. Similarly, $\hat{h}_{m}^{(R)}$ moves a fermion on site $m$ to site $(m+1)$ and includes a term that raises the spin state $|s_{m+1}\rangle$ on site $m+1$. With OBC, $\hat{h}_{1}^{(L)} = \hat{h}_{M}^{(R)} = 0$. Now, if we act with this Hamiltonian $H_{\gamma J}$ on a state with fermions occupying sites $\{f_1, f_2, \cdots, f_Q\}$, the components of the resulting state will generically have fermions on sites $\{f_1-1, f_1+1, f_2-1, f_2+1, \cdots, f_Q-1, f_Q+1\}$ (unless different components accidentally cancel each other).

\subsubsection{Notation for $H_{\gamma J}$: $\mathcal{K}^{(1)}$}
\label{subsubsection: Notations for K1}

An equivalent formulation of our model is that of two coupled systems: a 1D fermionic chain with hopping $\gamma_m$ and a spin chain subjected to magnetic field $\vec{h}_m$, where the two chains couple to each other through the asymmetric interaction with strength $J_m$. A useful graphical notation for Fock states in the Hilbert space $\mathcal{H}_{\text{sub}}^{(\uparrow)}$ is:
\begin{alignat}{1}
\label{eqn: graphical notation}
&| f_1, f_2, \cdots, f_Q ; s_1=1, s_2 \cdots s_M \rangle  \nonumber
\\
& = c_{f_Q}^\dagger \cdots c_{f_2}^\dagger c_{f_1}^\dagger | \Omega \rangle \otimes | s_1=1 \rangle \otimes | s_2 \rangle \otimes \cdots \otimes | s_M \rangle
\notag \\ \notag\\
&=\;\;\;
\begin{tikzpicture}[baseline={(0, -0.1)}]
\draw (0.0, 0) circle (.2);
\draw (0.8, 0) circle (.2);
\draw (2.4, 0) circle (.2);
\draw (4, 0) circle (.2);
\draw (5.6, 0) circle (.2);
\draw (6.4, 0) circle (.2);
\fill (2.4, 0) circle (2pt);
\fill (4.0, 0) circle (2pt);
\node at (0.0, -0.4) {\small $1$};
\node at (0.8, -0.4) {\small $2$};
\node at (2.4, -0.4) {\small $f_1$};
\node at (4.0, -0.4) {\small $f_2$};
\node at (6.4, -0.4) {\small $M$};
\node at (0.4, -0.2) {\small $s_2$};
\node at (1.2, -0.2) {\small $s_3$};
\node at (6.0, -0.2) {\small $s_M$};
\draw (0.2, 0) -- (0.6, 0);
\draw (1.0, 0) -- (1.3, 0);
\draw (1.9, 0) -- (2.2, 0);
\draw (2.6, 0) -- (2.9, 0);
\draw (3.5, 0) -- (3.8, 0);
\draw (4.2, 0) -- (4.5, 0);
\draw (5.1, 0) -- (5.4, 0);
\draw (5.8, 0) -- (6.2, 0);
\node at (1.6, 0) {$\cdots$};
\node at (3.2, 0) {$\cdots$};
\node at (4.8, 0) {$\cdots$};
\end{tikzpicture}
\notag\\
&=\;\;\;
c_{f_Q}^\dagger \cdots c_{f_2}^\dagger c_{f_1}^\dagger | 0 \rangle
\otimes
\begin{tikzpicture}[baseline={(0, -0.1)}]
\draw (0.0, 0) circle (.2);
\draw (0.8, 0) circle (.2);
\draw (2.4, 0) circle (.2);
\node at (0.0, -0.4) {\small $1$};
\node at (0.8, -0.4) {\small $2$};
\node at (2.4, -0.4) {\small $M$};
\node at (0.4, -0.2) {\small $s_2$};
\node at (1.2, -0.2) {\small $s_3$};
\node at (2.0, -0.2) {\small $s_M$};
\draw (0.2, 0) -- (0.6, 0);
\draw (1.0, 0) -- (1.3, 0);
\draw (1.9, 0) -- (2.2, 0);
\node at (1.6, 0) {$\cdots$};
\end{tikzpicture}
.\end{alignat}
Here, the indices $f_m = 1, \cdots, M$ denote the occupied sites on the fermion chain, and $s_m = \{1, 0\}$ denote the spin on the site $m \in \left[2, M \right]$. Recall that in our definition of $\mathcal{H}_{\text{sub}}^{(\uparrow)}$, the spin on the first site is fixed $s_1=1$, and we thus exclude $s_1$ in our graphical notation. Without loss of generality, we always require $f_1 < f_2 < \cdots < f_Q$. Graphically, the unoccupied sites on the fermion chain are denoted by an empty circle: $\begin{tikzpicture}[baseline={(0, -0.1)}]\draw (0.0, 0) circle (0.16);\end{tikzpicture}$, while the occupied sites are denoted by a filled circle: $\begin{tikzpicture}[baseline={(0, -0.1)}]\draw (0.0, 0) circle (0.16); \fill (0.0, 0) circle (1.6pt);\end{tikzpicture}$. In addition, the spin components are represented by the links connecting the circles, and the states are indicated by $s_m$ below the edges. For simplicity, we will sometimes omit the on-site indices and $s_m$ when there is no ambiguity.

Both $H_\gamma$ and $H_J$ enable the fermions to hop between neighboring sites; however, the former alters the spin states, while the later doesn't. To better keep track of the spin states, we define additionally 4 spin states with arrows ($s_m=0, 1$):
\begin{alignat}{2}
\label{eqn:spin-states-notation}
& 
\begin{tikzpicture}[baseline={(0, -0.1)}]
\draw (0.4, 0) circle (.2);
\draw (1.2, 0) circle (.2);
% \fill (0.4, 0) circle (2pt);
\draw (0.6, 0) -- (1.0, 0);
\node at (0, 0) {$\cdots$};
\node at (1.7, 0) {$\cdots$};
\node at (0.8, -0.2) {\small $s_m$};
\node at (1.2, -0.4) {\small $m$};
\draw[<-] (0.65, 0.15) -- (0.95, 0.15);
\end{tikzpicture}
,
%%%%%%%%%%%%%%%%%%%%%%%%%%%%%%%%%%%%%%%%%%%%%%
\begin{tikzpicture}[baseline={(0, -0.1)}]
\draw (0.4, 0) circle (.2);
\draw (1.2, 0) circle (.2);
% \fill (1.2, 0) circle (2pt);
\draw (0.6, 0) -- (1.0, 0);
\node at (0, 0) {$\cdots$};
\node at (1.7, 0) {$\cdots$};
\node at (1.2, -0.4) {\small $m$};
\node at (0.8, -0.2) {\small $s_m$};
\draw[->] (0.65, 0.15) -- (0.95, 0.15);
\end{tikzpicture}
,\end{alignat}
such that when acted upon by the following terms of the Hamiltonian $H_{\gamma J}$, they transform into the eigenstates of $\sigma_{m}^{z}$:
\begin{alignat}{2}
\label{eqn: H acting on arrow states}
\hat{h}_{m-1}^{(R)} \;
\begin{tikzpicture}[baseline={(0, -0.1)}]
\draw (0.4, 0) circle (.2);
\draw (1.2, 0) circle (.2);
\fill (0.4, 0) circle (2pt);
\draw (0.6, 0) -- (1.0, 0);
\node at (0, 0) {$\cdots$};
\node at (1.7, 0) {$\cdots$};
\node at (0.8, -0.2) {\small $s_{m}$};
\node at (1.2, -0.4) {\small $m$};
\draw[<-] (0.65, 0.15) -- (0.95, 0.15);
\end{tikzpicture}
& =
\begin{tikzpicture}[baseline={(0, -0.1)}]
\draw (0.4, 0) circle (.2);
\draw (1.2, 0) circle (.2);
\fill (1.2, 0) circle (2pt);
\draw (0.6, 0) -- (1.0, 0);
\node at (0, 0) {$\cdots$};
\node at (1.7, 0) {$\cdots$};
\node at (0.8, -0.2) {\small $s_m$};
\node at (1.2, -0.4) {\small $m$};
\end{tikzpicture}
\notag\\
%%%%%%%%%%%%%%%%%%%%%%%%%%%%%%%%%%%%%%%%%%%%%%
\hat{h}_{m}^{(L)} \;
\begin{tikzpicture}[baseline={(0, -0.1)}]
\draw (0.4, 0) circle (.2);
\draw (1.2, 0) circle (.2);
\fill (1.2, 0) circle (2pt);
\draw (0.6, 0) -- (1.0, 0);
\node at (0, 0) {$\cdots$};
\node at (1.7, 0) {$\cdots$};
\node at (1.2, -0.4) {\small $m$};
\node at (0.8, -0.2) {\small $s_m$};
\draw[->] (0.65, 0.15) -- (0.95, 0.15);
\end{tikzpicture}
& =
\begin{tikzpicture}[baseline={(0, -0.1)}]
\draw (0.4, 0) circle (.2);
\draw (1.2, 0) circle (.2);
\fill (0.4, 0) circle (2pt);
\draw (0.6, 0) -- (1.0, 0);
\node at (0, 0) {$\cdots$};
\node at (1.7, 0) {$\cdots$};
\node at (1.2, -0.4) {\small $m$};
\node at (0.8, -0.2) {\small $s_m$};
\end{tikzpicture}
.\end{alignat}
We say the spin states are $| \overleftarrow{s_m} \rangle$ or $| \overrightarrow{s_m} \rangle$, and the explicit form of the spin states with arrows in \cref{eqn:spin-states-notation} are given in Appendix~\ref{appendix: Explicit forms of the notations for K1}. Note that spin states with arrows are built on the eigenstates of $\sigma_{m}^{z}$, i.e., $| s_m=0 \rangle$ and $| s_m=1 \rangle$, so as to satisfy Eq.~\eqref{eqn: H acting on arrow states}. An intuitive way to understand these states is to imagine that when a fermion moves across the edge, it "pushes against" the arrow in the opposite direction and cancels the arrow. This notation will help us identify the invariant subspaces in Sec.~\ref{section: Without magnetic field: Q=1} and Sec.~\ref{section: Without magnetic field: Q>=2}.

\begin{widetext}

\subsubsection{Notation for $H_{\bar{\gamma}\bar{J}}$: $\mathcal{K}^{(2)}$}
\label{subsubsection: Notations for K2}

For models with uniform hopping coefficients $\gamma_m=\gamma$ and interactions $J_m=J$, we define the singly neutral-/left-/right-rectangularized spin states on a block of three consequtive sites $m-1,m,m+1$ (which determine the spin states $(s_m, s_{m+1})$):
\begin{alignat}{2}
\label{eqn: 3 kinds of rectangularized spin states}
\begin{cases}
    &\begin{tikzpicture}[baseline={(0, -0.1)}]
    \draw (0.0, 0.25) rectangle (2.4, -0.25);
    \draw (0.4, 0) circle (.2);
    \draw (1.2, 0) circle (.2);
    \draw (2.0, 0) circle (.2);
    % \fill (1.2, 0) circle (2pt);
    \draw (0.6, 0) -- (1.0, 0);
    \draw (1.4, 0) -- (1.8, 0);
    \node at (0, 0) {$\cdots$};
    \node at (2.5, 0) {$\cdots$};
    \node at (1.2, 0.4) {\small $r_j$};
    \node at (1.2, -0.4) {\small $m$};
    \end{tikzpicture}
    \text{: neutral-rectangularized spin states}
    ,
    \\
    &\begin{tikzpicture}[baseline={(0, -0.1)}]
    \draw (0.0, 0.25) rectangle (2.4, -0.25);
    \draw (0.4, 0) circle (.2);
    \draw (1.2, 0) circle (.2);
    \draw (2.0, 0) circle (.2);
    % \fill (1.2, 0) circle (2pt);
    \draw (0.6, 0) -- (1.0, 0);
    \draw (1.4, 0) -- (1.8, 0);
    \draw[<-] (0.65, 0.4) -- (0.95, 0.4);
    \node at (0, 0) {$\cdots$};
    \node at (2.5, 0) {$\cdots$};
    \node at (1.2, 0.4) {\small $r_j$};
    \node at (1.2, -0.4) {\small $m$};
    \end{tikzpicture}
    \text{: left-rectangularized spin states}
    ,
    \\
    &\begin{tikzpicture}[baseline={(0, -0.1)}]
    \draw (0.0, 0.25) rectangle (2.4, -0.25);
    \draw (0.4, 0) circle (.2);
    \draw (1.2, 0) circle (.2);
    \draw (2.0, 0) circle (.2);
    % \fill (1.2, 0) circle (2pt);
    \draw (0.6, 0) -- (1.0, 0);
    \draw (1.4, 0) -- (1.8, 0);
    \draw[->] (1.45, 0.4) -- (1.75, 0.4);
    \node at (0, 0) {$\cdots$};
    \node at (2.5, 0) {$\cdots$};
    \node at (1.2, 0.4) {\small $r_j$};
    \node at (1.2, -0.4) {\small $m$};
    \end{tikzpicture}
    \text{: right-rectangularized spin states}
    .
\end{cases}
\end{alignat}
the explicit forms of which are given in Appendix~\ref{appendix: Explicit forms of the notations for K2}, such that the following properties are satisfied:
\begin{enumerate}
\item When acted by the Hamiltonian once, the neutral-rectangularized spin states transform as:
\begin{alignat}{1}
\label{eqn: neutral-rectangular acted by the Hamiltonian once}
\hat{h}_m \;
\begin{tikzpicture}[baseline={(0, -0.1)}]
\draw (0.0, 0.25) rectangle (2.4, -0.25);
\draw (0.4, 0) circle (.2);
\draw (1.2, 0) circle (.2);
\draw (2.0, 0) circle (.2);
\fill (1.2, 0) circle (2pt);
\draw (0.6, 0) -- (1.0, 0);
\draw (1.4, 0) -- (1.8, 0);
\node at (0, 0) {$\cdots$};
\node at (2.5, 0) {$\cdots$};
\node at (1.2, 0.4) {\small $r_j$};
\node at (1.2, -0.4) {\small $m$};
\end{tikzpicture}
&=
\begin{tikzpicture}[baseline={(0, -0.1)}]
\draw (0.0, 0.25) rectangle (2.4, -0.25);
\draw (0.4, 0) circle (.2);
\draw (1.2, 0) circle (.2);
\draw (2.0, 0) circle (.2);
\fill (0.4, 0) circle (2pt);
\draw (0.6, 0) -- (1.0, 0);
\draw (1.4, 0) -- (1.8, 0);
\draw[<-] (0.65, 0.4) -- (0.95, 0.4);
\node at (0, 0) {$\cdots$};
\node at (2.5, 0) {$\cdots$};
\node at (1.2, 0.4) {\small $r_j$};
\end{tikzpicture}
+
\begin{tikzpicture}[baseline={(0, -0.1)}]
\draw (0.0, 0.25) rectangle (2.4, -0.25);
\draw (0.4, 0) circle (.2);
\draw (1.2, 0) circle (.2);
\draw (2.0, 0) circle (.2);
\fill (2.0, 0) circle (2pt);
\draw (0.6, 0) -- (1.0, 0);
\draw (1.4, 0) -- (1.8, 0);
\draw[->] (1.45, 0.4) -- (1.75, 0.4);
\node at (0, 0) {$\cdots$};
\node at (2.5, 0) {$\cdots$};
\node at (1.2, 0.4) {\small $r_j$};
\end{tikzpicture}
;\end{alignat}
while the left-/right-rectangularized spin states transform as
\begin{alignat}{2}
\label{eqn: LR-rectangular acted by the Hamiltonian once}
\hat{h}_{m-1}^{(R)} \;
\begin{tikzpicture}[baseline={(0, -0.1)}]
\draw (0.0, 0.25) rectangle (2.4, -0.25);
\draw (0.4, 0) circle (.2);
\draw (1.2, 0) circle (.2);
\draw (2.0, 0) circle (.2);
\fill (0.4, 0) circle (2pt);
\draw (0.6, 0) -- (1.0, 0);
\draw (1.4, 0) -- (1.8, 0);
\draw[<-] (0.65, 0.4) -- (0.95, 0.4);
\node at (0, 0) {$\cdots$};
\node at (2.5, 0) {$\cdots$};
\node at (1.2, 0.4) {\small $r_j$};
\node at (1.2, -0.4) {\small $m$};
\end{tikzpicture}
& \in
\text{Span}\left(
\begin{tikzpicture}[baseline={(0, -0.1)}]
\draw (0.0, 0.25) rectangle (2.4, -0.25);
\draw (0.4, 0) circle (.2);
\draw (1.2, 0) circle (.2);
\draw (2.0, 0) circle (.2);
\fill (1.2, 0) circle (2pt);
\draw (0.6, 0) -- (1.0, 0);
\draw (1.4, 0) -- (1.8, 0);
\node at (0, 0) {$\cdots$};
\node at (2.5, 0) {$\cdots$};
\node at (1.2, 0.4) {\small $0$};
\end{tikzpicture}
,
\begin{tikzpicture}[baseline={(0, -0.1)}]
\draw (0.0, 0.25) rectangle (2.4, -0.25);
\draw (0.4, 0) circle (.2);
\draw (1.2, 0) circle (.2);
\draw (2.0, 0) circle (.2);
\fill (1.2, 0) circle (2pt);
\draw (0.6, 0) -- (1.0, 0);
\draw (1.4, 0) -- (1.8, 0);
\node at (0, 0) {$\cdots$};
\node at (2.5, 0) {$\cdots$};
\node at (1.2, 0.4) {\small $1$};
\end{tikzpicture}
\right)
,
\\ \notag
\hat{h}_{m+1}^{(L)} \;
\begin{tikzpicture}[baseline={(0, -0.1)}]
\draw (0.0, 0.25) rectangle (2.4, -0.25);
\draw (0.4, 0) circle (.2);
\draw (1.2, 0) circle (.2);
\draw (2.0, 0) circle (.2);
\fill (2.0, 0) circle (2pt);
\draw (0.6, 0) -- (1.0, 0);
\draw (1.4, 0) -- (1.8, 0);
\draw[->] (1.45, 0.4) -- (1.75, 0.4);
\node at (0, 0) {$\cdots$};
\node at (2.5, 0) {$\cdots$};
\node at (1.2, 0.4) {\small $r_j$};
\node at (1.2, -0.4) {\small $m$};
\end{tikzpicture}
& \in
\text{Span}\left(
\begin{tikzpicture}[baseline={(0, -0.1)}]
\draw (0.0, 0.25) rectangle (2.4, -0.25);
\draw (0.4, 0) circle (.2);
\draw (1.2, 0) circle (.2);
\draw (2.0, 0) circle (.2);
\fill (1.2, 0) circle (2pt);
\draw (0.6, 0) -- (1.0, 0);
\draw (1.4, 0) -- (1.8, 0);
\node at (0, 0) {$\cdots$};
\node at (2.5, 0) {$\cdots$};
\node at (1.2, 0.4) {\small $0$};
\end{tikzpicture}
,
\begin{tikzpicture}[baseline={(0, -0.1)}]
\draw (0.0, 0.25) rectangle (2.4, -0.25);
\draw (0.4, 0) circle (.2);
\draw (1.2, 0) circle (.2);
\draw (2.0, 0) circle (.2);
\fill (1.2, 0) circle (2pt);
\draw (0.6, 0) -- (1.0, 0);
\draw (1.4, 0) -- (1.8, 0);
\node at (0, 0) {$\cdots$};
\node at (2.5, 0) {$\cdots$};
\node at (1.2, 0.4) {\small $1$};
\end{tikzpicture}
\right)
.\end{alignat}
$\text{Span}(v_1, v_2, \cdots, v_n)$ denotes the vector space spanned by the vectors $v_1, \cdots v_n$, and $v \in \text{Span}(v_1, \cdots, v_n)$ implies that the vector $v$ is a linear combination of $v_1, \cdots v_n$. The exact coefficients of the linear combinations in Eq.~\eqref{eqn: LR-rectangular acted by the Hamiltonian once} are provided in \cref{eqn: The coefficients for spanned space}. The labels $r_j$ above the rectangles can take 2 values: $0$ and $1$. The subscript $j$ means it labels the $j$-th rectangle on the spin chain. We refer to those labels as \emph{rectangular labels}.

\item When acted by the Hamiltonian twice:
\begin{alignat}{1}
\label{eqn: H^2 acting on rectangularized states}
    \left( \hat{h}_{m-1}^{(R)} \hat{h}_m^{(L)} + \hat{h}_{m+1}^{(L)} \hat{h}_m^{(R)} \right) \;
    \begin{tikzpicture}[baseline={(0, -0.1)}]
    \draw (0.0, 0.25) rectangle (2.4, -0.25);
    \draw (0.4, 0) circle (.2);
    \draw (1.2, 0) circle (.2);
    \draw (2.0, 0) circle (.2);
    \fill (1.2, 0) circle (2pt);
    \draw (0.6, 0) -- (1.0, 0);
    \draw (1.4, 0) -- (1.8, 0);
    \node at (0, 0) {$\cdots$};
    \node at (2.5, 0) {$\cdots$};
    \node at (1.2, 0.4) {\small $r_j$};
    \node at (1.2, -0.4) {\small $m$};
    \end{tikzpicture}
    =
    (2 \gamma^2 + J^2)
    \begin{tikzpicture}[baseline={(0, -0.1)}]
    \draw (0.0, 0.25) rectangle (2.4, -0.25);
    \draw (0.4, 0) circle (.2);
    \draw (1.2, 0) circle (.2);
    \draw (2.0, 0) circle (.2);
    \fill (1.2, 0) circle (2pt);
    \draw (0.6, 0) -- (1.0, 0);
    \draw (1.4, 0) -- (1.8, 0);
    \node at (0, 0) {$\cdots$};
    \node at (2.5, 0) {$\cdots$};
    \node at (1.2, 0.4) {\small $r_j$};
    \node at (1.2, -0.4) {\small $m$};
    \end{tikzpicture}
.\end{alignat}
The detailed calculation can be found in \cref{eqn: The coefficients for spanned space}.
\end{enumerate}

The above properties of the singly rectangularized spin states in \cref{eqn: 3 kinds of rectangularized spin states} make them important three-site building blocks of the 2-dimensional Krylov subspaces $\mathcal{K}^{(2)}$, discussed in Sec.~\ref{section: Without magnetic field: Q=1}. In this paper, unless otherwise specified, we use the terms "singly rectangulared states" and "rectangulared states" interchangeably. The singly rectangulared states are different from the doubly rectangulared states that are defined in Appendix~\ref{appendix: Q=1, odd M, K=4}.

\end{widetext}

%%%%%%%%%%%%%%%%%%%%%%%%%%%%%%%%%%%%%%%%%%%%%%%%%%%%%%%%%%%%%%%%%%%%%%%%%%%%%%%%%%%%%%%%%%%%%%%%%%%%%%%%%%%%%%%%%%%%%%%%%%%%%%%%%%%%%%%%%%%%%%%%%%%%%%%%%%%%%%%%%%%%%%%%%%%%%%%%%%%%%%%%%%%%%%%%%%%%%%%%%%%%%%%%%%%%%%%%%%%%%%%%%%%%%%%%%%%%%%%%%%%%%%%%%%%%%%%%%%%%%%%%%%%%%%%%%%%%%%%%%%%%%%%%%%%%%%%%%%%%%%%%%%%%%%%%%%%%%%%%%%%%

\section{$H_{J \mu h}$: Disconnected subspaces by blocking}
\label{section: disconnected subspaces by blocking}

We begin by considering the model in the absence of any hopping for the fermions (namely, $\gamma_m=0$) and in-plane magnetic fields $h_m^x=h_m^y=0$, and show that a given symmetry sector can be partitioned into an extensive number of smaller, dynamically disconnected sectors. The Hamiltonian  in this case consists of the spin-fermion interaction $H_J$, the on-site potential $H_\mu$, and a magnetic field in the $z$-direction:
\ie
\label{eq:HJmuh}
H = H_J + H_\mu + \sum_{m} h_{m}^{z} \sigma_{m}^{z}
.\fe

This Hamiltonian has an extensive number of disconnected subspaces. To see this, consider a root state $|\Psi\rangle$ with fermions on sites $f_1<f_2 < \cdots< f_Q$ and the spin state $s_{m_b} = \ket{1}$ (in the $z$-basis) on a site $m_b > f_Q$. The Hamiltonian cannot change the spin states on sites $m \ge m_b$, since no fermion can hop to a site $m \ge m_b$ via the spin-fermion interaction. Therefore, the spin state $s_{m_b}$ acts as a dynamical blockade, and the spin states $s_m$ with $m \ge m_b$ are dynamically frozen. On the other hand, if instead $s_{m_b} = \ket{0}$ with $m_b \le f_1$, we find that similarly $s_{m_b}$ acts as a dynamical blockade, and the spin states on sites $m \le m_b$ are frozen. The disconnected subspace(s) generated by these root states thus cannot span the Hilbert space (of dimension $\mathcal{N}$) of a fixed $Q$ sector, and there must exist an integer $n < \mathcal{N}$ such that $H^{n} | \Psi \rangle$ is linearly dependent on $\left\{ | \Psi \rangle, H | \Psi \rangle, \cdots, H^{n-1} | \Psi \rangle \right\}$.

The disconnected subspaces can be partially understood by observing that, in addition to Eq.~\eqref{eqn: conserved charge}, the restricted model in \cref{eq:HJmuh} has infinitely many conserved charges:
\ie
Q^{\text{(exp)}}(\alpha) = \sum_{m=1}^{M}  \left[ \alpha^m c_{m}^{\dagger} c_{m} + \alpha^{m-1}(1-\alpha)\left( \frac{\sigma_{m}^{z} + 1}{2} \right) \right]\ ,\fe
which holds for arbitrary number $\alpha$, since each fermion on site $m$ carrying charge $\alpha^m$ can split into a fermion with charge $\alpha^{m+1}$ and a spin up with $\alpha^m(1-\alpha)$ on site $m+1$. This then implies infinite local symmetries:
\ie
Q^{\text{(exp)}}(\alpha) = \sum_{m=0}^{M} \alpha^m P_m
,\fe
where each local operator $P_m$ is a conserved charge. Each of these charge sectors can further fragment into dynamically disconnected Krylov subspaces. This conserved charge can be understood as stemming from Eq.~\eqref{eqn: Q in the original model} in the original breakdown model. The presence of the conserved charge $Q^{\text{(exp)}}$ further subdivides a given symmetry sector corresponding to a particular value of $Q$ into smaller sectors, each corresponding to different values of $Q^{\text{(exp)}}$.

Note that if the magnetic field $\vec{h}_m$ has non-zero $x$ and $y$ components, the spin states will precess along the tilted magnetic field and, as a result of this Larmor precession, the initial alignment of spin states $s_{m_b}$ with the $z$-axis will change. Consequently, the blockades that initially prevent the fermions from passing through the site $m_b$ will no longer be effective and there will be mixing between the Krylov subspaces.

%%%%%%%%%%%%%%%%%%%%%%%%%%%%%%%%%%%%%%%%%%%%%%%%%%%%%%%%%%%%%%%%%%%%%%%%%%%%%%%%%%%%%%%%%%%%%%%%%%%%%%%%%%%%%%%%%%%%%%%%%%%%%%%%%%%%%%%%%%%%%%%%%%%%%%%%%%%%%%%%%%%%%%%%%%%%%%%%%%%%%%%%%%%%%%%%%%%%%%%%%%%%%%%%%%%%%%%%%%%%%%%%%%%%%%%%%%%%%%%%%%%%%%%%%%%%%%%%%%%%%%%%%%%%%%%%%%%%%%%%%%%%%%%%%%%%%%%%%%%%%%%%%%%%%%%%%%%%%%%%%%%%

\section{Zero magnetic field: $Q = 1$ sector}
\label{section: Without magnetic field: Q=1}

We now investigate the HSF and quantum dynamics in the presence of both the conventional fermionic hopping term $H_\gamma$ and the spatially asymmetric interaction term $H_J$. In this Section, we focus on the charge sector with a single fermion, i.e., $Q = 1$ (as defined in Eq.~\eqref{eqn: conserved charge}) and consider the case with zero magnetic field $\vec{h}_m=0$. We will consider both uniform and random couplings here and also study the effect of a non-vanishing on-site potential $\mu_m$ on the dynamics, which we study using analytical arguments in the following.

\subsection{$H_{\gamma J}$: odd $M$, $\mathcal{K}^{(1)}$ subspaces}
\label{subsection: Q=1, odd M, K=1}

Consider the model with Hamiltonian $H_{\gamma J}$ and odd chain lengths $M = 2 Z + 1$  (where $Z \in \mathbb{Z}_{>0}$), and let the coefficients $\gamma_m$ and $J_m$ be either random or uniform. Here, within the $Q = 1$ charge sector, we can analytically identify all the zero-modes of $H_{\gamma J}$ with zero energy as follows, each of which is by definition a one-dimensional Krylov subspaces: 
\begin{enumerate}
\item We express a zero mode satisfying $H_{\gamma J}| \Psi \rangle=0$ as
\ie
\label{eq:root}
| \Psi \rangle = \sum_{i=0}^{Z} \eta_i | \Phi_i \rangle
,\fe
where $\eta_i= \pm 1$ are coefficients that will be fixed later. Each of the $Z+1$ states $\ket{\Phi_i}$ is a product state over the fermion number occupation and spin $\sigma_m^z$ basis. 

\item For each state $| \Phi_i \rangle$ ($i\in[0, Z]$), the only fermion occupies the $(2i+1)$-th site.

\item Arbitrarily choose the values of $s_2, \cdots, s_M$ to be either $0$ or $1$. For the $i$-th component $| \Phi_i \rangle$, where the fermion occupies the $f_1=(2i+1)$-th site, fix the spin states to be (the notations are defined in \cref{subsubsection: Notations for K1} and App.~\ref{appendix: Explicit forms of the notations for K1})
\ie
\begin{cases}
    |\overleftarrow{s_m} \rangle \text{, for $m\in 2 \mathbb{Z}_{>0}$ if $f_1 < m \le M-1$},
    \\
    |\overrightarrow{s_m} \rangle \text{, for $m\in (2 \mathbb{Z}_{>0} +1)$ if $3 \le m \le f_1$},
    \\
    |s_m\rangle \text{, otherwise}\ .
\end{cases}
\fe

\item For the $i$-th component $| \Phi_i \rangle$, fix the relative coefficients $\eta_i$ to be
\ie
\label{eqn: eta coefficients}
\begin{cases}
\eta_i = 1 &\text{, if } i \in 2 \mathbb{Z}_{\geq 0},
\\
\eta_i = -1 &\text{, if } i \in 2 \mathbb{Z}_{\geq 0} + 1.
\end{cases}
\fe
\end{enumerate}

Note that the spin states $s_2, \cdots, s_M$ can be arbitrarily assigned here, and since each $s_m$ can take two values: $1$ or $0$, there are in total 
\begin{equation}
D({\mathcal{K}^{(1)}})=2^{M-1} = 2^{2Z}
\end{equation}
such zero modes. Since the number of zero modes increases with the system size exponentially, the systems with odd $M$ exhibit HSF.

To understand why the state $|\Psi\rangle$ constructed by the above procedure is a zero mode, note that: first, the only terms of $H_{\gamma J}$ that do not act trivially on (namely, do not annihilate) $| \Phi_i \rangle$ are $\hat{h}_{2i+1}^{(L)}$ and $\hat{h}_{2i+1}^{(R)}$; second, the components of $| \Psi \rangle$ obey:
\ie
\hat{h}_{2i+1}^{(R)} | \Phi_i \rangle - \hat{h}_{2i+3}^{(L)} | \Phi_{i+1} \rangle = 0
,\fe
which ensures that $H_{\gamma J} | \Psi \rangle = 0$. Let us take $M=5$ as an example. The zero modes can be graphically represented by (the notations are defined in Appendix.~\ref{appendix: Explicit forms of the notations for K1})
\begin{alignat}{1}
| \Psi \rangle = \; & | \Phi_0 \rangle - | \Phi_1 \rangle + | \Phi_2 \rangle
\notag \\
= \;
&\begin{tikzpicture}[baseline={(0, -0.1)}]
\draw (0.0, 0) circle (.2);
\draw (0.8, 0) circle (.2);
\draw (1.6, 0) circle (.2);
\draw (2.4, 0) circle (.2);
\draw (3.2, 0) circle (.2);
\fill (0.0, 0) circle (2pt);
\draw (0.2, 0) -- (0.6, 0);
\draw (1.0, 0) -- (1.4, 0);
\draw (1.8, 0) -- (2.2, 0);
\draw (2.6, 0) -- (3.0, 0);
\draw[<-] (0.25, 0.15) -- (0.55, 0.15);
\draw[<-] (1.85, 0.15) -- (2.15, 0.15);
\end{tikzpicture}
\notag\\
-\;
&\begin{tikzpicture}[baseline={(0, -0.1)}]
\draw (0.0, 0) circle (.2);
\draw (0.8, 0) circle (.2);
\draw (1.6, 0) circle (.2);
\draw (2.4, 0) circle (.2);
\draw (3.2, 0) circle (.2);
\fill (1.6, 0) circle (2pt);
\draw (0.2, 0) -- (0.6, 0);
\draw (1.0, 0) -- (1.4, 0);
\draw (1.8, 0) -- (2.2, 0);
\draw (2.6, 0) -- (3.0, 0);
\draw[->] (1.05, 0.15) -- (1.35, 0.15);
\draw[<-] (1.85, 0.15) -- (2.15, 0.15);
\end{tikzpicture}
\notag\\
+\;
&\begin{tikzpicture}[baseline={(0, -0.1)}]
\draw (0.0, 0) circle (.2);
\draw (0.8, 0) circle (.2);
\draw (1.6, 0) circle (.2);
\draw (2.4, 0) circle (.2);
\draw (3.2, 0) circle (.2);
\fill (3.2, 0) circle (2pt);
\draw (0.2, 0) -- (0.6, 0);
\draw (1.0, 0) -- (1.4, 0);
\draw (1.8, 0) -- (2.2, 0);
\draw (2.6, 0) -- (3.0, 0);
\draw[->] (1.05, 0.15) -- (1.35, 0.15);
\draw[->] (2.65, 0.15) -- (2.95, 0.15);
\end{tikzpicture}
.\end{alignat}
If we apply $H_{\gamma J}$ on $|\Psi\rangle$, all fermions will move to their neighboring sites. In addition, the spin states of $|\Psi\rangle$ on the edges are fine-tuned such that, after applying the Hamiltonian, 
\ie
\hat{h}_{1}^{(R)} | \Phi_0 \rangle - \hat{h}_{3}^{(L)} | \Phi_1 \rangle &= 0
\\
-\hat{h}_{3}^{(R)} | \Phi_1 \rangle + \hat{h}_{5}^{(L)} | \Phi_2 \rangle &= 0
.\fe
Therefore, $H_{\gamma J} |\Psi\rangle = 0$. Since each $s_m$ ($m\in\left[2, 5\right]$) can take one of two values, we have $16$ zero modes in total.

For even chain lengths $M=2Z$ ($Z\in\mathbb{Z}_{>0}$), however, there is no way to find a state $| \Psi \rangle$ such that $H_{\gamma J} | \Psi \rangle = 0$, and there are no exact zero modes for models with even system size $M$.

\subsection{$H_{\bar{\gamma} \bar{J}}$: odd $M$, $\mathcal{K}^{(2)}$ subspaces and higher}
\label{subsection: Q=1, odd M, K>=2}

For the model Hamiltonian $H_{\bar{\gamma} \bar{J}}$ with uniform hopping coefficients $\gamma_m=\gamma$ and interaction strengths $J_m=J$, and no other terms, there exist even richer Krylov subspace structures beyond the one-dimensional Krylov subspaces $\mathcal{K}^{(1)}$ discussed above. We now show that the model in this limit exhibits higher-dimensional Krylov subspaces, including two-dimensional Krylov subspaces $\mathcal{K}^{(2)}$. In this section, we only consider models with chain lengths $M=4Z+3$ ($Z \in \mathbb{Z}_{>0}$).

We begin by analytically constructing the root states for the two-dimensional Krylov subspaces $\mathcal{K}^{(2)}$ for system size $M=4Z+3$ in the $Q = 1$ sector:
\begin{enumerate}
\item The root state $\ket{\Psi}$ is composed of $(Z+1)$ components $\ket{\Phi_i}$ as in Eq.~\eqref{eq:root}. For the $i$-th component $| \Phi_i \rangle$ ($i\in [0, Z]$), a fermion occupies the $f_1=(4i+2)$-th site.
\item For the $i$-th component $| \Phi_i \rangle$, where the fermion occupies the $f_1=(4i+2)$-th site, we construct $| \Phi_i \rangle$ from the rectangular building blocks introduced in Sec.~\ref{subsubsection: Notations for K2} and Eq.~\eqref{eqn: 3 kinds of rectangularized spin states} (also see Appendix~\ref{appendix: Explicit forms of the notations for K2} for explicit forms) as follows:
\ie
\label{eqn: singly rectangularize sites}
\begin{cases}
    \text{Right-rectangularize the group of sites:}
    \\
    \;\;\;\{ (m-1, m, m+1) | m \in 4\mathbb{Z}_{\geq 0}+2 \text{ and } m<f_1 \},
    \\
    \text{Neutral-rectangularize the group of sites:}
    \\
    \;\;\;(f_1-1, f_1, f_1+1),
    \\
    \text{Left-rectangularize the group of sites:}
    \\
    \;\;\;\{ (m-1, m, m+1) | m \in 4\mathbb{Z}_{\geq 0}+2 \text{ and } m>f_1 \}.
\end{cases}
\fe
There are then a total of $(Z+1)$ rectangles with unfixed rectangular labels $r_1, \cdots r_{Z+1}$, each of which can take one of two values.
\item Arbitrarily choose the spin states 
\ie
\label{eqn: zero modes of K2, spin states}
\{(s_m, s_{m+1}) | m \in 4\mathbb{Z}_{>0} \}
\fe
to be either $0$ or $1$ ($2Z$ in total). For the $i$-th component $| \Phi_i \rangle$, where the fermion occupies the $f_1=(4i+2)$-th site, fix the spin states to be
\ie
\label{eqn: spin states for rectangularized states}
\begin{cases}
    |\overleftarrow{s_m} \rangle \text{, for $m\in 4 \mathbb{Z}_{>0}$ if $m > f_1$}
    \\
    |\overrightarrow{s_m} \rangle \text{, for $m\in (4 \mathbb{Z}_{>0} + 1)$ if $m < f_1$}
    \\
    |s_m \rangle \text{, otherwise}
\end{cases}
.\fe

\item For the $i$-th component $| \Phi_i \rangle$, fix the relative coefficients $\eta_i$ as Eq.~\eqref{eqn: eta coefficients}.
\end{enumerate}

Since we can arbitrarily choose $2Z$ spin states in Eq.~\eqref{eqn: zero modes of K2, spin states} and the labels $r_j$ of $(Z+1)$ rectangles, the number of the root states constructed from the above procedure is $2^{2Z} 2^{Z+1} = 2^{3Z+1}$, showing that there are \textit{at least} 
\begin{equation}
D({\mathcal{K}^{(2)}})\ge 2^{3Z+1}
\end{equation}
two-dimensional Krylov subspaces. Thus, the number of two-dimensional Krylov subspaces $D({\mathcal{K}^{(2)}})$ increases at least exponentially with the system size, further confirming the presence of fragmentation in our model. 

To see the states constructed by the above procedure are root states for two-dimensional Krylov subspaces $\mathcal{K}^{(2)}$, note that the components of these states satisfy:
\begin{enumerate}[label=(\alph*)]
\item From Eq.~\eqref{eqn: H^2 acting on rectangularized states}, each component of the root state is proportional to itself after two successive actions of the Hamiltonian if the fermion returns to the original site:
\ie
\notag
\left( \hat{h}_{4i+3}^{(L)} \hat{h}_{4i+2}^{(R)} + \hat{h}_{4i+1}^{(R)} \hat{h}_{4i+2}^{(L)} \right) | \Phi_i\rangle = (2\gamma^2 + J^2) | \Phi_i\rangle
.\fe
\item From Eq.~\eqref{eqn: H acting on arrow states}, two consecutive components of the root state will cancel each other after undergoing two successive operations by the Hamiltonian if the fermion doesn't return:
\ie
\notag
\hat{h}_{4i+3}^{(R)} \hat{h}_{4i+2}^{(R)} | \Phi_i \rangle - \hat{h}_{4(i+1)+1}^{(L)} \hat{h}_{4(i+1)+2}^{(L)} | \Phi_{i+1} \rangle = 0
.\fe
\end{enumerate}
These two properties ensure that after applying the Hamiltonian on the root states twice, we have
\ie
H_{\bar{\gamma} \bar{J}}^2 | \Psi \rangle = (2\gamma^2 + J^2) | \Psi \rangle
.\fe
As an example, consider $M=7$. The root states are:
\begin{alignat}{2}
| \Psi \rangle = \;
&\begin{tikzpicture}[baseline={(0, -0.1)}]
\draw (-0.4, 0.25) rectangle (2.0, -0.25);
\draw (2.8, 0.25) rectangle (5.2, -0.25);
\draw (0.0, 0) circle (.2);
\draw (0.8, 0) circle (.2);
\draw (1.6, 0) circle (.2);
\draw (2.4, 0) circle (.2);
\draw (3.2, 0) circle (.2);
\draw (4.0, 0) circle (.2);
\draw (4.8, 0) circle (.2);
\fill (0.8, 0) circle (2pt);
\draw (0.2, 0) -- (0.6, 0);
\draw (1.0, 0) -- (1.4, 0);
\draw (1.8, 0) -- (2.2, 0);
\draw (2.6, 0) -- (3.0, 0);
\draw (3.4, 0) -- (3.8, 0);
\draw (4.2, 0) -- (4.6, 0);
\draw[<-] (1.85, 0.15) -- (2.15, 0.15);
% \draw[->] (2.65, 0.15) -- (2.95, 0.15);
\draw[<-] (3.45, 0.4) -- (3.75, 0.4);
\node at (0.8, 0.4) {\small $r_1$};
\node at (4.0, 0.4) {\small $r_2$};
\node at (2.0, -0.38) {\small $s_4$};
\node at (2.8, -0.38) {\small $s_5$};
\end{tikzpicture}
\notag \\ 
%%%%%%%%%%%%%%%%%%%
- \;
&\begin{tikzpicture}[baseline={(0, -0.1)}]
\draw (-0.4, 0.25) rectangle (2.0, -0.25);
\draw (2.8, 0.25) rectangle (5.2, -0.25);
\draw (0.0, 0) circle (.2);
\draw (0.8, 0) circle (.2);
\draw (1.6, 0) circle (.2);
\draw (2.4, 0) circle (.2);
\draw (3.2, 0) circle (.2);
\draw (4.0, 0) circle (.2);
\draw (4.8, 0) circle (.2);
\fill (4.0, 0) circle (2pt);
\draw (0.2, 0) -- (0.6, 0);
\draw (1.0, 0) -- (1.4, 0);
\draw (1.8, 0) -- (2.2, 0);
\draw (2.6, 0) -- (3.0, 0);
\draw (3.4, 0) -- (3.8, 0);
\draw (4.2, 0) -- (4.6, 0);
% \draw[<-] (1.85, 0.15) -- (2.15, 0.15);
\draw[->] (2.65, 0.15) -- (2.95, 0.15);
\draw[->] (1.05, 0.4) -- (1.35, 0.4);
\node at (0.8, 0.4) {\small $r_1$};
\node at (4.0, 0.4) {\small $r_2$};
\node at (2.0, -0.38) {\small $s_4$};
\node at (2.8, -0.38) {\small $s_5$};
\end{tikzpicture}
.\end{alignat}
One can easily show that $| \Psi \rangle$ is a root state of $\mathcal{K}^{(2)}$ by verifying $H_{\bar{\gamma} \bar{J}}^2 | \Psi \rangle = (2\gamma^2 + J^2) | \Psi \rangle$. Moreover, this indicates all these two-dimensional Krylov subspaces $\mathcal{K}^{(2)}$ are degenerate, having a degenerate energy spectrum 
\begin{equation}
E_\pm=\sqrt{2\gamma^2 + J^2}\ .
\end{equation}

Besides these two-dimensional Krylov subspaces $\mathcal{K}^{(2)}$, there exist even higher dimensional Krylov subspaces for the $Q=1$ sector with uniform $\gamma_m=\gamma$ and $J_m=J$, leading to further fragmentation of the Hilbert space, such as $\mathcal{K}^{(4)}$ discussed in Appendix.~\ref{appendix: Q=1, odd M, K=4}. 

One way to numerically identify HSF is to check whether the characteristic polynomial of an integer-valued Hamiltonian can be factorized into integer polynomials~\cite{Regnault:2022ocy}. Each factor of the factorized integer polynomials implies a Krylov subspace. In Appendix~\ref{appendix: characteristic polynomial factorization}, we explicitly show the factors in the $Q=1$ sector for system sizes up to $M=11$. The factor degrees can be larger than $4$, showing the existence of Krylov subspaces with dimensions larger than $4$. However, we have not currently found an analytical method to count the number of Krylov subspaces with dimensions larger than $4$ or an analytic approach for calculating the dimension of the largest Krylov subspace, which we leave for future work. 

In Appendix~\ref{appendix: Q=1 numerical factorization}, we provide numerical evidence that the dimension of the largest Krylov subspace is $|\mathcal{K}^{\text{max}}| = 4\times \lfloor M/2 \rfloor$ by considering models with system sizes up to $M\le 11$. Although the factorization for models with $M>12$ is numerically inaccessible, we expect that this pattern will continue to persist, implying that
\ie
\label{eqn: limit of the ratio of Kmax and HQ}
\lim_{M\rightarrow \infty} \frac{|\mathcal{K}^{\text{max}}|}{|\mathcal{H}_{Q=1}|} 
=\frac{4\times \lfloor M/2 \rfloor}{M\times 2^{M-1}} \rightarrow 0
.\fe
Therefore, we expect the Hilbert space of the $Q=1$ sector exhibits strong fragmentation. We also emphasize that the root states for these Krylov subspaces are not simply direct products of the fermionic chain and the spin states, which can be clearly seen from our explicit construction for the two-dimensional Krylov subspaces. In the lexicon of Ref.~\cite{moudgalya2021hilbert}, our model thus exhibits \textit{quantum fragmentation} rather than classical fragmentation.

\subsection{$H_{\bar{\gamma} \bar{J}}$: even $M$}
\label{subsection: Q=1, even M}

\begin{figure}[t]
\centering
\includegraphics[width=80mm]{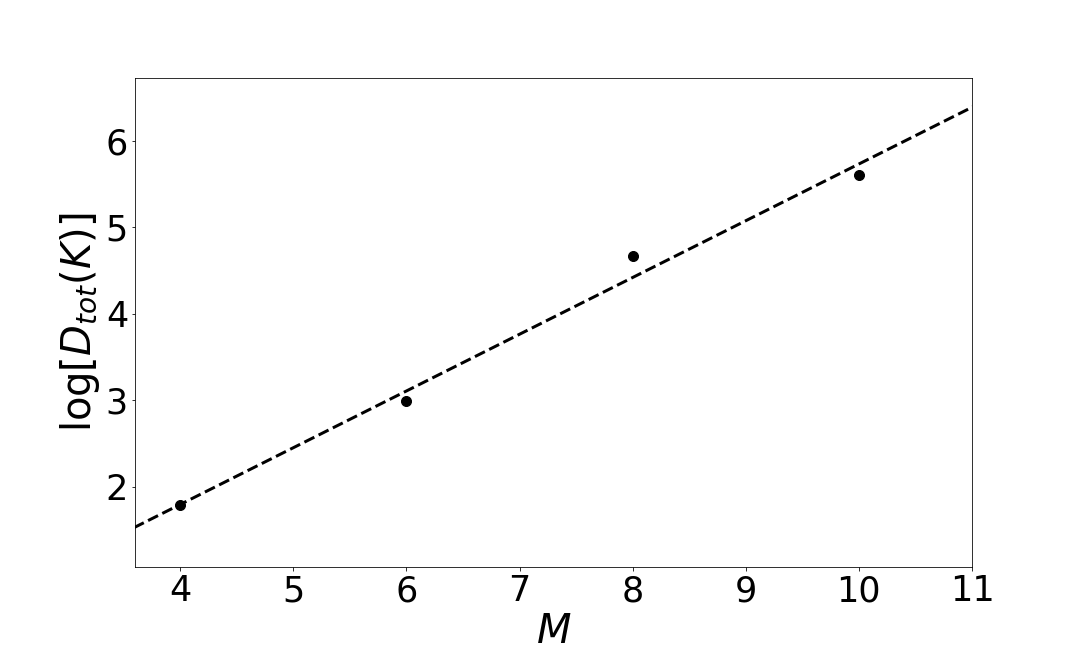}
\caption{Plot of (the logarithm of) the total number of Krylov subspaces $\log D_{\text{tot}}(\mathcal{K})$ with respect to the system size $M$ (focusing only on the even values).}
\label{fig: D(K) Q=1}
\end{figure}

For even system sizes $M$ in the $Q=1$ sector, we are not able to analytically identify the Krylov subspaces, and we resort to direct factorization of the integer characteristic polynomial. We focus on the models where the coefficients $\gamma_m=\gamma$ and $J_m=J$ are uniform. In \figref{fig: D(K) Q=1}, we show the growth of the total number of Krylov subspaces $D_{\text{tot}}(\mathcal{K})$ as $M$ increases. The dimensions of the factors for models with $M=4, 6, 8, 10$ are summarized in Appendix~\ref{appendix: Q=1 numerical factorization}. As shown in \figref{fig: D(K) Q=1}, the number of Krylov subspaces $D_{\text{tot}}(\mathcal{K})$ increases rapidly with $M$. This observation suggests that, in the thermodynamic limit ($M \rightarrow \infty$), the systems will display non-ergodic behavior as a consequence of Hilbert space fragmentation. Additionally, we find that the ratio $\frac{|\mathcal{K}^{\text{max}}|}{|\mathcal{H}_Q|}$ decreases monotonically as $M \rightarrow \infty$, see Eq.~\eqref{eqn: limit of the ratio of Kmax and HQ}, from which we infer that the fragmentation is also strong.

\subsection{$H_{\bar{\gamma} \bar{J} \mu}$: partially mixed Krylov subspaces}

We now consider model Hamiltonian $H_{\bar{\gamma} \bar{J} \mu}$ with uniform $\gamma_m=\gamma$, $J_m=J$, but include the random on-site potential $\mu_m$. The fragmentation structure we have discussed previously no longer holds true, and different invariant subspaces couple with each other due to this random on-site potential. Nonetheless, contrary to the expectation that generically all invariant subspaces will merge and couple, we find only partial merging of these subspaces and analytically show that fragmentation survives the addition of this random on-site potential $\mu_m$ term $H_\mu$.

Due to the presence of $H_\mu$, Eq.~\eqref{eqn: neutral-rectangular acted by the Hamiltonian once} must be revised as follows:
\begin{alignat}{1}
H_{\bar{\gamma} \bar{J}} \;
\begin{tikzpicture}[baseline={(0, -0.1)}]
\draw (0.0, 0.25) rectangle (2.4, -0.25);
\draw (0.4, 0) circle (.2);
\draw (1.2, 0) circle (.2);
\draw (2.0, 0) circle (.2);
\fill (1.2, 0) circle (2pt);
\draw (0.6, 0) -- (1.0, 0);
\draw (1.4, 0) -- (1.8, 0);
\node at (0, 0) {$\cdots$};
\node at (2.5, 0) {$\cdots$};
\node at (1.2, 0.4) {\small $r_j$};
\node at (1.2, -0.4) {\small $m$};
\end{tikzpicture}
&=
\begin{tikzpicture}[baseline={(0, -0.1)}]
\draw (0.0, 0.25) rectangle (2.4, -0.25);
\draw (0.4, 0) circle (.2);
\draw (1.2, 0) circle (.2);
\draw (2.0, 0) circle (.2);
\fill (0.4, 0) circle (2pt);
\draw (0.6, 0) -- (1.0, 0);
\draw (1.4, 0) -- (1.8, 0);
\draw[<-] (0.65, 0.4) -- (0.95, 0.4);
\node at (0, 0) {$\cdots$};
\node at (2.5, 0) {$\cdots$};
\node at (1.2, 0.4) {\small $r_j$};
\end{tikzpicture}
\notag \\
&+
\begin{tikzpicture}[baseline={(0, -0.1)}]
\draw (0.0, 0.25) rectangle (2.4, -0.25);
\draw (0.4, 0) circle (.2);
\draw (1.2, 0) circle (.2);
\draw (2.0, 0) circle (.2);
\fill (2.0, 0) circle (2pt);
\draw (0.6, 0) -- (1.0, 0);
\draw (1.4, 0) -- (1.8, 0);
\draw[->] (1.45, 0.4) -- (1.75, 0.4);
\node at (0, 0) {$\cdots$};
\node at (2.5, 0) {$\cdots$};
\node at (1.2, 0.4) {\small $r_j$};
\end{tikzpicture}
\notag \\
&+
\mu_m
\begin{tikzpicture}[baseline={(0, -0.1)}]
\draw (0.0, 0.25) rectangle (2.4, -0.25);
\draw (0.4, 0) circle (.2);
\draw (1.2, 0) circle (.2);
\draw (2.0, 0) circle (.2);
\fill (1.2, 0) circle (2pt);
\draw (0.6, 0) -- (1.0, 0);
\draw (1.4, 0) -- (1.8, 0);
\node at (0, 0) {$\cdots$};
\node at (2.5, 0) {$\cdots$};
\node at (1.2, 0.4) {\small $r_j$};
\end{tikzpicture}
.\end{alignat}
If we apply the Hamiltonian multiple times on this state, the fermion might move beyond the rectangularized sites $\left( m-1, m, m+1 \right)$. However, due to Eq.~\eqref{eqn: LR-rectangular acted by the Hamiltonian once}, whenever the fermion returns to the rectangularized sites, the rectangularized spin states will always be in one of the 6 following states:
\begin{alignat}{1}
\label{eqn: 6 choices of the rectangularized spin states}
\begin{cases}
&\begin{tikzpicture}[baseline={(0, -0.1)}]
\draw (0.0, 0.25) rectangle (2.4, -0.25);
\draw (0.4, 0) circle (.2);
\draw (1.2, 0) circle (.2);
\draw (2.0, 0) circle (.2);
\fill (0.4, 0) circle (2pt);
\draw (0.6, 0) -- (1.0, 0);
\draw (1.4, 0) -- (1.8, 0);
\draw[<-] (0.65, 0.4) -- (0.95, 0.4);
\node at (0, 0) {$\cdots$};
\node at (2.5, 0) {$\cdots$};
\node at (1.2, 0.4) {\small $r_j$};
\end{tikzpicture}
\\
&\begin{tikzpicture}[baseline={(0, -0.1)}]
\draw (0.0, 0.25) rectangle (2.4, -0.25);
\draw (0.4, 0) circle (.2);
\draw (1.2, 0) circle (.2);
\draw (2.0, 0) circle (.2);
\fill (1.2, 0) circle (2pt);
\draw (0.6, 0) -- (1.0, 0);
\draw (1.4, 0) -- (1.8, 0);
\node at (0, 0) {$\cdots$};
\node at (2.5, 0) {$\cdots$};
\node at (1.2, 0.4) {\small $r_j$};
\end{tikzpicture}
\\
&\begin{tikzpicture}[baseline={(0, -0.1)}]
\draw (0.0, 0.25) rectangle (2.4, -0.25);
\draw (0.4, 0) circle (.2);
\draw (1.2, 0) circle (.2);
\draw (2.0, 0) circle (.2);
\fill (2.0, 0) circle (2pt);
\draw (0.6, 0) -- (1.0, 0);
\draw (1.4, 0) -- (1.8, 0);
\draw[->] (1.45, 0.4) -- (1.75, 0.4);
\node at (0, 0) {$\cdots$};
\node at (2.5, 0) {$\cdots$};
\node at (1.2, 0.4) {\small $r_j$};
\end{tikzpicture}
\end{cases}
,\end{alignat}
where $r_j\in[0, 1]$.  For a fine-tuned root state $| \Psi \rangle$, this fact will make all the states $\{H^{n} | \Psi \rangle| n\in\mathbb{Z}_{\geq 0}\}$ span a non-trivial subspace.

The root state $\ket{\Psi}$ of the invariant subspace for $M=4Z+3$ ($Z \in \mathbb{Z}_{\geq 0}$) can be constructed as follows:
\begin{enumerate}
\item The root state $| \Psi \rangle$ is composed of $(Z+1)$ components as in Eq.~\eqref{eq:root}. For the $i$-th component $|\Phi_i\rangle$ ($i \in [0, Z]$), a fermion occupies the $(4i+2)$-th site.
\item For the $i$-th component $| \Phi_i \rangle$, where the fermion occupies the $f_1=(4i+2)$-th site, rectangularize the following sites as in Eq.~\eqref{eqn: singly rectangularize sites}:
\ie
\label{eqn: rectangularized sites of H gamma J mu model}
\left\{ (m-1, m, m+1) | m \in 4\mathbb{Z}_{\geq 0} + 2 \right\}
.\fe
There are a total of $(Z+1)$ rectangles with unfixed rectangular labels $r_1, \cdots r_{Z+1}$, each of which can take one of 2 values.
\item Arbitrarily choose the spin states $\left\{ s_m, s_{m+1} | m \in 4 \mathbb{Z}_{>0} \right\}$ ($2Z$ in total) to be either $0$ or $1$.
\end{enumerate}

By applying the Hamiltonian on this root state, the fermion will generically visit all sites, from $1$ to $M$. However, whenever the fermion is on any of the rectangularized sites i.e., Eq.~\eqref{eqn: rectangularized sites of H gamma J mu model}, the rectangularized spin states will be in one of the states in Eq.~\eqref{eqn: 6 choices of the rectangularized spin states}. This prevents $(H_{\bar{\gamma}\bar{J}\mu})^n | \Psi \rangle$ from generating the entire Hilbert space $\mathcal{H}_Q$. Instead, the dimension of the Krylov subspace is upper bounded by 
\begin{equation}
|\mathcal{K}|\le M \cdot 2^{2Z} \cdot 2^{Z+1}\ ,
\end{equation}
since there are $M$ possible sites in which the fermion can land, $Z+1$ rectangular labels, and $2Z$ undetermined spin states $\left\{ s_m, s_{m+1} | m \in 4 \mathbb{Z}_{>0} \right\}$. The bound is exponentially smaller than the dimension of the Hilbert space $|\mathcal{H}_Q|=M\cdot 2^{4Z+2}$ within a fixed $Q$ sector--therefore, the invariant subspace forms a non-trivial Krylov subspace.
\\~\\
To summarize, in this section we have explored the behavior of the breakdown model in the $Q=1$ sector and shown the following: 

(i) For model $H_{\gamma J}$ with odd system size $M$, there exist extensively many zero modes, which prevent the system from thermalizing. 

(ii) For model $H_{\bar{\gamma} \bar{J}}$ with uniform $\gamma_m=\gamma$ and $J_m=J$ and with odd system size $M=4Z+3$ ($Z\in \mathbb{Z}_{\geq 0}$, i.e., all non-negative integers), we identified extensive Krylov subspaces with dimension $2$. The Krylov subspaces with dimension $4$ is discussed in Appendix.~\ref{appendix: Q=1, odd M, K=4}. 

(iii) For model $H_{\bar{\gamma} \bar{J}}$ even $M$, we numerically established the existence of extensive number of Krylov subspaces by direct factorization. 

(iv) Finally, we showed that in the presence of a random on-site potential, namely, for model $H_{\bar{\gamma} \bar{J} \mu}$, fragmentation survives in this sector, although many of the Krylov subspaces analyzed above merge together.

%%%%%%%%%%%%%%%%%%%%%%%%%%%%%%%%%%%%%%%%%%%%%%%%%%%%
%%%%%%%%%%%%%%%%%%%%%%%%%%%%%%%%%%%%%%%%%%%%%%%%%%%%
%%%%%%%%%%%%%%%%%%%%%%%%%%%%%%%%%%%%%%%%%%%%%%%%%%%%

\section{Zero Magnetic Field: $Q\ge 2$ Sectors}
\label{section: Without magnetic field: Q>=2}

Thus far, we have focused on the $Q=1$ sector of the breakdown model and argued that it hosts dynamically disconnected Krylov spaces, which would prevent this sector from thermalizing. A natural question is whether Hilbert space fragmentation persists as $Q$ is increased, which we consider in this section. Specifically, we analytically demonstrate that while models with odd $M$ and even $Q$ are ergodic, for even system sizes $M$ all $Q$ sectors are non-ergodic. This non-ergodicity stems from the presence of extensively many zero-modes, namely, one-dimensional Krylov subspaces (similar to the $Q=1$ case we previously studied).

\subsection{$H_{\gamma J}$: $Q=2$, $\mathcal{K}^{(1)}$ subspaces}
For charge sector $Q=2$ with either uniform or random $\gamma_m$ and $J_m$, we find there exist zero modes for all system-sizes $M$. In addition, the number of zero modes is lower bounded by $D(\mathcal{K}^{(1)}) \ge 2^{M-1}$. As a result, the $Q=2$ sector also exhibits HSF. We can construct (part of) the zero modes in the $Q=2$ sector as follows:
\begin{enumerate}
\item The zero modes $\ket{\Psi}$ are expressed in terms of $(M-1)$ components, $\ket{\Psi} = \sum_{i=1}^{M-1} \eta_i |\Phi_i\rangle $. For the $i$-th component $| \Phi_i \rangle$ ($i\in[1,M-1]$), two fermions occupy consecutive sites $(f_1, f_2)=(i, i+1)$.
\item Arbitrarily choose the values of $s_2, \cdots, s_M$. For the $i$-th component $| \Phi_i \rangle$, where two fermions occupy the $(f_1, f_2)=(i, i+1)$ sites, fix the spin states as
\ie
\begin{cases}
    | \overrightarrow{s_m} \rangle \text{, if $m \le f_1$,}
    \\
    | s_m \rangle \text{, if $m = f_2$,}
    \\
    | \overleftarrow{s_m} \rangle \text{, if $m>f_2$.}
\end{cases}
\fe
\item Fix the relative coefficients to be
\ie
\begin{cases}
\eta_i = 1, &\text{ if } i \in 2\mathbb{Z}_{>0}-1
\\
\eta_i = -1, &\text{ if } i \in 2\mathbb{Z}_{>0}
\end{cases}
\fe
\end{enumerate}
 
Take $M=4$ as an example. The zero modes can be graphically represented by
\begin{alignat}{1}
| \Psi \rangle = \;
&\begin{tikzpicture}[baseline={(0, -0.1)}]
\draw (0.0, 0) circle (.2);
\draw (0.8, 0) circle (.2);
\draw (1.6, 0) circle (.2);
\draw (2.4, 0) circle (.2);
\fill (0.0, 0) circle (2pt);
\fill (0.8, 0) circle (2pt);
\draw (0.2, 0) -- (0.6, 0);
\draw (1.0, 0) -- (1.4, 0);
\draw (1.8, 0) -- (2.2, 0);
\draw[<-] (1.05, 0.15) -- (1.35, 0.15);
\draw[<-] (1.85, 0.15) -- (2.15, 0.15);
\end{tikzpicture}
\notag\\
-\;
&\begin{tikzpicture}[baseline={(0, -0.1)}]
\draw (0.0, 0) circle (.2);
\draw (0.8, 0) circle (.2);
\draw (1.6, 0) circle (.2);
\draw (2.4, 0) circle (.2);
\fill (0.8, 0) circle (2pt);
\fill (1.6, 0) circle (2pt);
\draw (0.2, 0) -- (0.6, 0);
\draw (1.0, 0) -- (1.4, 0);
\draw (1.8, 0) -- (2.2, 0);
\draw[->] (0.25, 0.15) -- (0.55, 0.15);
\draw[<-] (1.85, 0.15) -- (2.15, 0.15);
\end{tikzpicture}
\notag\\
+\;
&\begin{tikzpicture}[baseline={(0, -0.1)}]
\draw (0.0, 0) circle (.2);
\draw (0.8, 0) circle (.2);
\draw (1.6, 0) circle (.2);
\draw (2.4, 0) circle (.2);
\fill (1.6, 0) circle (2pt);
\fill (2.4, 0) circle (2pt);
\draw (0.2, 0) -- (0.6, 0);
\draw (1.0, 0) -- (1.4, 0);
\draw (1.8, 0) -- (2.2, 0);
\draw[->] (0.25, 0.15) -- (0.55, 0.15);
\draw[->] (1.05, 0.15) -- (1.35, 0.15);
\end{tikzpicture}
.\end{alignat}
$H_{\gamma J} | \Psi \rangle = 0$ can then be easily verified. 

In the general construction delineated above, each $s_m$ can take 2 values and therefore the procedure produces $2^{M-1}$ zero modes. Note that this only produces a subset of all zero modes within the $Q=2$ sector, namely, 
\begin{equation}
D(\mathcal{K}^{(1)})\ge 2^{M-1}\ .
\end{equation}
However, as this provides a lower bound that increases exponentially with system-size $M$, it provides sufficient proof for fragmentation. 

\subsection{$H_{\gamma J}$: $Q \ge 3$, $\mathcal{K}^{(1)}$ subspaces}

The previous construction for identifying zero modes explicitly does not extend beyond the $Q=2$ sector. Nonetheless, we can give a more abstract construction of zero modes of $H_{\gamma J}$ that works for any $Q$ sector and system-size $M$, which we now discuss and use to establish Hilbert space fragmentation analytically for any $M$ when $Q$ is even and for odd $M$ when $Q$ is odd.

We start with the following lemma:
\begin{lemma}
\label{lemma: move to right cancelled by move to left}
Consider the $Q$-sector of a model with system-size $M$, where $(M-2) \ge Q \ge 1$. Given any state $\ket{\xi}$ in this sector with fermions at sites $f_1, \cdots, f_Q$ such that there exists at least one site $f_q$ ($1 \le q \le Q$ and $f_q \le M-2$) which is occupied while the neighboring two sites $(f_{q}+1)$ and $(f_{q}+2)$ are unoccupied:
\begin{alignat}{2}
|\xi\rangle &= \underbrace{\cdots c_{f_q}^\dagger \cdots  | \Omega \rangle}_{\text{fermion part}} \otimes \underbrace{ \cdots \otimes | s_{f_q+1} \rangle \otimes | s_{f_q+2} \rangle \otimes \cdots}_{\text{spin part}}
\\
\notag
&= 
\begin{tikzpicture}[baseline={(0, -0.1)}]
    \draw (1.4, 0) circle (.2);
    \draw (2.8, 0) circle (.2);
    \draw (4.2, 0) circle (.2);
    \fill (1.4, 0) circle (2pt);
    \draw (0.2, 0) -- (1.2, 0);
    \draw (1.6, 0) -- (2.6, 0);
    \draw (3.0, 0) -- (4.0, 0);
    \node at (-0.1, 0) {$\cdots$};
    \node at (4.7, 0) {$\cdots$};
    \node at (1.4, -0.5) {\small $f_q$};
    \node at (4.2, -0.5) {\small $f_q+2$};
    \node at (0.7, -0.25) {\small $s_{f_q}$};
    \node at (2.1, -0.25) {\small $s_{f_q+1}$};
    \node at (3.5, -0.25) {\small $s_{f_q+2}$};
\end{tikzpicture}
,\end{alignat}
there exists a state 
\begin{alignat}{2}
|\xi^{\prime} \rangle &= \cdots c_{f_{q}+2}^\dagger \cdots  | \Omega \rangle \otimes  \cdots \otimes | s_{f_q+1}^\prime \rangle \otimes | s_{f_q+2}^\prime \rangle \otimes \cdots
\\
\notag
&=
\begin{tikzpicture}[baseline={(0, -0.1)}]
    \draw (1.4, 0) circle (.2);
    \draw (2.8, 0) circle (.2);
    \draw (4.2, 0) circle (.2);
    \fill (4.2, 0) circle (2pt);
    \draw (0.2, 0) -- (1.2, 0);
    \draw (1.6, 0) -- (2.6, 0);
    \draw (3.0, 0) -- (4.0, 0);
    \node at (-0.1, 0) {$\cdots$};
    \node at (4.7, 0) {$\cdots$};
    \node at (1.4, -0.5) {\small $f_q$};
    \node at (4.2, -0.5) {\small $f_q+2$};
    \node at (0.7, -0.25) {\small $s_{f_q}$};
    \node at (0.7, -0.25) {\small $s_{f_q}$};
    \node at (2.1, -0.25) {\small $s_{f_q+1}^{\prime}$};
    \node at (3.5, -0.25) {\small $s_{f_q+2}^{\prime}$};
\end{tikzpicture}
,\end{alignat}
such that
\ie
\hat{h}_{f_{q}}^{(R)} |\xi\rangle + \hat{h}_{f_{q}+2}^{(L)} |\xi^{\prime} \rangle = 0
,\fe
where $\hat{h}_m^{(R)}$ and $\hat{h}_m^{(L)}$ are defined in Eq.~\eqref{eqn: decomposition of the Hamiltonian}.
\end{lemma}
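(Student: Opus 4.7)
The plan is to apply both operators explicitly, reduce the claimed identity to an algebraic equation on the two-site spin block $\{f_q+1,f_q+2\}$, and solve it by inverting a $2\times 2$ matrix. First I would apply the two operators using the definitions in Eq.~\eqref{eqn: decomposition of the Hamiltonian}: $\hat{h}_{f_q}^{(R)}$ acting on $|\xi\rangle$ moves the fermion from $f_q$ to $f_q+1$ and acts on the spin at $f_q+1$ via the operator $-\gamma_{f_q}\mathbb{1}+J_{f_q}\sigma^+$, while $\hat{h}_{f_q+2}^{(L)}$ acting on $|\xi'\rangle$ moves the fermion from $f_q+2$ to $f_q+1$ and acts on the spin at $f_q+2$ via $-\gamma_{f_q+1}\mathbb{1}+J_{f_q+1}\sigma^-$. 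The two resulting states share identical fermion occupations (a fermion at $f_q+1$, empty at $f_q$ and $f_q+2$, all other $f_k$ untouched) and identical spin content outside the block, and each remains a product state on the block itself.

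Adopting the standard ordering $c_{f_1}^\dagger\cdots c_{f_Q}^\dagger|\Omega\rangle$ for the fermion Fock states, both hops pick up the same anticommutation sign $(-1)^{2(q-1)}=+1$. Factoring out the common fermionic and outer-spin content then reduces the desired identity to the two-qubit equation
\begin{equation}
\bigl[(-\gamma_{f_q}\mathbb{1}+J_{f_q}\sigma^+)|s_{f_q+1}\rangle\bigr]\otimes|s_{f_q+2}\rangle = -|s'_{f_q+1}\rangle\otimes\bigl[(-\gamma_{f_q+1}\mathbb{1}+J_{f_q+1}\sigma^-)|s'_{f_q+2}\rangle\bigr].
\end{equation}
Because both sides are product states on $\mathbb{C}^2\otimes\mathbb{C}^2$, matching them reduces to matching each tensor factor individually. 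The left factor can be read off directly as $|s'_{f_q+1}\rangle=(-\gamma_{f_q}\mathbb{1}+J_{f_q}\sigma^+)|s_{f_q+1}\rangle$. The right factor requires inverting $A:=-\gamma_{f_q+1}\mathbb{1}+J_{f_q+1}\sigma^-$, which in the $\{|1\rangle,|0\rangle\}$ basis is lower-triangular with $-\gamma_{f_q+1}$ on the diagonal and is therefore invertible whenever $\gamma_{f_q+1}\neq 0$; this fixes $|s'_{f_q+2}\rangle=-A^{-1}|s_{f_q+2}\rangle$.

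The main obstacle is purely bookkeeping: tracking the fermion anticommutation signs carefully when $q>1$ so that the two hops genuinely carry equal signs, and ensuring invertibility of $A$ in the parameter regime at hand; the non-invertible case $\gamma_{f_q+1}=0$ would need to be treated separately, either via a limiting argument or by a direct case analysis showing that the target $|s_{f_q+2}\rangle$ still lies in the image of $J_{f_q+1}\sigma^-$ for the specific states arising in the lemma's applications. Once $|s'_{f_q+1}\rangle$ and $|s'_{f_q+2}\rangle$ are substituted back into the product-state ansatz for $|\xi'\rangle$, the identity $\hat{h}_{f_q}^{(R)}|\xi\rangle+\hat{h}_{f_q+2}^{(L)}|\xi'\rangle=0$ holds by construction, completing the proof.
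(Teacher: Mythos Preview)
Your proposal is correct and follows essentially the same route as the paper: the paper's proof simply writes down the explicit $2\times 2$ matrices $|s'_{f_q+1}\rangle = \begin{pmatrix}-\gamma_{f_q}&J_{f_q}\\0&-\gamma_{f_q}\end{pmatrix}|s_{f_q+1}\rangle$ and $|s'_{f_q+2}\rangle = -\begin{pmatrix}-1/\gamma_{f_q+1}&0\\-J_{f_q+1}/\gamma_{f_q+1}^2&-1/\gamma_{f_q+1}\end{pmatrix}|s_{f_q+2}\rangle$, which are precisely your $(-\gamma_{f_q}\mathbb{1}+J_{f_q}\sigma^+)$ and $-A^{-1}$. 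Your additional bookkeeping of the fermionic signs and the remark on the $\gamma_{f_q+1}=0$ edge case go slightly beyond what the paper spells out, but the argument is the same.
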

\begin{proof}
Simply choose
\ie
| s_{f_q+1}^\prime \rangle = 
\left(
\begin{matrix}
-\gamma_{f_q} & J_{f_q}
\\
0 & -\gamma_{f_q}
\end{matrix}
\right)
| s_{f_q+1} \rangle
,\fe
and
\ie
| s_{f_q+2}^\prime \rangle = 
-
\left(
\begin{matrix}
-1/\gamma_{f_q+1} & 0
\\
-J_{f_q+1} / \gamma_{f_q+1}^2 & -1/\gamma_{f_q+1}
\end{matrix}
\right)
| s_{f_q+2} \rangle
,\fe
which is a direct consequence of Eq.~\eqref{eqn: decomposition of the Hamiltonian}. Note that the normalization factors are neglected here for convenience.
\end{proof}

Consider the $Q=2$ sector with even system-size $M$. Starting with a component $| \Phi_{(1, 2)} \rangle$ with $(f_1, f_2) = (1, 2)$ (the spin states are arbitrary, so we are essentially constructing $2^{M-1}$ zero modes), for which all other sites are empty (hence the above Lemma applies with $f_q = 2$), we can iteratively construct the rest of the components that constitute the zero mode (note that it is important to require $M$ being even):
\begin{enumerate}
\item Based on Lemma \ref{lemma: move to right cancelled by move to left}, the spin states for the component $| \Phi_{(1, 4)} \rangle$ with $(f_1, f_2) = (1, 4)$ can be inferred, such that Lemma \ref{lemma: move to right cancelled by move to left} with $f_q=2$ guarantees that 
\ie
h_{2}^{(R)} |\Phi_{(1, 2)}\rangle + h_{4}^{(L)} | \Phi_{(1, 4)} \rangle = 0
.\fe
Similarly, from $| \Phi_{(1, 4)} \rangle$ the spin states of the component with $(f_1, f_2) = (1, 6)$ can be inferred similar to the above, and so on, until the component with $(f_1, f_2) = (1, M)$ is constructed. In other words, once we know the spin states for the component with $(f_1, f_2) = (1, m)$, the spin states of the component with $(f_1, f_2) = (1, m+2)$ can be inferred.
\item Once we know the spin states for the component with $(f_1, f_2) = (m_1, m_2)$ with $m_2 > m_1 + 2$, the component with $(f_1, f_2) = (m_1+2, m_2)$ can be inferred base on Lemma \ref{lemma: move to right cancelled by move to left}.
\item We then iterate the above procedure until all components satisfying the conditions $f_2 > f_1$, where $f_2$ is even and $f_1$ is odd, are obtained.
\item The resulting zero mode is given by 
\ie
\ket{\Psi} = 
\sum_{m_2=m_1}^{M/2}
\sum_{m_1=1}^{M/2} \ket{\Phi_{(2m_1-1,2m_2)}}
.\fe
\end{enumerate}
The above procedure is diagramatically represented in \figref{fig: Graghically summary of constructing Q=2 zero modes}. Since we can arbitrarily choose $2^{M-1}$ spin states in $| \Phi_{(1, 2)} \rangle$, the procedure produces $2^{M-1}$ zero modes:
\ie
D(\mathcal{K}^{(1)}) \ge 2^{M-1}
.\fe

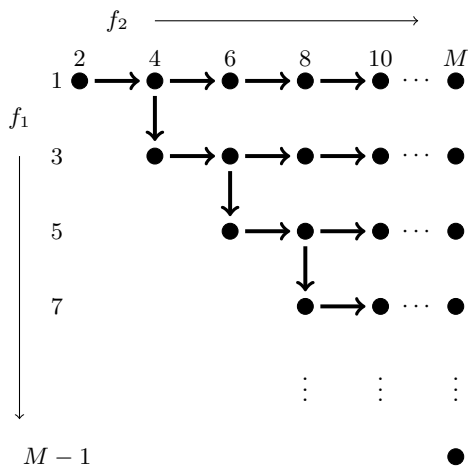
\begin{figure}[t]
\centering
\begin{tikzpicture}
\draw[fill=black] (0,0) circle (3pt);
\draw[fill=black] (1,0) circle (3pt);
\draw[fill=black] (2,0) circle (3pt);
\draw[fill=black] (3,0) circle (3pt);
\draw[fill=black] (4,0) circle (3pt);
\draw[fill=black] (5,0) circle (3pt);
\draw[fill=black] (1,-1) circle (3pt);
\draw[fill=black] (2,-1) circle (3pt);
\draw[fill=black] (3,-1) circle (3pt);
\draw[fill=black] (4,-1) circle (3pt);
\draw[fill=black] (5,-1) circle (3pt);
\draw[fill=black] (2,-2) circle (3pt);
\draw[fill=black] (3,-2) circle (3pt);
\draw[fill=black] (4,-2) circle (3pt);
\draw[fill=black] (5,-2) circle (3pt);
\draw[fill=black] (3,-3) circle (3pt);
\draw[fill=black] (4,-3) circle (3pt);
\draw[fill=black] (5,-3) circle (3pt);
\draw[fill=black] (5,-5) circle (3pt);
\draw[->, line width=1.5pt] (0.2,0) -- (0.8,0);
\draw[->, line width=1.5pt] (1.2,0) -- (1.8,0);
\draw[->, line width=1.5pt] (2.2,0) -- (2.8,0);
\draw[->, line width=1.5pt] (3.2,0) -- (3.8,0);
\draw[->, line width=1.5pt] (1.2,-1.0) -- (1.8,-1.0);
\draw[->, line width=1.5pt] (2.2,-1.0) -- (2.8,-1.0);
\draw[->, line width=1.5pt] (3.2,-1.0) -- (3.8,-1.0);
\draw[->, line width=1.5pt] (2.2,-2.0) -- (2.8,-2.0);
\draw[->, line width=1.5pt] (3.2,-2.0) -- (3.8,-2.0);
\draw[->, line width=1.5pt] (3.2,-3.0) -- (3.8,-3.0);
\draw[->, line width=1.5pt] (1.0,-0.2) -- (1.0,-0.8);
\draw[->, line width=1.5pt] (2.0,-1.2) -- (2.0,-1.8);
\draw[->, line width=1.5pt] (3.0,-2.2) -- (3.0,-2.8);
\draw[->] (1, 0.8) -- (4.5, 0.8);
\draw[->] (-0.8, -1) -- (-0.8, -4.5);
\node at (-0.8, -0.5) {$f_1$};
\node at (0.5, 0.8) {$f_2$};
\node at (-0.3, 0.0) {$1$};
\node at (-0.3, -1.0) {$3$};
\node at (-0.3, -2.0) {$5$};
\node at (-0.3, -3.0) {$7$};
% \node at (-0.3, -4.0) {$9$};
\node at (-0.3, -5.0) {$M-1$};
\node at (0.0, 0.3) {$2$};
\node at (1.0, 0.3) {$4$};
\node at (2.0, 0.3) {$6$};
\node at (3.0, 0.3) {$8$};
\node at (4.0, 0.3) {$10$};
\node at (5.0, 0.3) {$M$};
\node at (4.5, 0.0) {$\cdots$};
\node at (4.5, -1.0) {$\cdots$};
\node at (4.5, -2.0) {$\cdots$};
\node at (4.5, -3.0) {$\cdots$};
\node at (3.0, -4.0) {$\vdots$};
\node at (4.0, -4.0) {$\vdots$};
\node at (5.0, -4.0) {$\vdots$};
\end{tikzpicture}
\caption{A diagrammatic representation of the procedure for constructing zero modes via Lemma $1$. The upper-left point corresponds to the component with $(f_1, f_2) = (1, 2)$, whose spin states can be randomly chosen (there are $2^{M-1}$ independent choices). An arrow pointing from component $A$ to component $B$ implies that the spin states of component $B$ can be inferred from component $A$ by Lemma \ref{lemma: move to right cancelled by move to left}. Note that the arrows always point from top to down and from left to right.}
\label{fig: Graghically summary of constructing Q=2 zero modes}
\end{figure}

However, there is an ambiguity in drawing the arrows in \figref{fig: Graghically summary of constructing Q=2 zero modes}: when considering components that form a square and given the knowledge of the spin state of the upper-left component (corresponding to $(f_1, f_2)$, where $f_2 > f_1 + 2$), we can choose either of the following steps for constructing the bottom-right component (corresponding to $(f_1+2, f_2+2)$):
\begin{alignat}{1}
\begin{tikzpicture}
\draw[fill=black] (0,0) circle (2pt);
\draw[fill=black] (1,0) circle (2pt);
\draw[fill=black] (1,-1) circle (2pt);
\draw[fill=black] (0,-1) circle (2pt);
\draw[->, line width=1.pt] (0.2,0) -- (0.8,0);
\draw[->, line width=1.pt] (0.2,-1) -- (0.8,-1);
\draw[->, line width=1.pt] (0,-0.2) -- (0,-0.8);
\end{tikzpicture}
\;\;\;\;\;\;\;\;\;\;\;\;\;\;\;\;
\begin{tikzpicture}
\draw[fill=black] (0,0) circle (2pt);
\draw[fill=black] (1,0) circle (2pt);
\draw[fill=black] (1,-1) circle (2pt);
\draw[fill=black] (0,-1) circle (2pt);
\draw[->, line width=1.pt] (0.2,0) -- (0.8,0);
\draw[->, line width=1.pt] (0,-0.2) -- (0,-0.8);
\draw[->, line width=1.pt] (1,-0.2) -- (1,-0.8);
\end{tikzpicture}
\end{alignat}
The bottom-right component constructed from either procedure is identical because moving horizontally and vertically along the arrows alters the spin states on distinct sites, thus the two moves commute. Namely, moving horizontally alters $s_{f_1+1}$ and $s_{f_1+2}$, while moving vertically alters $s_{f_2+1}$ and $s_{f_2+2}$. The final outcome is invariant under interchanging the order.

The above procedure extends to systems with even $M$ and sectors with even $Q$ and to systems with odd $M$ and with odd $Q$ sectors. In these cases, the 2D lattices depicted in \figref{fig: Graghically summary of constructing Q=2 zero modes} are replaced with $Q$-dimensional cubic lattices. 

Finally, let us recall from the discussion in Sec.~\ref{sec:model} that models with odd system size $M$ in the odd $Q$ sectors share their spectrum with the $Q^\prime = M - Q$ sector. Consequently, we can deduce the existence of zero modes in the even $Q^\prime$ sector from that in the even $Q$ sector. Thus, aside from the case of even $M$ and odd $Q$, we have analytically constructed zero modes for any system size and any symmetry sector. Since our procedure produces $2^{M-1}$ zero modes, the number of zero modes in such symmetry sectors are lower bounded by $2^{M-1}$, which grows exponentially with $M$ and thus establishes HSF.

%%%%%%%%%%%%%%%%%%%%%%%%%%%%%%%%%%%%%%%%%%%%%%%%%%%%
%%%%%%%%%%%%%%%%%%%%%%%%%%%%%%%%%%%%%%%%%%%%%%%%%%%%
%%%%%%%%%%%%%%%%%%%%%%%%%%%%%%%%%%%%%%%%%%%%%%%%%%%%

\section{Krylov subspaces and the Entanglement Entropy}

The von Neumann entanglement entropy (EE) is an important indicator of non-ergodicity in classically fragmentated systems~\cite{khemani2020hilbert,sala2019ergodicity}. By dividing the system into two subregions $A$ and $B$, the EE of subregion $A$ is defined as:
\ie
S_A(\rho) = -\tr \left[ \rho_A \log \rho_A \right]
,\fe
where $\rho_A$ is the reduced density matrix of subregion $A$. In this paper, the subregion $A$ is defined as the left half part of the system consisting of sites $1 < m < \lfloor M/2 \rfloor$. The long-time behavior of the EE, when starting from a random Haar product state as the initial configuration, sheds light on the system's ergodicity (or lack thereof): a random Haar product state (of which on-site tensor product states are a subset) has zero EE:
\ie
| \psi_0 \rangle = | \psi_A \rangle \otimes | \psi_B \rangle
\Rightarrow
S_A(| \psi_0 \rangle) = 0
,\fe
and as time evolves, the EE of the state increases. For an ergodic system, the state will explore the entire Hilbert space, and is expected to behave like a random Haar state in the entire Hilbert space in the long time limit. In the presence of global symmetries, if we pick a random Haar product state with a fixed symmetry charge and, for an ergodic system, it is expected to explore the entire Hilbert space within that symmetry sector. Consequently, the EE at $t \rightarrow \infty$ saturates the Page value~\cite{Page_1993}, which is defined as the average EE of random Haar states between two bi-partitioned systems. On the other hand, if the EE fails to saturate the Page value at $t \rightarrow \infty$, it indicates that the system is non-ergodic.

In this Section, we study the time-evolution from random Haar product states of the EE within different symmetry sectors.  We will focus on equally bi-partitioned systems. The results clearly deviate from the expected Page values, validating the analytic results we established in prior sections.

\subsection{Entanglement Entropy: $Q=1$}

\begin{figure}[t]
\centering
\label{}
\subfigure[]{\label{fig: expEE_Q1_t10_J8_W0_hmu0_odd}\includegraphics[width=80mm]{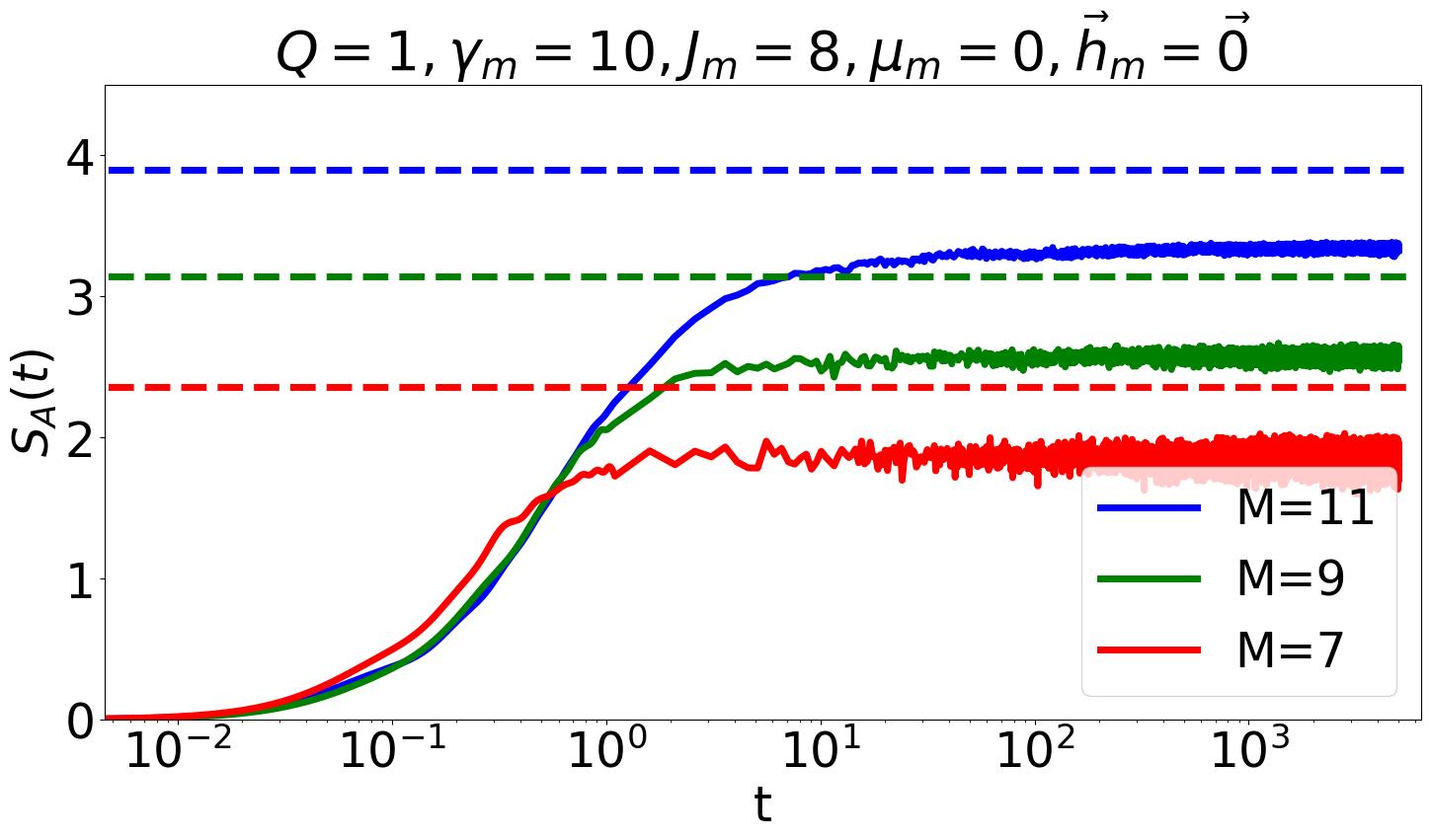}}
\subfigure[]{\label{fig: expEE_diff_Q1_t10_J8_odd}\includegraphics[width=80mm]{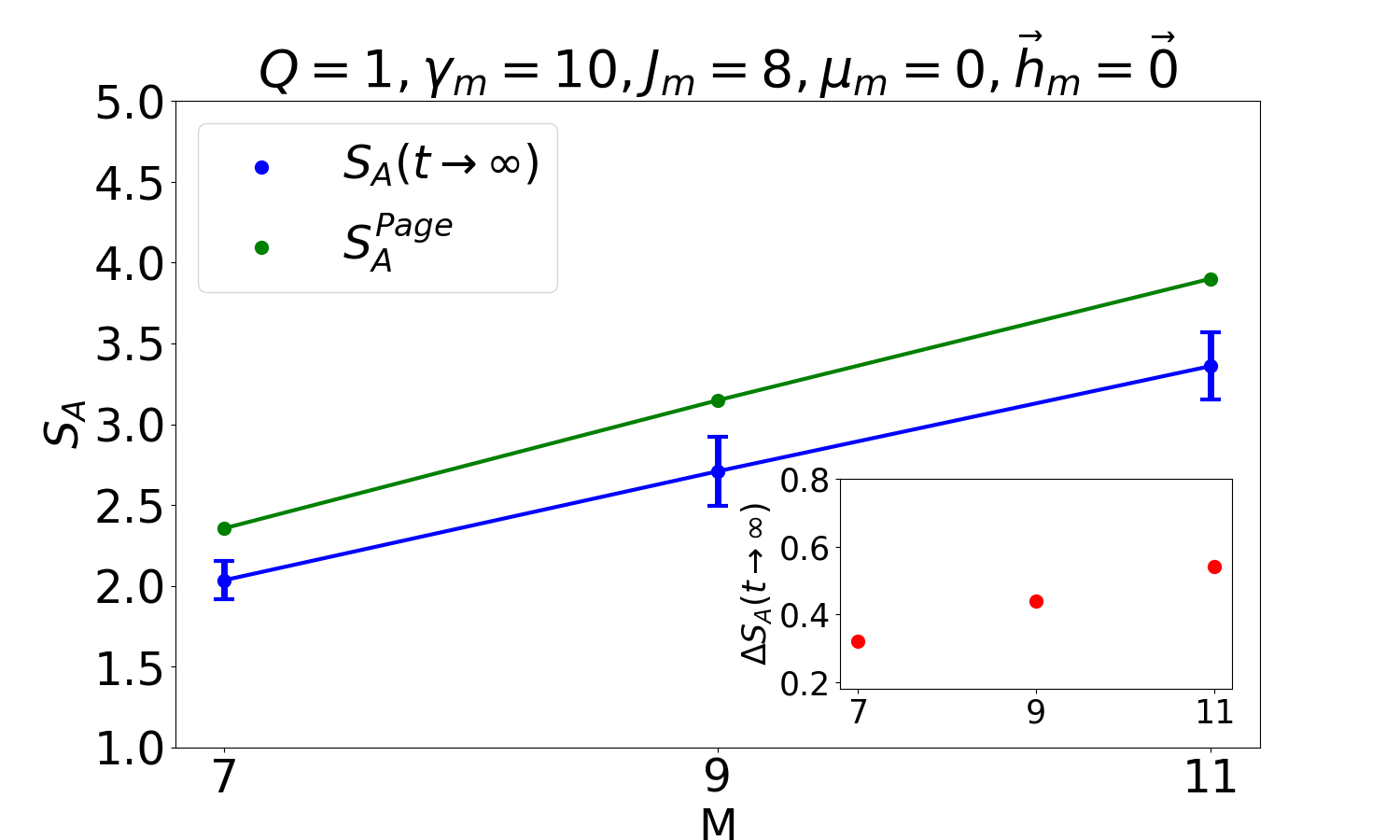}}
\caption{(a) Time evolution of EE for system sizes $M=7, 9, 11$ in the $Q=1$ sector: $\gamma_m=10, J_m=8, W=0, \vec{h}_m=\vec{0}$. The dashed lines are the corresponding Page values, obtained by averaging the EE over 50 random many-body states. The asymptotic value of all the curves are smaller than their corresponding Page values. (b) Comparison of the late time EE and the Page values. Each data point and its associated standard deviation are based on 20 realizations of random initial Haar product states. The inset shows the deviation between the average late time EE's and the Page values.}
\label{fig: expEE Q=1 odd}
\end{figure}

\begin{figure}[t]
\centering
\label{}
\subfigure[]{\label{fig: expEE_Q1_t10_J8_W0_hmu0_even}\includegraphics[width=80mm]{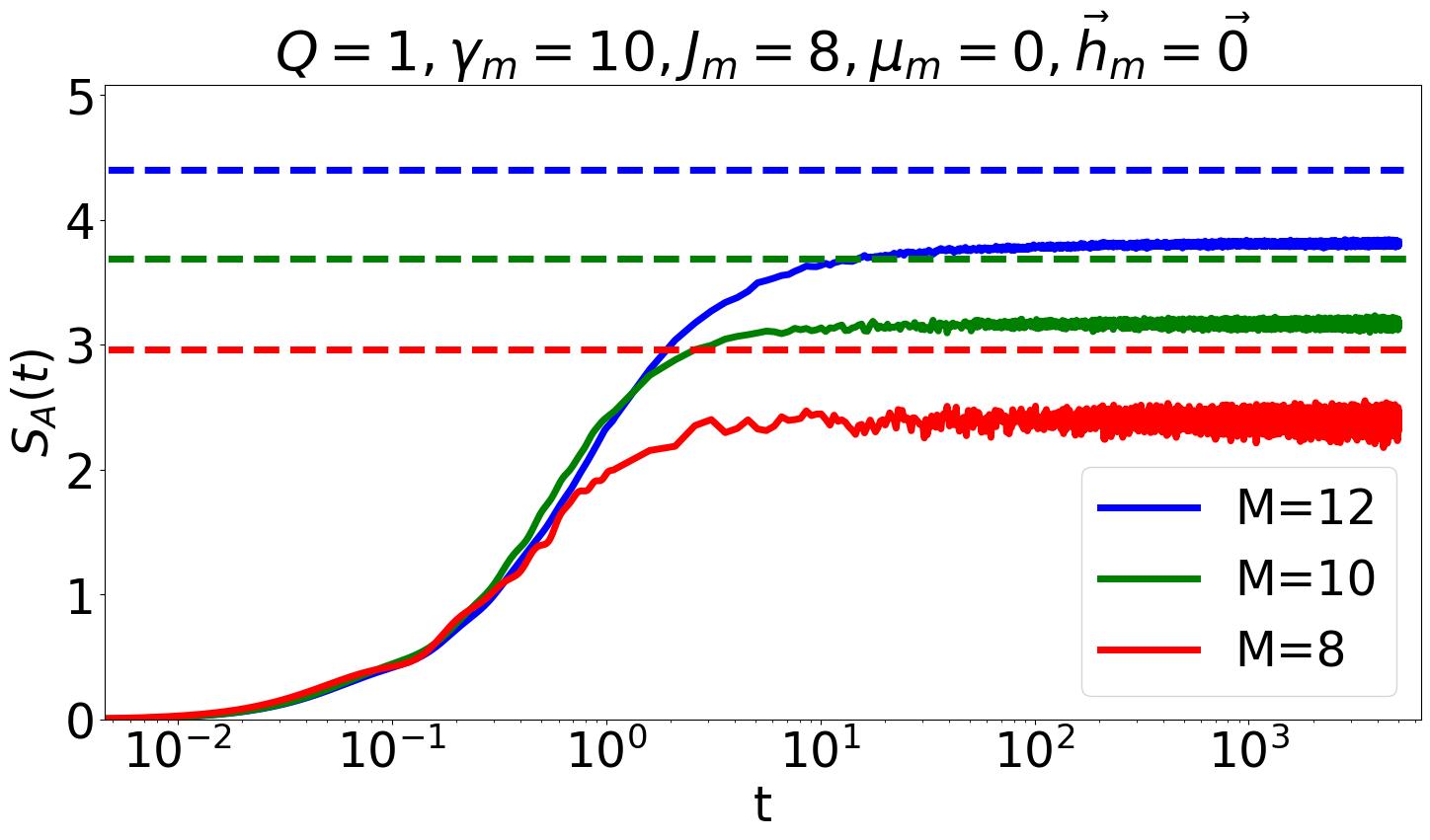}}
\subfigure[]{\label{fig: expEE_diff_Q1_t10_J8_even}\includegraphics[width=80mm]{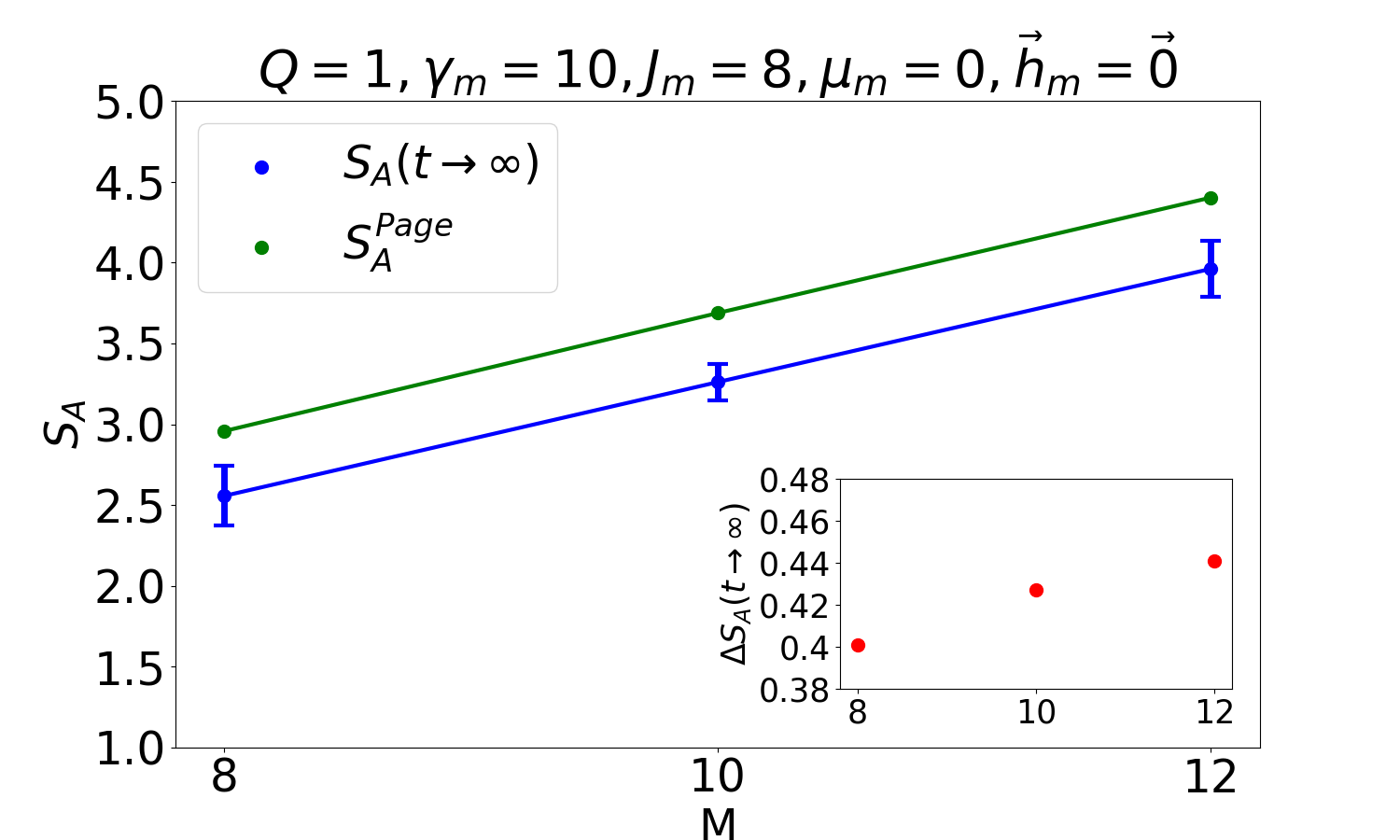}}
\caption{(a) Time evolution of EE for system sizes $M=8, 10, 12$ in the $Q=1$ sector: $\gamma_m=10, J_m=8, W=0, \vec{h}_m=\vec{0}$. The Page values are obtained by averaging over 50 random many-body states. (b) Comparison of the late time EE and the Page values. Each data point and its associated standard deviation are based on 20 realizations of random initial Haar product states.}
\label{fig: expEE Q=1 even}
\end{figure}

In the symmetry charge $Q = 1$ sector of model $H_{\bar{\gamma} \bar{J}}$ in~\figref{fig: expEE_Q1_t10_J8_W0_hmu0_odd}, we show the evolution of the EE for systems with odd $M$ starting from arbitrary Haar random product states. All the EE curves (solid curves) fail to saturate the corresponding Page values (dashed curves); this numerical observation is consistent with our analytic identification of exponentially many Krylov subspaces within the $Q = 1$ sector for odd system sizes $M$. 

We note that while the precise value of the EE at $t \rightarrow \infty$ has some initial state dependence, the EE fails to saturates the corresponding Page value for any choice of random initial state, \emph{despite the initial state having random components in all the Krylov subspaces}. In \figref{fig: expEE_diff_Q1_t10_J8_odd}, we present the EE at late times $S_A(t \rightarrow \infty)$ for various system sizes, and compare the differences with their respective Page values. In the main figure, each data point and its associated standard deviation averaged over 20 random realizations of random initial Haar product states. For each value of $M$, the Page values consistently exceed the first standard deviation, providing a a clear signal that a product state is unable to explore the entire symmetry sector. Additionally, the inset of \figref{fig: expEE_diff_Q1_t10_J8_odd}, shows the difference between the Page value and the observed late time EE:
\ie
\Delta S_A(t \rightarrow \infty) \equiv S_A^{\text{Page}} - S_A(t \rightarrow \infty)
.\fe
We observe that $\Delta S_A(t \rightarrow \infty)$ increases with increasing system sizes $M$ (for odd $M$), consistent with strong fragmentation.

For models with even $M$, we do not have analytical arguments to estimate the number of all Krylov subspaces. However, \figref{fig: D(K) Q=1} suggests that the number of Krylov subspaces increases rapidly with $M$, and we therefore conjecture that the system displays non-thermal behavior. In \figref{fig: expEE_Q1_t10_J8_W0_hmu0_even}, we show the EE evolution for models with even $M$ starting from arbitrary Haar random product states. One can clearly observe finite deviations of $S_A(t \rightarrow \infty)$ from the Page values for the system sizes we can numerically access, where the size-dependent saturation value can be attributed to the exponentially many fragmented subspaces~\cite{sala2019ergodicity}. We also observe a slight increase in the deviation $\Delta S_A(t\rightarrow\infty)$ with $M$ (see inset of \figref{fig: expEE_diff_Q1_t10_J8_even}), but it is not as significant as to in the case of odd $M$. Nevertheless, the numerical results we have found are consistent with strong fragmentation in the $Q=1$ sector for even $M$, with the caveat that larger system sizes should be analyzed to conclusively determine the nature of non-ergodicity in the thermodynamic limit.

\subsection{Entanglement Entropy: $Q \ge 2$}
\label{subsection: Q >= 2, Entanglement entropy growth}

\begin{figure}[t]
\centering
\subfigure[]{\label{fig: expEE_Q2_t10_J8_W0_hmu0}\includegraphics[width=80mm]{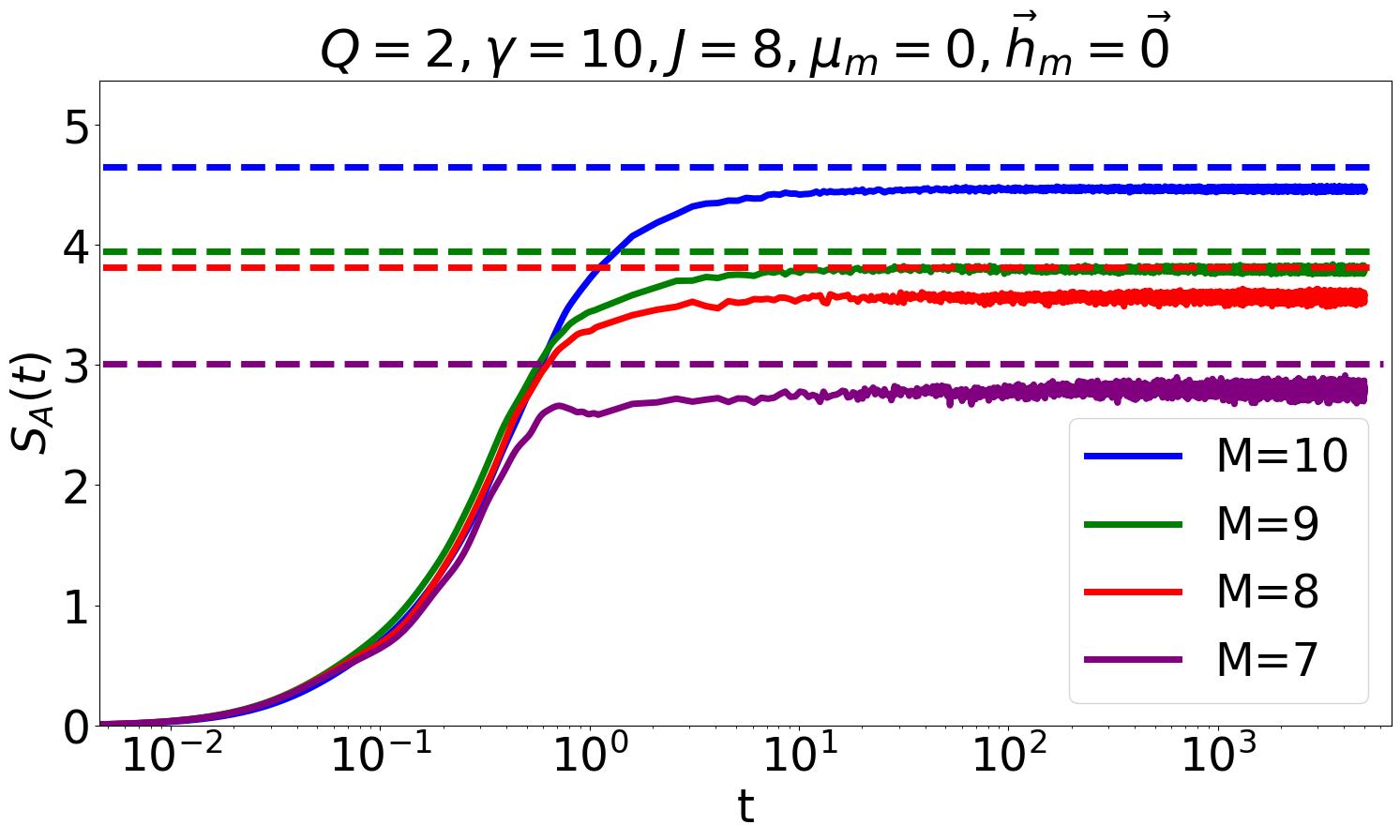}}
\subfigure[]{\label{fig: expEE_diff_Q2_t10_J8}\includegraphics[width=80mm]{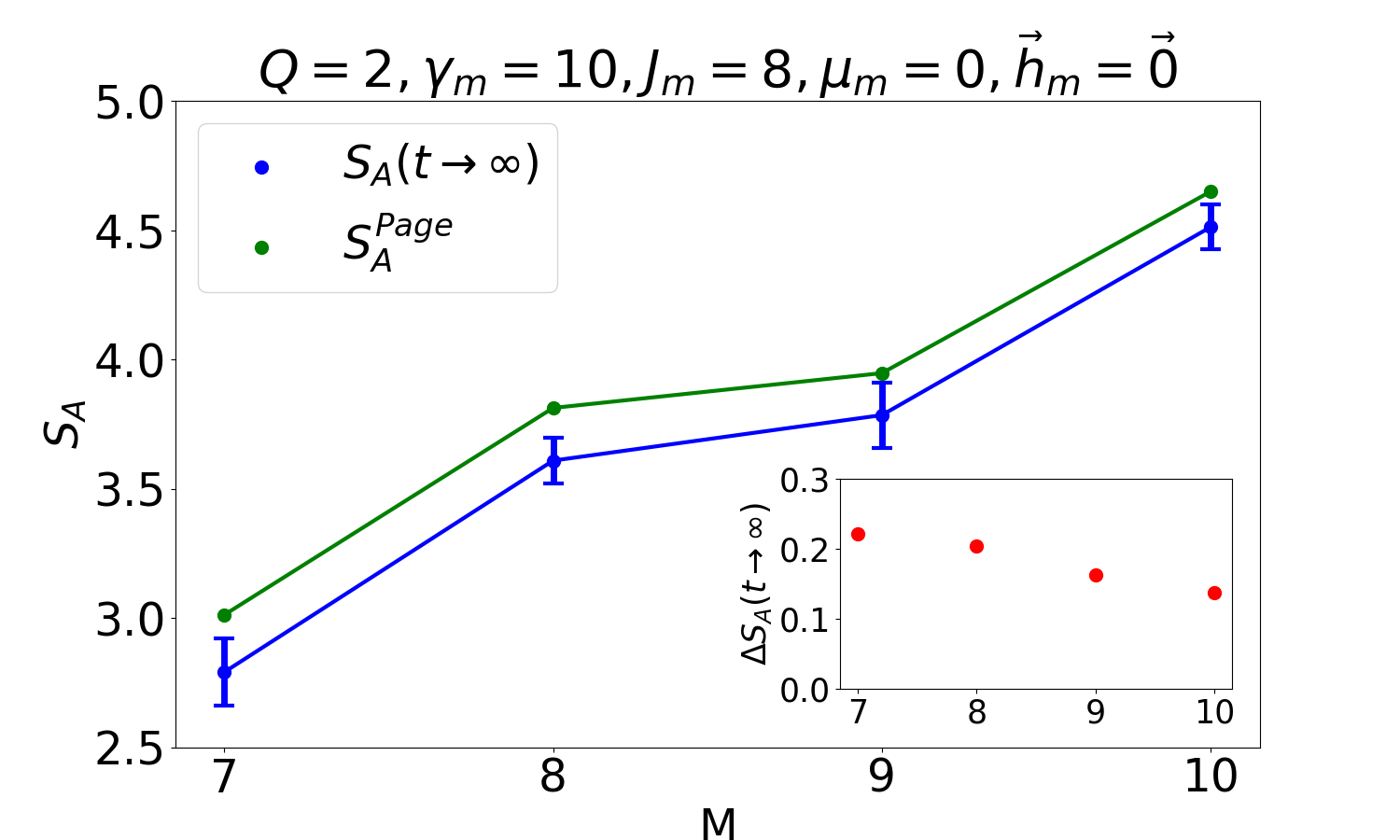}}
\caption{(a) Time evolution of EE for system sizes $M=7, 8, 9, 10$ in the $Q=2$ sector: $\gamma_m=10, J_m=8, W=0, \vec{h}_m=\vec{0}$. The Page values are obtained by averaging over 50 random many-body states. (b) Comparison of the late time EE and the Page values. Each data point and its associated standard deviation are based on 20 realizations of random initial Haar product states. The deviation decreases as $M \rightarrow \infty$, indicating thermalization in the large $M$ limit.}
\label{fig: expEE Q=2}
\end{figure}

\begin{figure}[t]
\centering
\subfigure[]{\label{fig: expEE_Q3_t10_J8_W0_hmu0}\includegraphics[width=80mm]{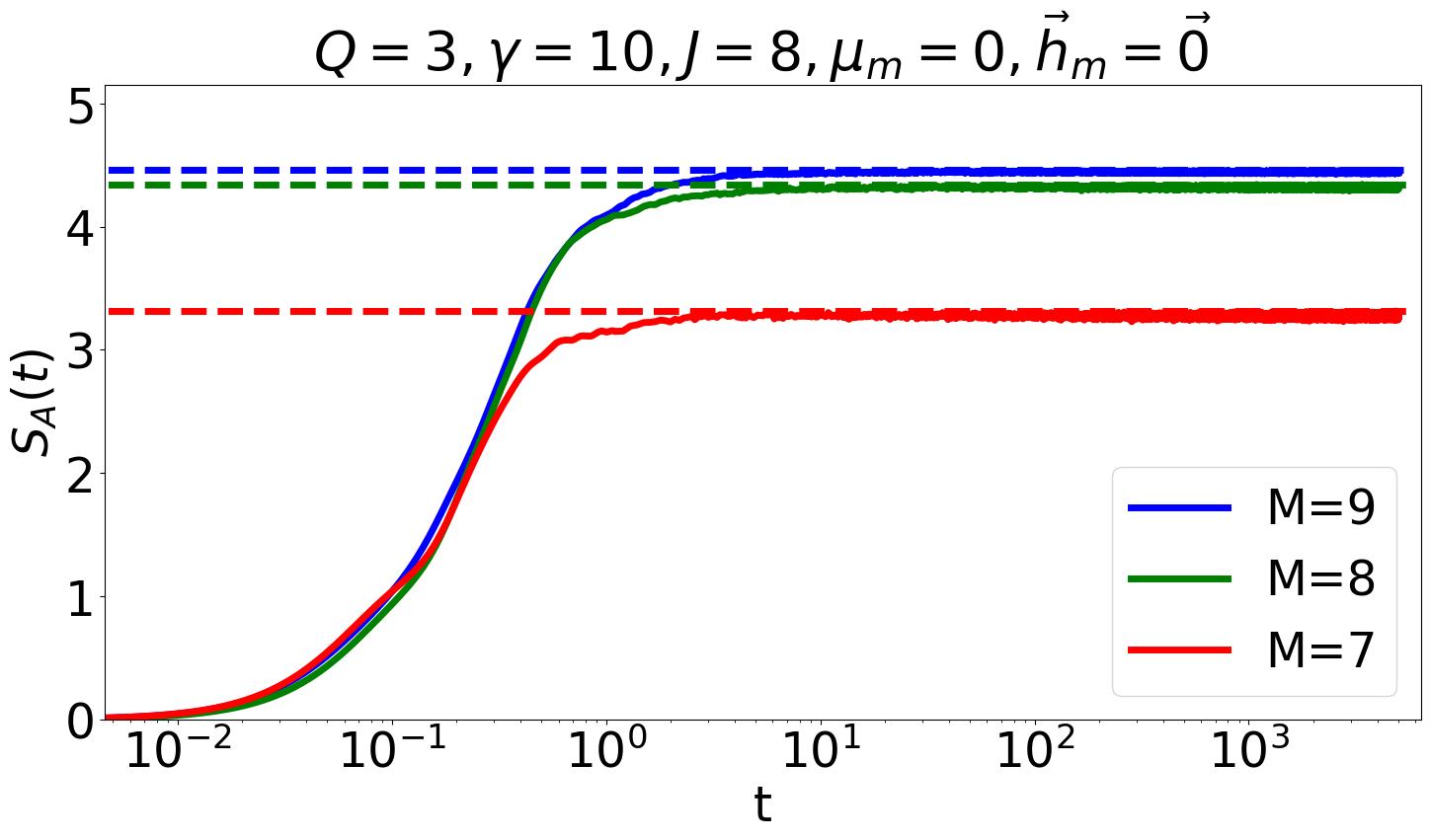}}
\subfigure[]{\label{fig: expEE_diff_Q3_t10_J8}\includegraphics[width=80mm]{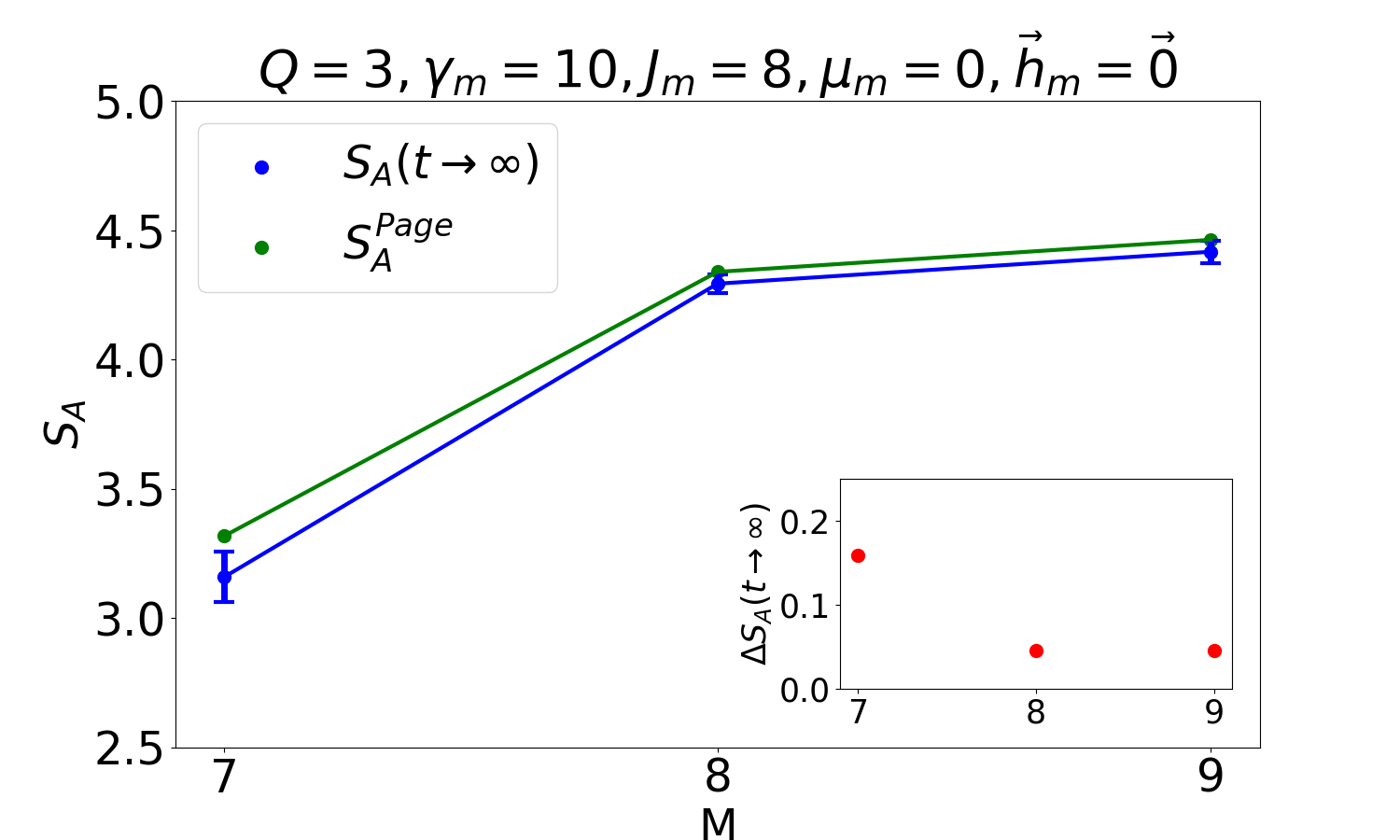}}
\caption{(a) Time evolution of EE for system sizes $M=7, 8, 9$ in the $Q=3$ sector: $\gamma_m=10, J_m=8, W=0, \vec{h}_m=\vec{0}$. The Page values are obtained by averaging over 50 random many-body states. (b) Comparison of the late time EE and the Page values. Each data point and its associated standard deviation are based on 20 realizations of random initial Haar product states. The deviation decreases as $M \rightarrow \infty$, indicating thermalization in the large $M$ limit.}
\label{fig: expEE Q=3}
\end{figure}

In this subsection, we analyze Krylov subspaces in $Q\ge 2$ charge sectors of model $H_{\bar{\gamma} \bar{J}}$ by direct numerical simulations of the EE. In these symmetry sectors, although the system has exponentially many Krylov subspaces, the ratios of the dimension of the largest Krylov subspace and the dimension of the symmetry sector no longer vanish, indicating that the system exhibits only weak fragmentation (i.e., only a violation of strong ETH). In such cases, although the system has non-thermalizing eigenstates, the number of those states is measure-zero within the entire energy spectrum, and the EE is expected to saturate the Page value in the thermodynamic limit $M \rightarrow \infty$.

In \figref{fig: expEE_Q2_t10_J8_W0_hmu0} and \figref{fig: expEE_Q3_t10_J8_W0_hmu0}, the EE curves either saturate the Page values or have only small offsets at $t \rightarrow \infty$. These small offsets can be attributed to the fact that there still exist exponentially many Krylov subspaces even in a weakly fragmented system and, strictly speaking, the offsets are expected to vanish only in the thermodynamic limit. Indeed, as shown in the insets in \figref{fig: expEE_diff_Q2_t10_J8} and \figref{fig: expEE_diff_Q3_t10_J8}, we see that the offset vanishes as the value of $M$ increases. Hence, in the thermodynamic limit, we expect that these offsets will vanish, resulting in an EE that is consistent with an ergodic system.

\subsection{Non-saturation of EE from degeneracy}

The non-saturation of EE in the previous subsection results from the level degeneracy due to the degenerate Krylov subspaces. In a given symmetry sector labeled by $Q$, if we have $I$ distinct Krylov subspaces $\mathcal{K}_{i, \delta}^{(n_i)}$ with degeneracies $d_i$ and dimensions $n_i$ ($i \in [1, I]$, $\delta \in [1, d_i]$), the total Hilbert space dimension is 
\begin{equation}
|\mathcal{H}_Q|=\sum_{i=1}^{I} n_i d_i. 
\end{equation}
A pure state $|\Psi\rangle$ in $\mathcal{H}_Q$ under time evolution can only access one state in each degenerate level subspace, thus the maximal sub-Hilbert space a pure state can access has a dimension (see Appendix.~\ref{appendix: Krylov subspace generated by arbitrary state})
\begin{equation}
|\mathcal{H}^{\text{sub}}_Q(\Psi)|=\sum_{i}^{I} n_i. 
\end{equation}
The degenerate levels $d_i>1$ result in 
\ie
|\mathcal{H}^{\text{sub}}_Q(\Psi)| < |\mathcal{H}_Q|
,\fe
leading to $S_A(t\rightarrow\infty) < S_A^{\text{Page}}$.

Here, we have shown that the presence of degenerate Krylov subspaces can also lead to an offset of the entanglement entropy away from the expected Page values. More generically, such an offset can be attributed to such degenerate subspaces, the existence of exponentially many Krylov subspaces~\cite{Sala_2020, moudgalya2019thermalization}, or a combination thereof.

%%%%%%%%%%%%%%%%%%%%%%%%%%%%%%%%%%%%%%%%%%%%%%%%%%%%
%%%%%%%%%%%%%%%%%%%%%%%%%%%%%%%%%%%%%%%%%%%%%%%%%%%%
%%%%%%%%%%%%%%%%%%%%%%%%%%%%%%%%%%%%%%%%%%%%%%%%%%%%

\section{Effect of a Non-Trivial Magnetic Field}
\label{section: With magnetic field}

We now turn to the effects of turning on a non-trivial magnetic field for the spins, which we generically expect to break the fragmentation structure we have established in prior sections, thereby restoring ergodicity within each charge sector. Here, we will focus only on the $Q = 1$ sector as it captures the richest essential HSF physics of the model. We begin by considering the influence of a uniform magnetic field in the $z$ direction, after which we investigate the effect of a random magnetic field on the dynamics, where we find evidence of an MBL transition. We note that the interplay of fragmentation and disorder was previously studied in other models in Ref.~\cite{herviouMBL2021, liu20232d}, which found signatures of an MBL transition within several Krylov subspaces.

\subsection{$H_{\bar{\gamma} \bar{J} \bar{h}}$: The breakdown cross-over}
\label{subsection: The breakdown cross-over}

\begin{figure}[t]
\centering
\subfigure[]{\label{fig: expEE_t10_J8_W0_hmu5}\includegraphics[width=80mm]{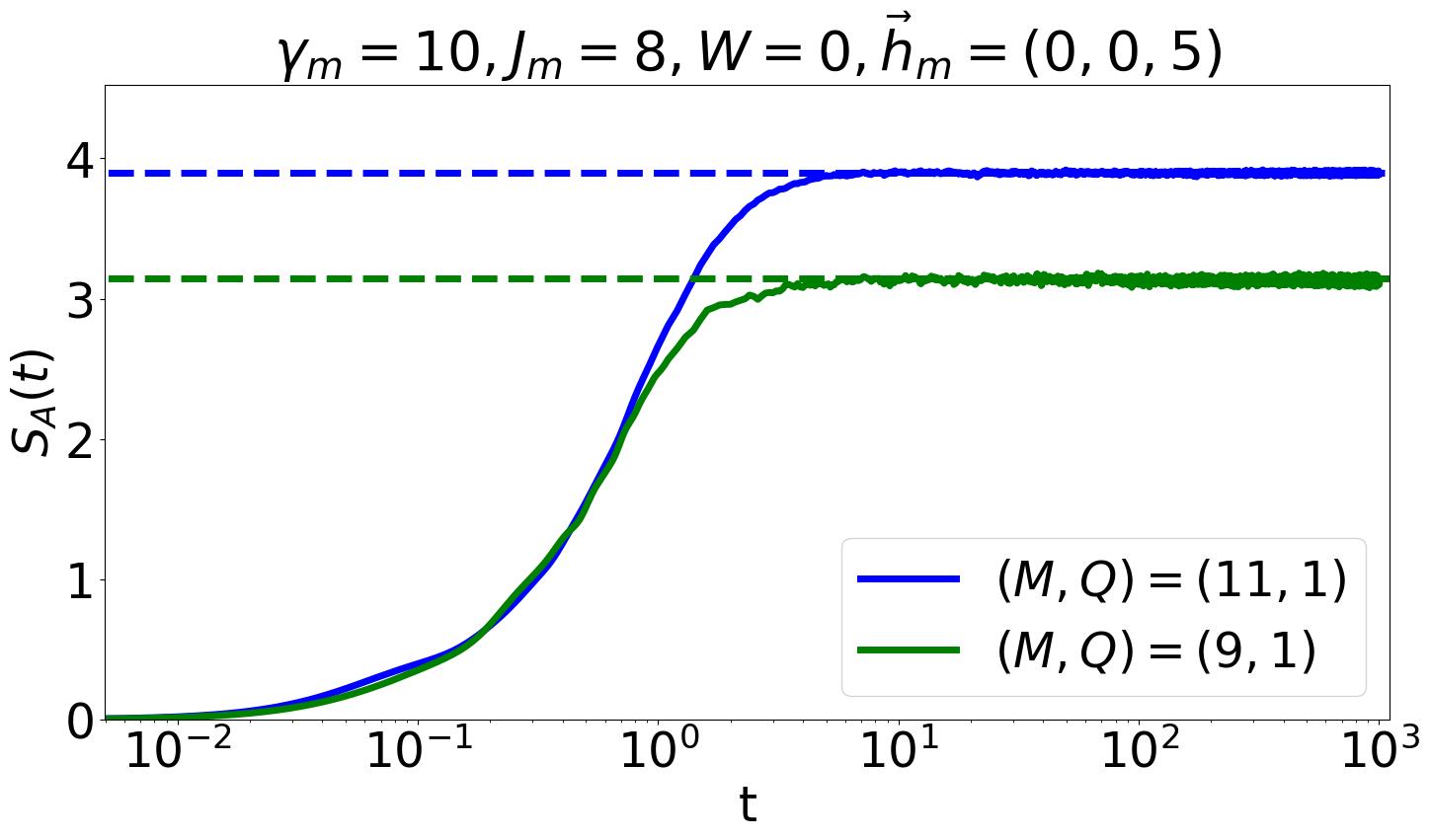}}
\subfigure[]{\label{fig: expNum_fermion_M11_Q1_t10_J8_W0_hmu0}\includegraphics[width=60mm]{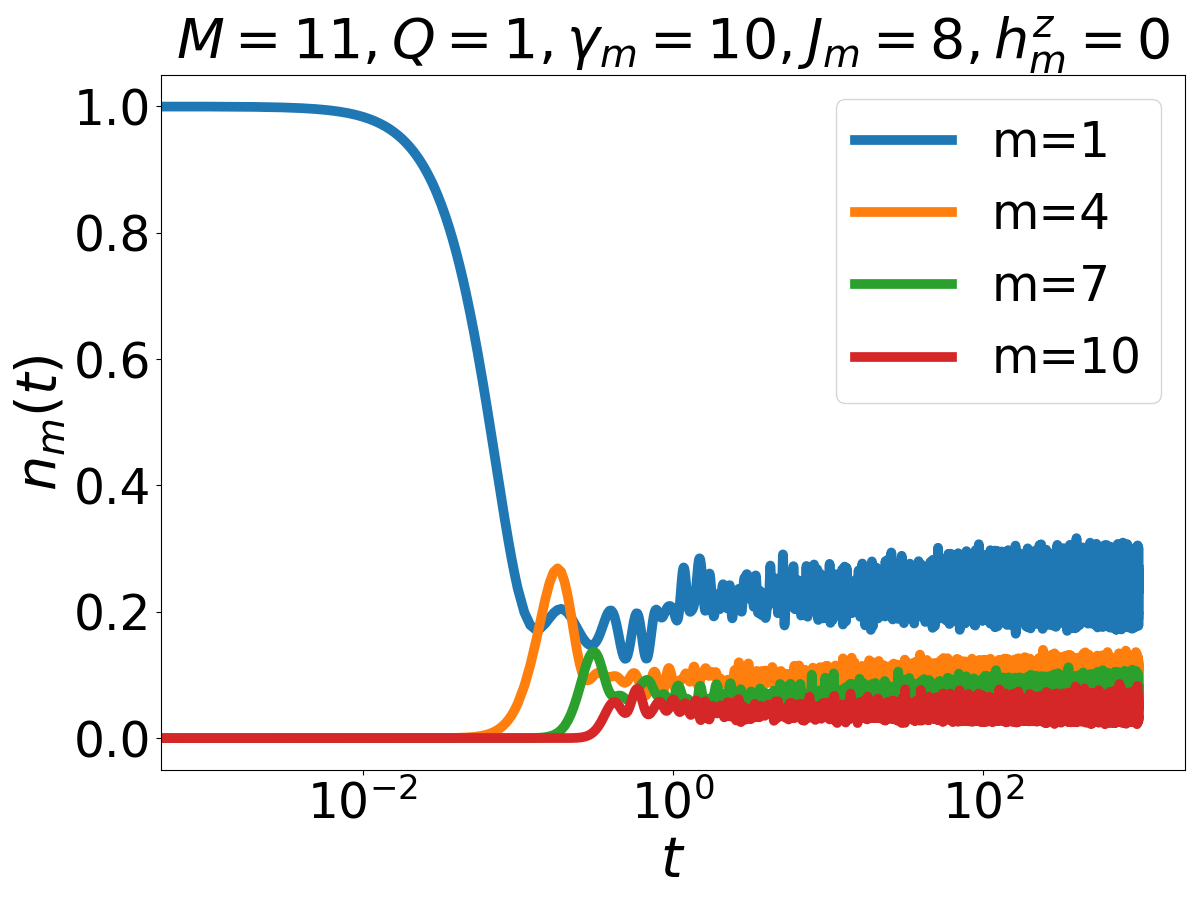}}
\subfigure[]{\label{fig: expNum_fermion_M11_Q1_t10_J8_W0_hmu5}\includegraphics[width=60mm]{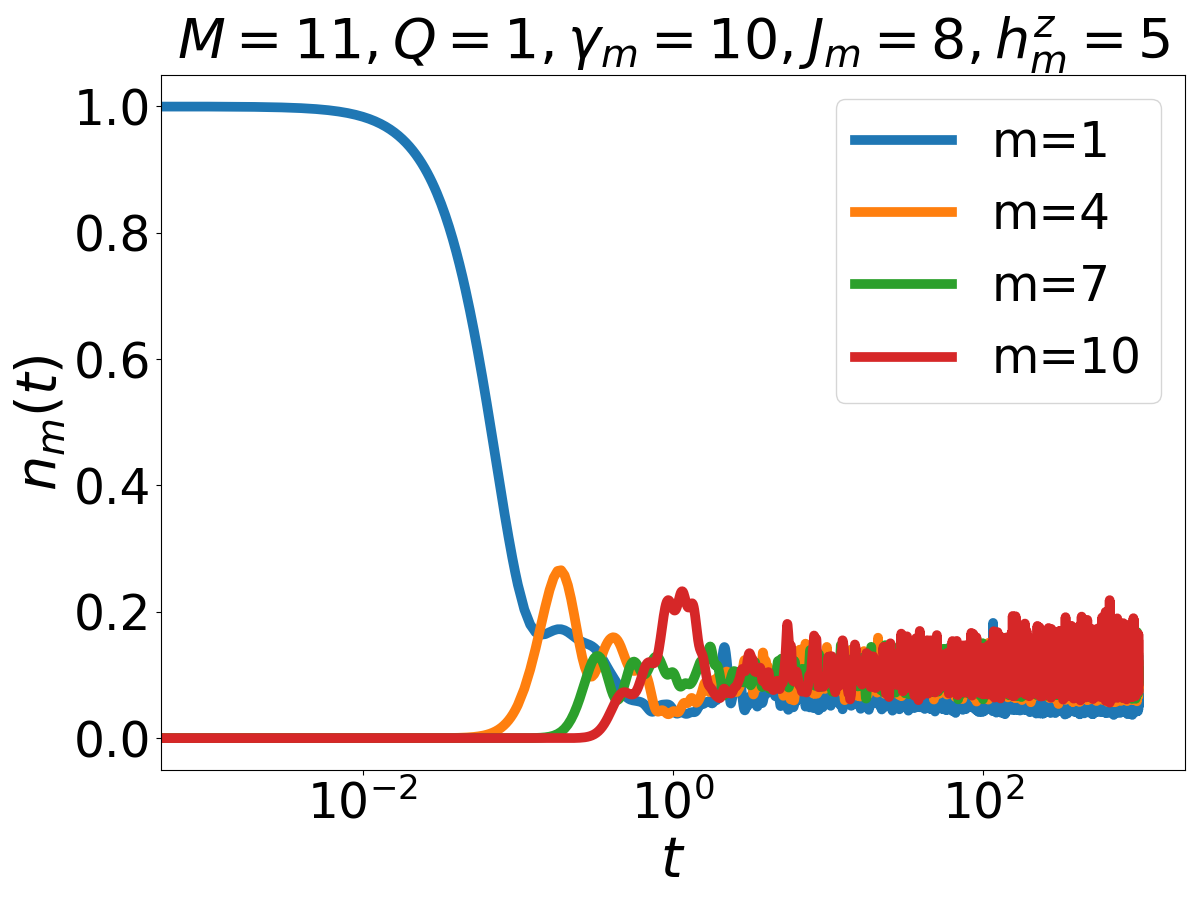}}
\caption{(a) Time evolution of EE of a model with $M=11, \gamma_m=10, J_m=8, h_m^z=5$ in the $Q=1$ sector. The EE saturates the Page value (dashed line), indicating the system is thermal. (b)-(c) Time evolution of $n_m(t)$ at different sites ($m=1, 4, 7, 10$) of a model with $M=11, \gamma_m=10, J_m=8$ in the $Q=1$ sector, with magnetic fields (b) $h_m^z=0$ and (c) $h_m^z=5$. Note that the subscript $m$ has a different meaning for $J_m$ (all $m$) versus $n_m$ (fixed $m$).}
\label{fig: expEE and expNum for h!=0}
\end{figure}

We consider first the role of switching on a small ({with ``smallness" to be defined further on in this section}) \emph{uniform} magnetic field in the z-direction: $\vec{h}_m = (0, 0, h^z)$. As anticipated, the presence of such a non-zero $h^z$ breaks the Hilbert space fragmentation structures that we showed are present for $\vec{h}=\vec{0}$. To see this directly, observe that the root states $|\Psi\rangle$ shown in Sec.~\ref{section: Without magnetic field: Q=1} comprise components with different numbers of spin-up (spin-down) states $N_\uparrow$ ($N_\downarrow$) in the $z$ direction. Once we apply the Hamiltonian $H$ with $h^z \neq 0$ onto a root state, these different components will acquire different coefficients, and $H |\Psi\rangle$ will mix with states in other Krylov subspaces, making the original Krylov subspaces no longer closed by themselves. This argument applies generally to systems of any length $M$. Therefore, we may expect that a non-zero magnetic field $h_z$ causes the system to thermalize.

Numerically, we can verify the thermalization and the breaking of HSF by studying the evolution of the EE. \figref{fig: expEE_t10_J8_W0_hmu5} shows the evolution of EE from a random Haar product state as a function of time. Contrary to $h^z=0$ models, both curves reach their expected maximum Page value as $t\rightarrow\infty$, indicating that the state will explore the entire symmetry sector and that the system is thermalizing.

Thermalization can also be inferred by examining the time evolution of the fermion number 
\ie
n_m(t)=\langle \psi(t) | \hat{n}_m | \psi(t) \rangle
\fe
starting from an initial state with one fermion at the first site (i.e., $n_1(t=0)=1$, $n_{m>1}(t=0)=0$), and with an all-down spin configuration:
\ie
\label{eqn: initial state for spin configuration evolution}
| \psi(t=0) \rangle = c_1^\dagger | \Omega \rangle \otimes |s_2=0\rangle \otimes \cdots \otimes |s_M=0\rangle
.\fe
\figref{fig: expNum_fermion_M11_Q1_t10_J8_W0_hmu5} shows the evolution of $n_m(t)$ in the $(M, Q)=(11, 1)$ sector, whose $\vec{h}_m=(0,0,5)$. At $t\rightarrow \infty$, all sites have approximately equal $n_m(t)$, consistent with the system having reached a thermal state. In contrast, \figref{fig: expNum_fermion_M11_Q1_t10_J8_W0_hmu0} shows the evolution of $n_m(t)$ with $\vec{h}_m=\vec{0}$. At $t\rightarrow \infty$, the system retains the memory of its initial state i.e., the occupation on the first site $n_1(t\rightarrow\infty)$ has more weight compared to that on other sites. This non-uniform distribution of $n_m(t)$ across different $m$ is consistent with a non-ergodic state at vanishing magnetic field and does not persist once the magnetic field is switched on.

\begin{figure}[t]
\centering
\subfigure[]{\label{fig: expNum_M11_Q1_t10_J8_hmu_pauli}\includegraphics[width=80mm]{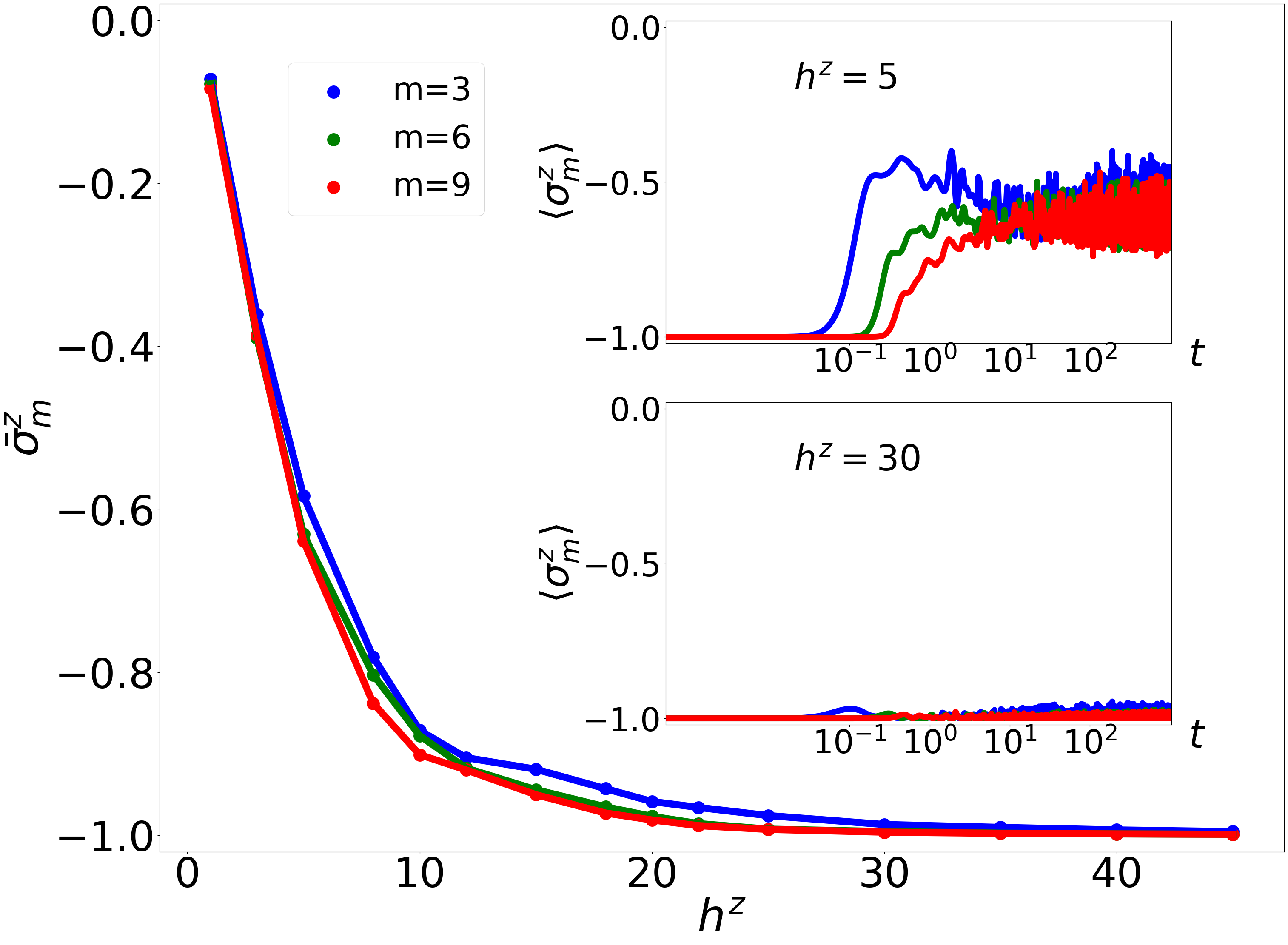}}
\subfigure[]{\label{fig: expNum_M11_Q1_t10_J8_hmu_fermion}\includegraphics[width=85mm]{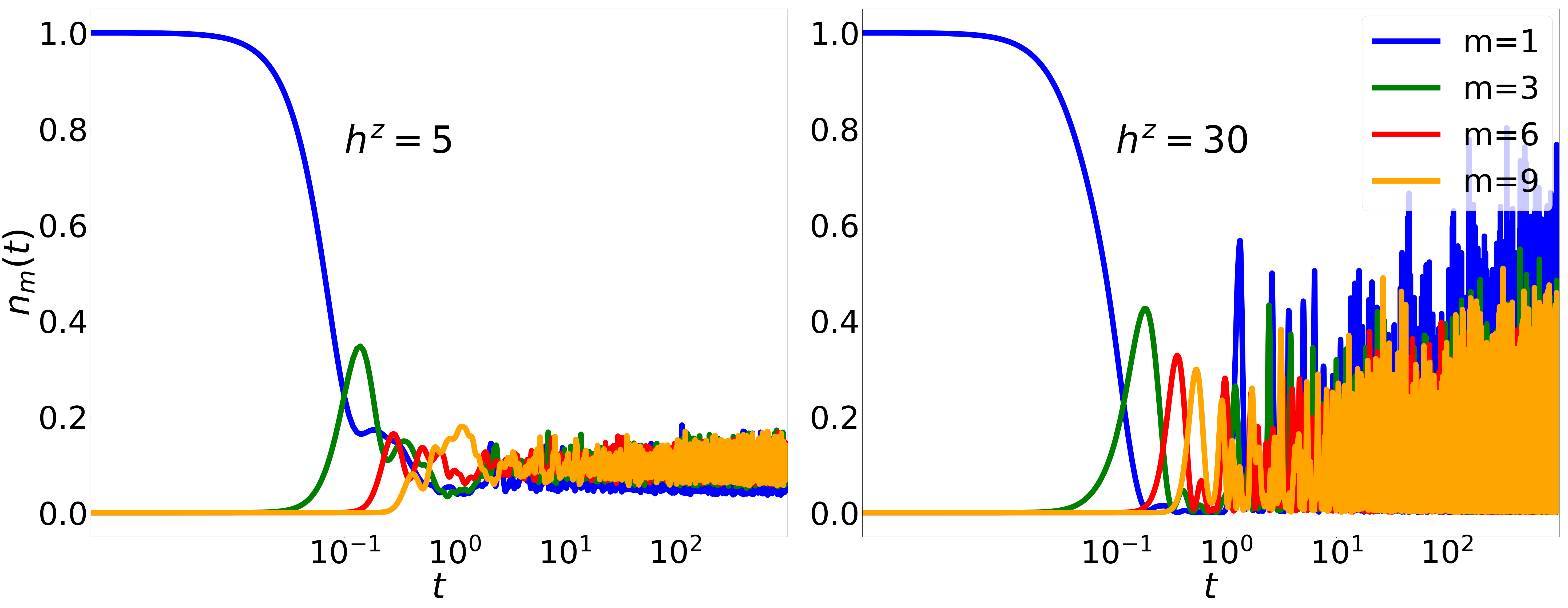}}
\caption{The dynamics of a model with $(M, Q)=(11, 1)$, $\gamma_m=10$, $J_m=8$. The initial state contains a fermion at the first site and a all-down spin chain, see Eq.~\eqref{eqn: initial state for spin configuration evolution}. (a) The main figure shows the time-averaged spin angular momentum along the $z$-axis on the sites $m=3, 6, 9$ as a function of $h^z$. The averaging time range is $t \in [900, 1000]$. The upper inset shows the time evolution of the spin chain for small $h^z=5$, where the the spin configuration evolves from all-down to $\langle \sigma_m^z \rangle \sim -0.45$. The lower inset shows the time evolution of the spin chain for large $h^z=30$, where the spin configuration remains all-down. (b)-(c) The figures show the time evolution of $n_m$ on the sites $m=1, 3, 6, 9$ for $h^z=5$ and $h^z=30$. The fermionic chain remains dynamic regardless of the value of $h^z$.}
\label{fig: expNum_M12_Q1_t10_J8_hmu}
\end{figure}

Let us now examine in more detail how the dynamics of the system depends on the strength $h^z$ of the uniform magnetic field. For small $h^z$, starting with the initial state Eq.~\eqref{eqn: initial state for spin configuration evolution} with an spins pointing down, the evolution of the spin chain is shown in the upper inset of \figref{fig: expNum_M11_Q1_t10_J8_hmu_pauli}. At $t>0$, the spin chain flips and becomes a mixture of spin-up and spin-down orientations. As $h^z$ increases, the system crosses over from a dynamical spin chain to a frozen spin configuration: at larger values of $h^z$, the evolution of the spin chain is shown in the lower inset of \figref{fig: expNum_M11_Q1_t10_J8_hmu_pauli}, where the spin configuration remains all-down at $t>0$, and is hence non-dynamical.

To quantitatively study the dependence of the dynamics of the spin chain on $h^z$, we define the late-time time-averaged spin angular momentum along the $z$-axis:
\ie
\bar{\sigma}_{m}^z = \frac{1}{T} \int_{\tilde{t}}^{\tilde{t}+T} \langle \psi(t) | \sigma_{m}^z | \psi(t) \rangle dt
,\fe
where $\tilde{t}$ is large enough to capture the late-time behavior. \figref{fig: expNum_M11_Q1_t10_J8_hmu_pauli} shows $\bar{\sigma}_{m}^z$ at selected sites $m$ for different $h^z$: for small $h^z$, $\bar{\sigma}_{m}^z > -1$, indicating that the spin chain remains dynamical at late-times (that is, the spins are not frozen in some fixed configuration). As $h^z$ increases, $\bar{\sigma}_{m}^z$ approaches $-1$, implying that the spin chain is static since all spins remain frozen in the spin-down configuration at late times. Therefore, there is a cross-over from a dynamical spin chain to an almost static spin chain as $h^z$ is increased.

On the other hand, let us consider the fermion chain. In \figref{fig: expNum_M11_Q1_t10_J8_hmu_fermion}, we show the evolution of $n_m(t)$ from the initial state Eq.~\eqref{eqn: initial state for spin configuration evolution}. Initially, only the first site is occupied by a fermion, hence $n_1=1$ and $n_{i>1}=0$. As time progresses, both large values of $h^z$ and small values of $h^z$ demonstrate a decrease in $n_1$ and an increase in $n_{i>1}$. Thus, unlike the spins, the fermions remain dynamical no matter what value $h^z$ takes.

This phenomena can be understood by examining the energy levels of the Hamiltonian. First, note that $H_\gamma$ drives the time evolution of the fermions, while $H_J$ drives both the time evolution of the fermions and spins. Therefore, to have non-trivial dynamics for the spins, $H_J$ must contribute non-trivially. We decompose the Hamiltonian into two parts:
\ie
H = (H_h + H_\gamma) + H_J = H_0 + H_J
,\fe
and treat $H_J$ as a perturbation in the limit $h^z \gg \gamma, J$. The energy eigenstates of $H_0$ are
\ie
\label{energy eigenstates of H_0}
| \psi_{\vec{s}, k}^{(0)} \rangle = \sum_{m=1}^{M} \sqrt{\frac{2}{M+1}} \sin \left( \frac{\pi k m}{M+1} \right) | m ; s_2, \cdots, s_M \rangle
,\fe
which corresponds to the energy level
\ie
\label{eqn: eigen-energy of H0}
E_{\vec{s}, k}^{(0)} = h^z(N_\uparrow - N_\downarrow) + 2 \gamma \cos\left(\frac{\pi k}{M+1}\right)
,\fe
where $N_\uparrow$ denotes the number of spin-up states in $s_2, \cdots, s_M$, and $N_\uparrow + N_\downarrow = M-1$. Since every configuration of spins $s_2, \cdots, s_M$ with the identical $N_\uparrow$ and $N_\downarrow$ has the same energy, the degeneracy of Eq.~\eqref{eqn: eigen-energy of H0} is $\frac{(M-1)!}{N_\uparrow! N_\downarrow!}$. From Eq.~\eqref{energy eigenstates of H_0}, we see that the eigenstate is an extended state in the fermionic sector independent of the ratio between $h^z$ and $\gamma$. Therefore, the fermion chain is dynamical for all values of $h^z$, consistent with the numerics. On the other hand, since $H_J$ flips one spin state, the value of $J$ has to be large enough to overcome the energy gap between the energy levels that differs by one spin state in order to activate the spins. According to Eq.~\eqref{eqn: eigen-energy of H0}, the presence of $\gamma$ lowers the energy gap from $2h^z$ to $2h^z - 4\gamma$; therefore, the crossover from a dynamical spin chain to a static spin chain happens approximately when
\ie
J \sim 2 h^z - 4\gamma
.\fe
At $h^z \lesssim 2\gamma + \frac{J}{2}$, the spin chain is dynamical, while at $h^z \gtrsim 2\gamma + \frac{J}{2}$, the spin chain is static.

Finally, we point out that the contributions to the shifts in the energy level due to $H_\gamma$ and $H_J$ are different. Since the effect of $H_J$ changes the spin configuration, any eigenket $| \psi_{\vec{s}, k}^{(0)} \rangle$ (see Eq.~\eqref{energy eigenstates of H_0}) has a different spin configuration from $H_J | \psi_{\vec{s}, k}^{(0)} \rangle$, and we have
\ie
\langle \psi_{\vec{s}, k}^{(0)} | H_J | \psi_{\vec{s}, k}^{(0)} \rangle = 0
,\fe
which implies that the first order shift in the energy level $E_{\vec{s}, k}^{(1)} = 0$. The energy eigenvalue is thus
\ie
E_{\vec{s}, k} = h^z(N_\uparrow - N_\downarrow) + 2\gamma \cos\left(\frac{\pi k}{M+1}\right) + \mathcal{O}\left(\frac{J^2}{h^z}\right)
.\fe
In the limit $h^z \gg J \sim \gamma$, the effect of $\gamma$ appears at $\mathcal{O}(\gamma)$, while the effect of $J$ appears at $\mathcal{O}(J^2)$. This implies that at large $h^z$, the dynamics are dictated predominantly by $H_h$ and $H_\gamma$, while the effect of $H_J$ can be omitted (at least to order $\mathcal{O}(J^2)$)). Since both $H_h$ and $H_\gamma$ preserve the spin configuration $s_2, \cdots, s_M$, the evolution will result in an almost static spin configuration.

\subsection{$H_{\bar{\gamma} \bar{J} h}$: random magnetic field induced MBL}
\label{subsection: MBL from the magnetic field}

So far, we have shown that switching on a non-zero uniform magnetic field for the spins causes the Krylov subspaces to mix, thereby leading the $Q = 1$ charge sector to obey the ETH. We now consider the effect of a random magnetic field and numerically find compelling evidence for an MBL transition. First, let us define the relevant diagnostics. 

An important and often used indicator of chaos is the level spacing statistics (LSS), which is defined as the statistical distribution $p(\delta_E)$ of the nearest neighboring energy level spacing
\ie
\delta_E(\alpha) = E(\alpha+1) - E(\alpha)
.\fe
The LSS of integrable systems shows a Poisson distribution~\cite{Berry1977LevelCI}
\ie
p(\delta_E) \propto e^{-\delta_E/\lambda_0}
,\fe
where $\lambda_0$ is a system-dependent constant. On the other hand, the LSS of chaotic systems shows a Wigner-Dyson (WD) distribution
\ie
p(\delta_E) \propto \delta_E^n e^{-\delta_E^2/\lambda_0^2}
,\fe
where $n=1, 2, 4$ corresponds to the Gaussian orthogonal ensemble (GOE), Gaussian unitary ensemble (GUE), and Gaussian sympletic ensemble (GSE) respectively~\cite{bohigas1984,wigner1967,dyson1970}. The mean level spacing ratio $\langle r \rangle$ enables us to determine the distribution of level spacing more quantitatively, which is defined as the mean of the following ratio~\cite{oganesyan2007mbl}:
\ie
0 \le r(\alpha) = \min \left( \frac{\delta_E(\alpha+1)}{\delta_E(\alpha)}, \frac{\delta_E(\alpha)}{\delta_E(\alpha+1)} \right) \le 1
.\fe
For Poisson, GOE, GUE, GSE distributions, the level spacing ratio $\langle r \rangle$ are $0.39, 0.53, 0.60, 0.67$, respectively~\cite{atas2013distribution}, and a change in the $r$-ratio is typically expected to signal a phase transition (or cross-over) from an ergodic to a non-ergodic (or vice-versa) phase (or regime).

\begin{figure}[t]
\centering
\subfigure[]{\label{fig: exp_M12_Q1_t10_J8_hzR}\includegraphics[width=38mm]{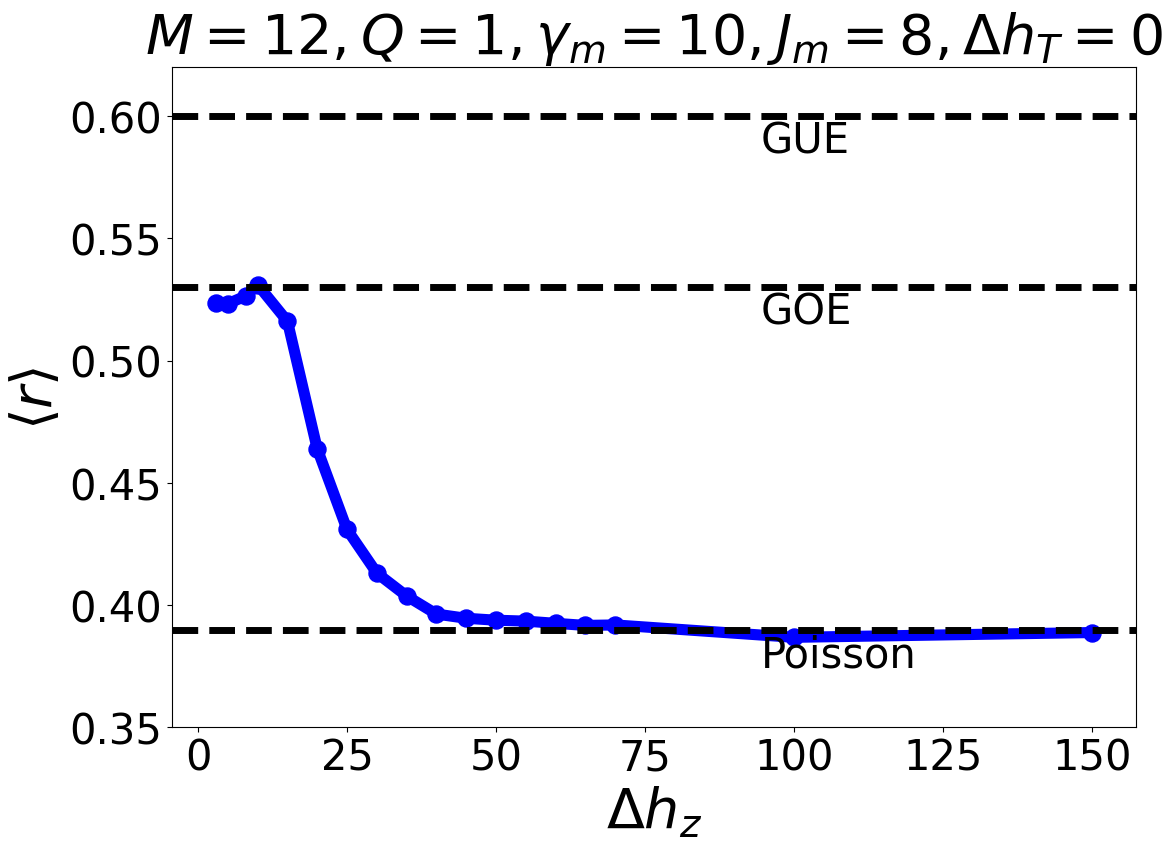}}
\subfigure[]{\label{fig: exp_M12_Q1_t10_J8_hzhT}\includegraphics[width=45mm]{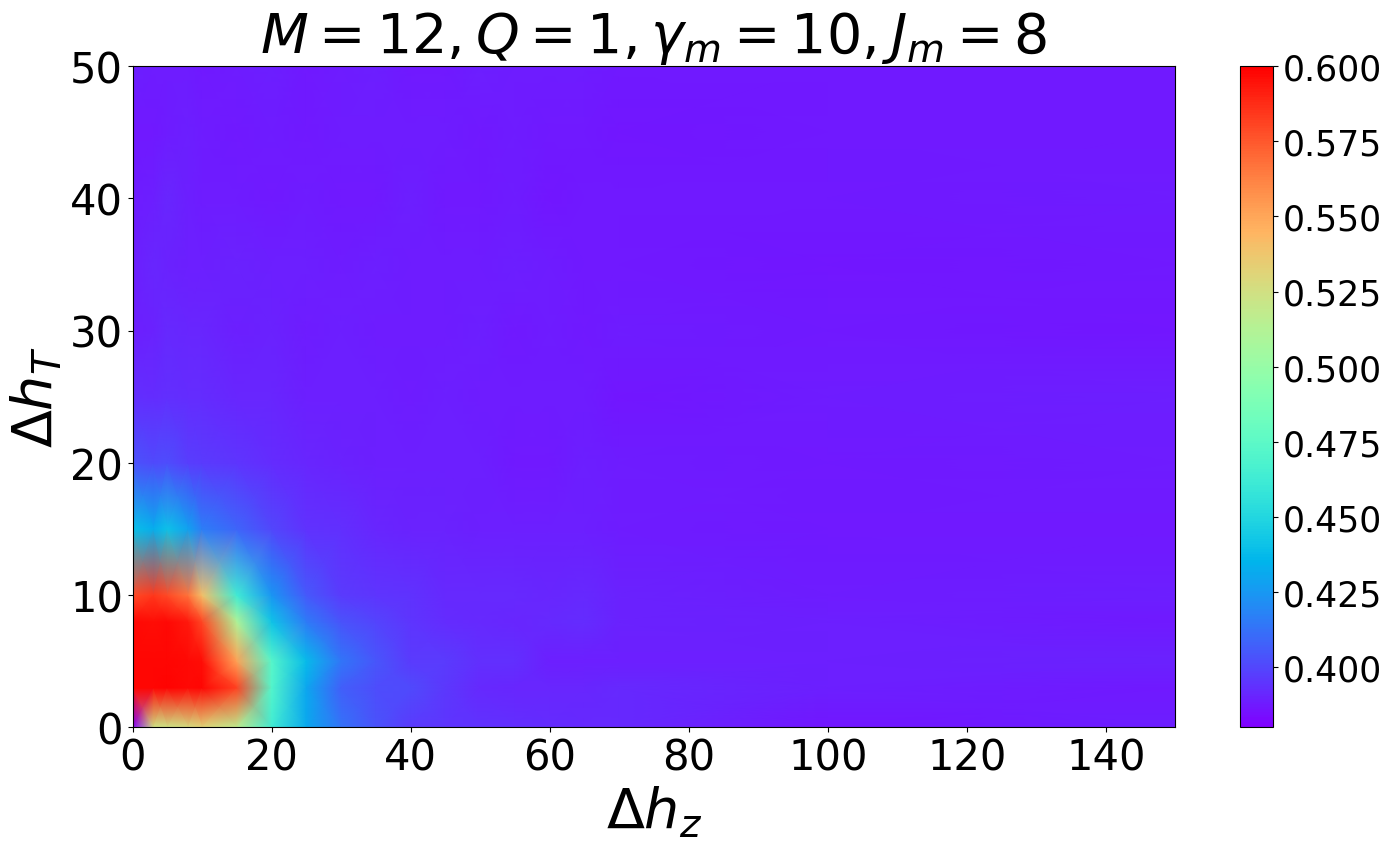}}
\subfigure[]{\label{fig: exp_M10_Q2_t10_J8_hzR}\includegraphics[width=38mm]{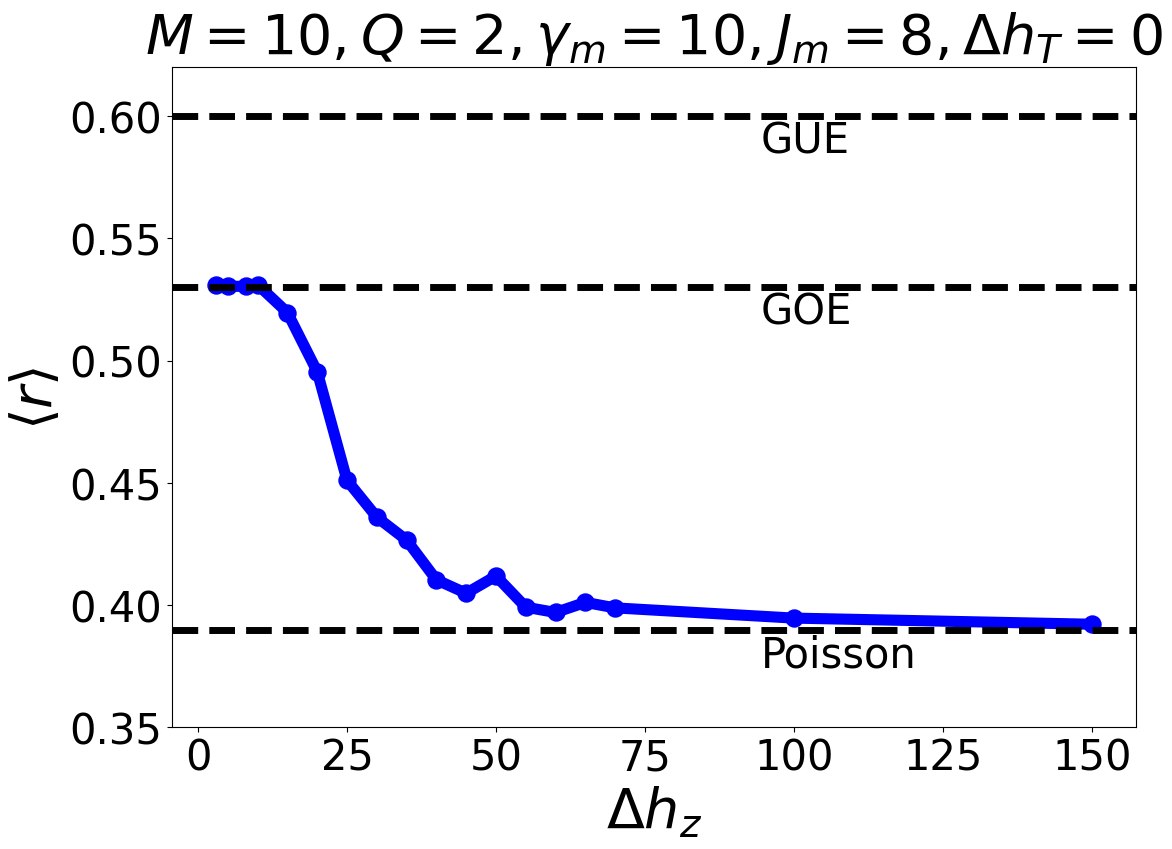}}
\subfigure[]{\label{fig: exp_M10_Q2_t10_J8_hzhT}\includegraphics[width=45mm]{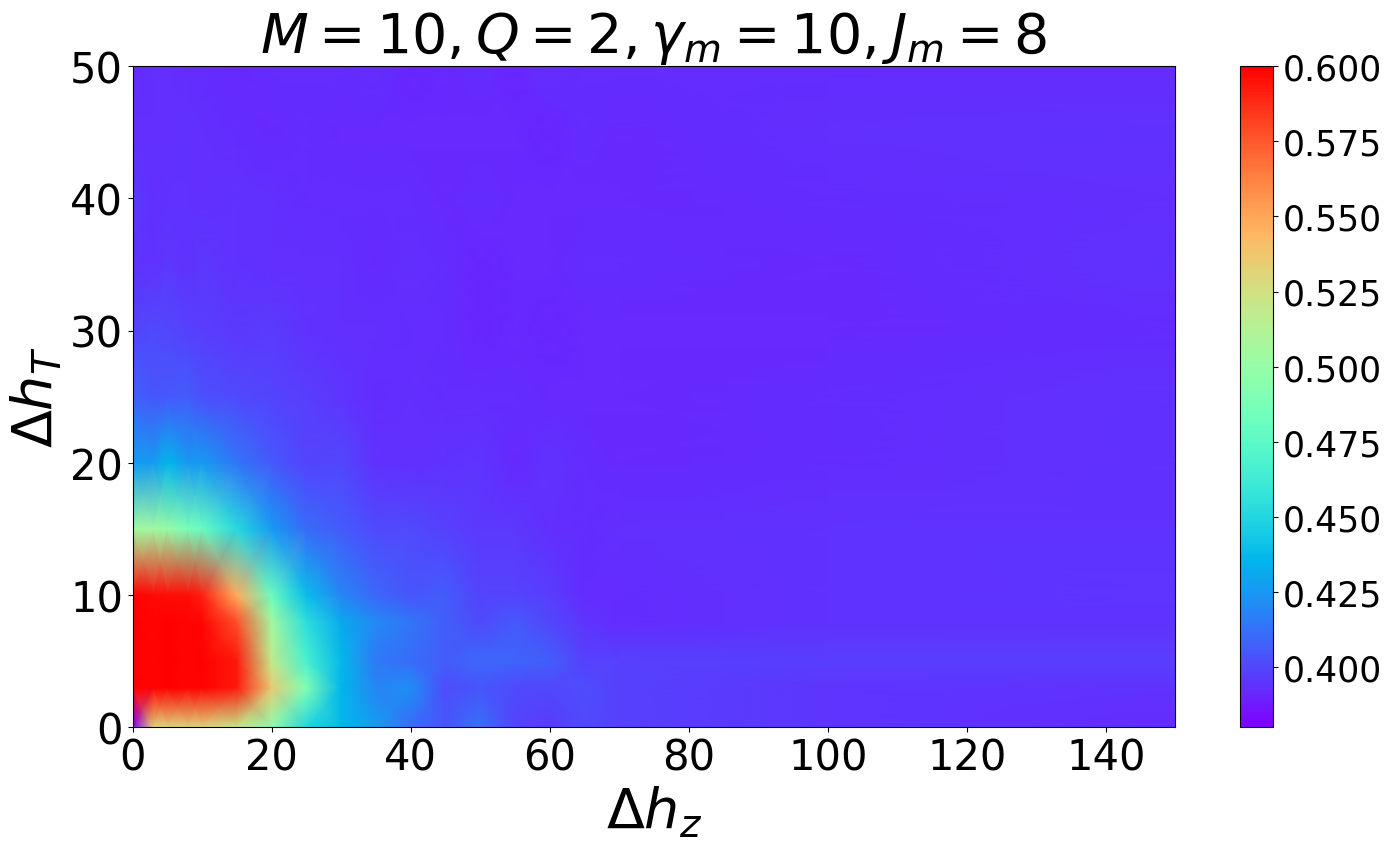}}
\subfigure[]{\label{fig: exp_M9_Q3_t10_J8_hzR}\includegraphics[width=38mm]{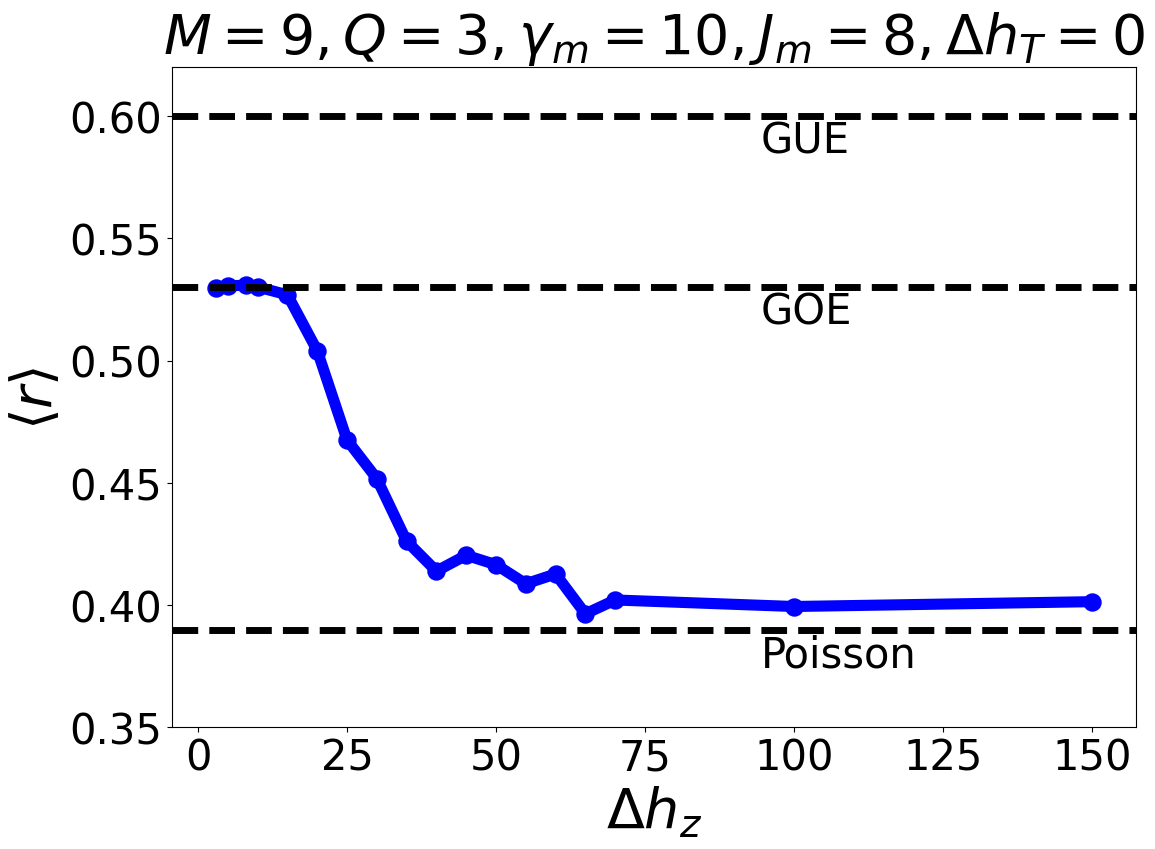}}
\subfigure[]{\label{fig: exp_M9_Q3_t10_J8_hzhT}\includegraphics[width=45mm]{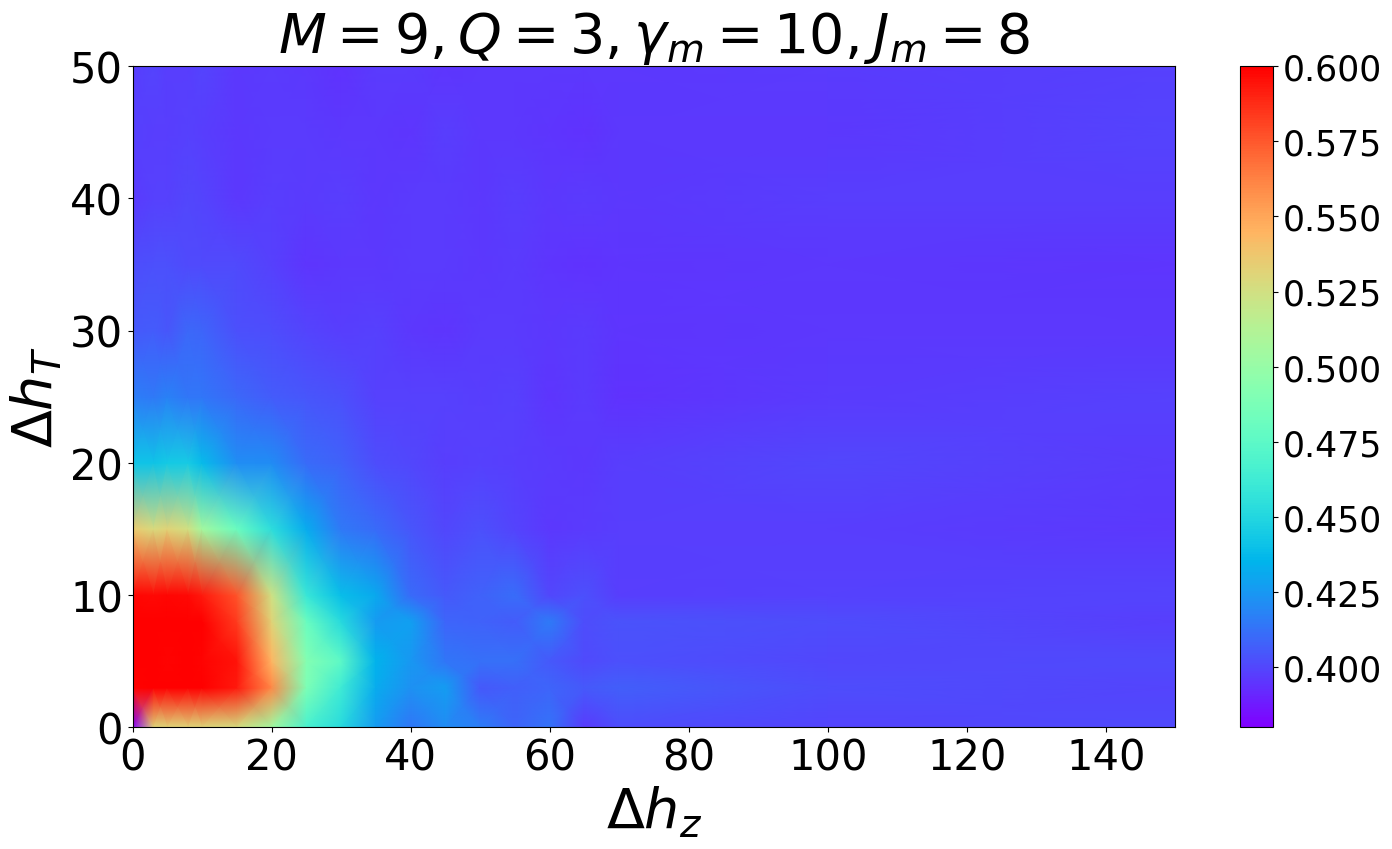}}
\caption{The level spacing statistics $r$-ratio for a (a-b) $(M, Q)=(12, 1)$ model, (c-d) $(M, Q)=(10, 2)$ model, and (e-f) a $(M, Q)=(9, 3)$ model with uniform $\gamma_m=10$, $J_m=8$, and a random magnetic field $\vec h = (h^x, h^y, h^z)$, where $\Delta h^x = \Delta h^y = \Delta h_T$. Each data point is averaged over 15 realizations.}
\label{fig: r-ratio of hzhT}
\end{figure}

In \figref{fig: r-ratio of hzhT}, we show the mean level spacing ratio for models with random magnetic fields $\vec{h} = (h^x, h^y, h^z)$, whose components are sampled from normal distributions of mean $0$ and standard deviation $\Delta h^T$ for $h^x$ and $h^y$ and $\Delta h^z$ for $h^z$. For $\Delta h_T = 0$, the values are shown in \figref{fig: exp_M12_Q1_t10_J8_hzR}, \figref{fig: exp_M10_Q2_t10_J8_hzR}, and \figref{fig: exp_M9_Q3_t10_J8_hzR}. As shown in these figures, we numerically observe a crossover from $\langle r \rangle \sim 0.53$ to $\langle r \rangle \sim 0.39$ as $\Delta h^z$ is increased i.e., the level spacing statistics smoothly crossover from a GOE (chaotic) distribution to a Poisson (non-ergodic) distribution. This crossover persists even as $\Delta h_T$ is increased (see \figref{fig: exp_M12_Q1_t10_J8_hzhT}, \figref{fig: exp_M10_Q2_t10_J8_hzhT}, and \figref{fig: exp_M9_Q3_t10_J8_hzhT}). However, in contrast to the $\Delta h_T=0$ case, the $r$-ratio now changes from $\langle r \rangle \sim 0.60$ to $\langle r \rangle \sim 0.39$, signalling a crossover from a GUE distribution to a Poisson distribution when the random magnetic field is not along the $z$-axis. For large enough $\Delta h_T$, there is no longer a crossover as a function of $h^z$ and $\langle r \rangle \sim 0.39$ regardless of $\Delta h^z$, indicating that the system is always in a localized phase.

Note that if we choose $h^y=0$ while keeping $h^x \neq 0$ and $h^z \neq 0$, the level spacing statistics will exhibits a GOE distribution rather than GUE. This is because in this case, the system possesses an additional antiunitary symmetry (see, e.g., Ref. \cite{PhysRevB.93.174202}): time reversal plus a $\pi$ rotation symmetry about the $y-$axis, i.e., 
\ie
(\sigma^x, \sigma^y, \sigma^z) \rightarrow (\sigma^x, -\sigma^y, \sigma^z)
.\fe
A similar situation occurs when we choose $h^x=0$, $h^y \neq 0$, and $h^z \neq 0$, which will result in a GOE distribution as well.

Finally, we note that at $\Delta h^z \rightarrow 0$ and $\Delta h^T \rightarrow 0$, there is a small region that is neither Poisson distributed nor WD distributed. This regime corresponds to the zero magnetic field case where we have established the presence of HSF in prior sections; we do not expect conventional level statistics here due to the exponentially many fragmented sectors~\cite{moudgalya2019thermalization}.

\subsection{$H_{\bar{\gamma} \bar{J} \mu h}$: disorder induced MBL}
\label{subsection: MBL from disorder W}

%{\color{cyan} BL: Do we have a similar entanglement entropy time evolution calculation? e.g., give a number of examples in the MBL phase, showing $S_A(t)\propto \ln t$ and saturates below Page value.}

\begin{figure}[t]
\centering
\subfigure[]{\label{fig: exp_M12_Q1_t10_hz10_JW}\includegraphics[width=70mm]{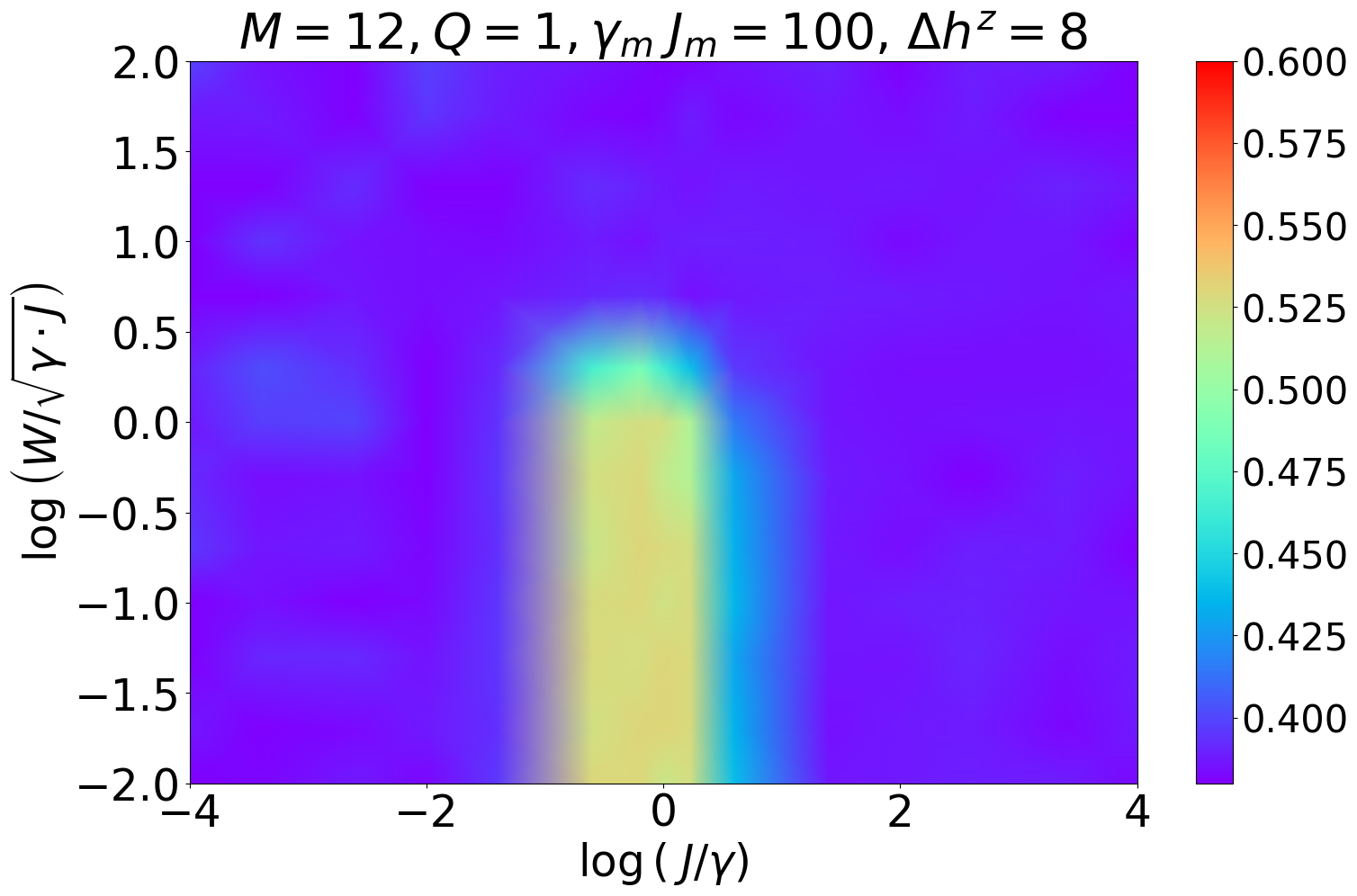}}
% \subfigure[]{\label{fig: exp_M12_Q1_tJ100_hz3_hT3_JW}\includegraphics[width=40mm]{Figures/exp_M12_Q1_tJ100_hz3_hT3_JW.png}}
\subfigure[]{\label{fig: exp_M10_Q2_t10_hz10_JW}\includegraphics[width=70mm]{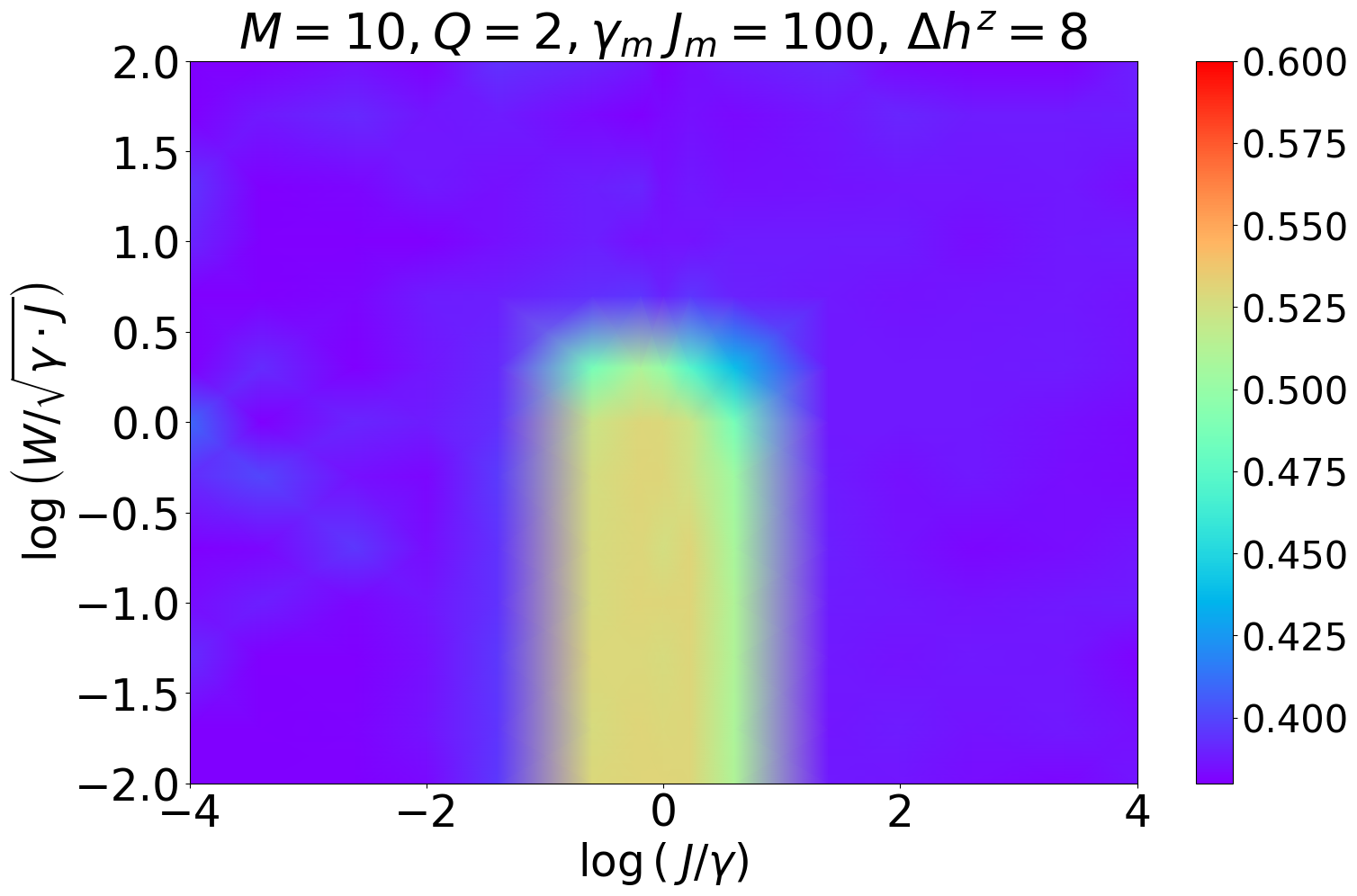}}
% \subfigure[]{\label{fig: exp_M10_Q2_tJ100_hz3_hT3_JW}\includegraphics[width=40mm]{Figures/exp_M10_Q2_tJ100_hz3_hT3_JW.png}}
\subfigure[]{\label{fig: exp_M9_Q3_t10_hz10_JW}\includegraphics[width=70mm]{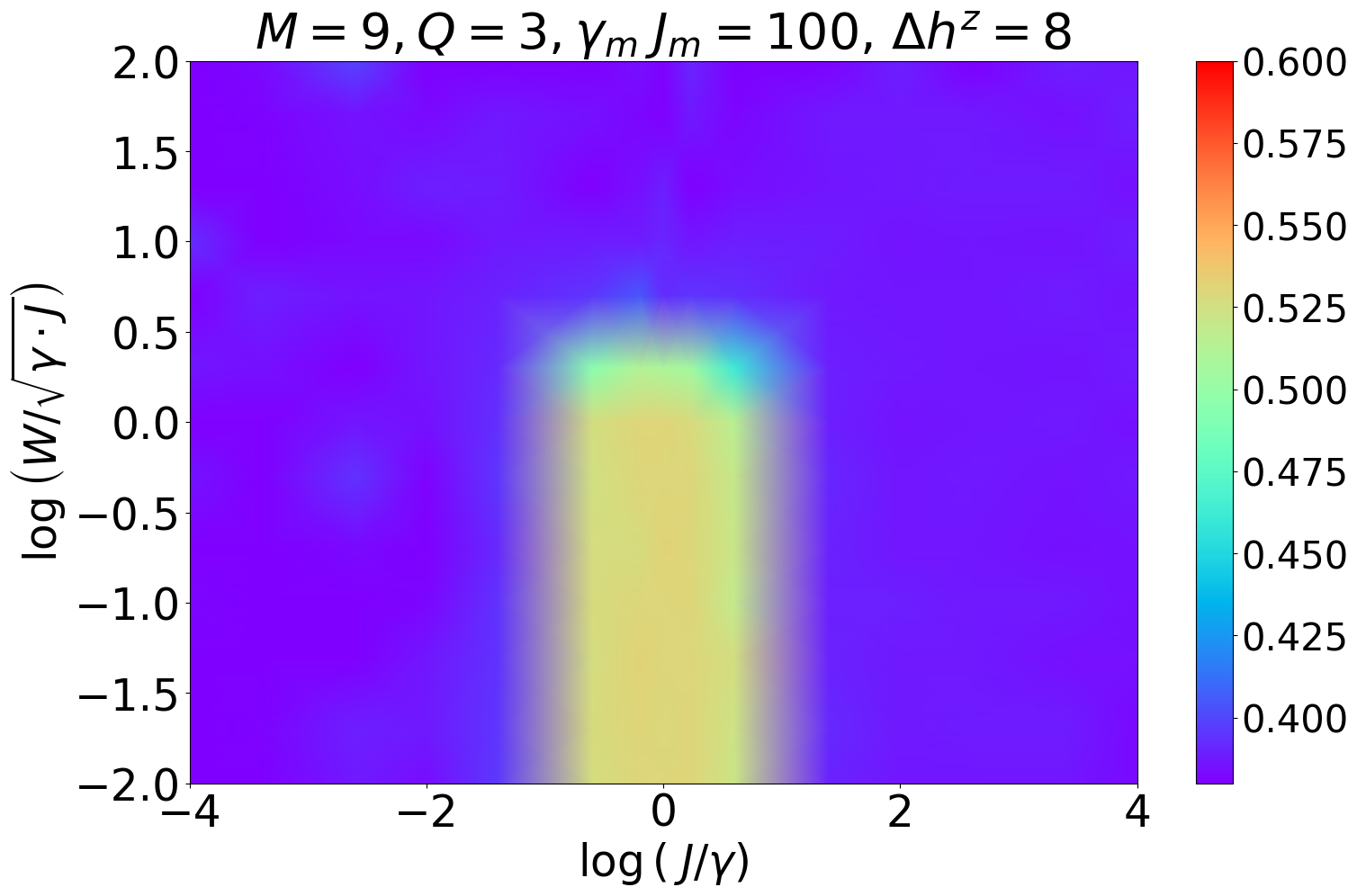}}
% \subfigure[]{\label{fig: exp_M9_Q3_tJ100_hz3_hT3_JW}\includegraphics[width=40mm]{Figures/exp_M9_Q3_tJ100_hz3_hT3_JW.png}}
\caption{The level spacing statistics $r$-ratio for models with uniform $\gamma_m \cdot J_m=100$ and random magnetic field $\Delta h^z=8$. Each data point is averaged over 15 realizations.}
\label{fig: r-ratio of JW}
\end{figure}

Besides the random magnetic field on the spins, we can also induce localization in the system through static disorder. Let us first consider the model $H_{\bar{\gamma} \bar{J} \mu h} = H_{\bar{\gamma}} + H_{\bar{J}} + H_\mu + H_h$ with random static disorder $\mu_m$ and a non-zero random magnetic field in the $z$-direction (i.e., $h_x = h_y = 0$). This model has two limits: the large $\gamma/J$ limit and the large $J/\gamma$ limit. In the large $\gamma/J$ limit, the model approaches two decoupled chains, one comprised of fermions and the other of spin-$1/2$s. Due to the random on-site potential $H_\mu$, the fermionic chain will exhibit conventional Anderson localization. Concurrently, the spin-chain remains non-interacting in this limit and since both chains are non-thermal, so is the entire system. On the other hand, in the large $J/\gamma$ limit the existence of blockades prevents the system from thermalizing (see Sec.~\ref{section: disconnected subspaces by blocking} for details). Thus, we expect that the system will display non-thermal behavior in either of these limits. 

However, in the intermediate regime $\gamma \sim J$, we find that the model exhibits chaotic behavior. In \figref{fig: r-ratio of JW}, we show the LSS for models with uniform $\gamma_m = \gamma$ and $J_m=J$, and a random $z$-component magnetic field (we hold the product $\gamma\times J$ fixed). As anticipated, the observed $r$-ratio shows Poisson distributed LSS in the limits $J/\gamma \rightarrow 0$ and $J/\gamma \rightarrow \infty$, since $\langle r \rangle \sim 0.39$. At intermediate $J/\gamma$, however, the $r$-ratio indicates that the LSS is GOE distributed, since $\langle r \rangle \sim 0.53$. As a function of $J/\gamma$, the system thus shows re-entrant behavior as it first crosses over from a localized state to a chaotic state, followed by another crossover into a localized state. When the disorder $W$ is the largest scale, the $r$-ratio indicates that the LSS is Poisson distributed regardless of the interaction strength $J_m$.

\begin{figure}[t]
\centering
\subfigure[]{\label{fig: exp_M12_Q1_t10_J10_W}\includegraphics[width=75mm]{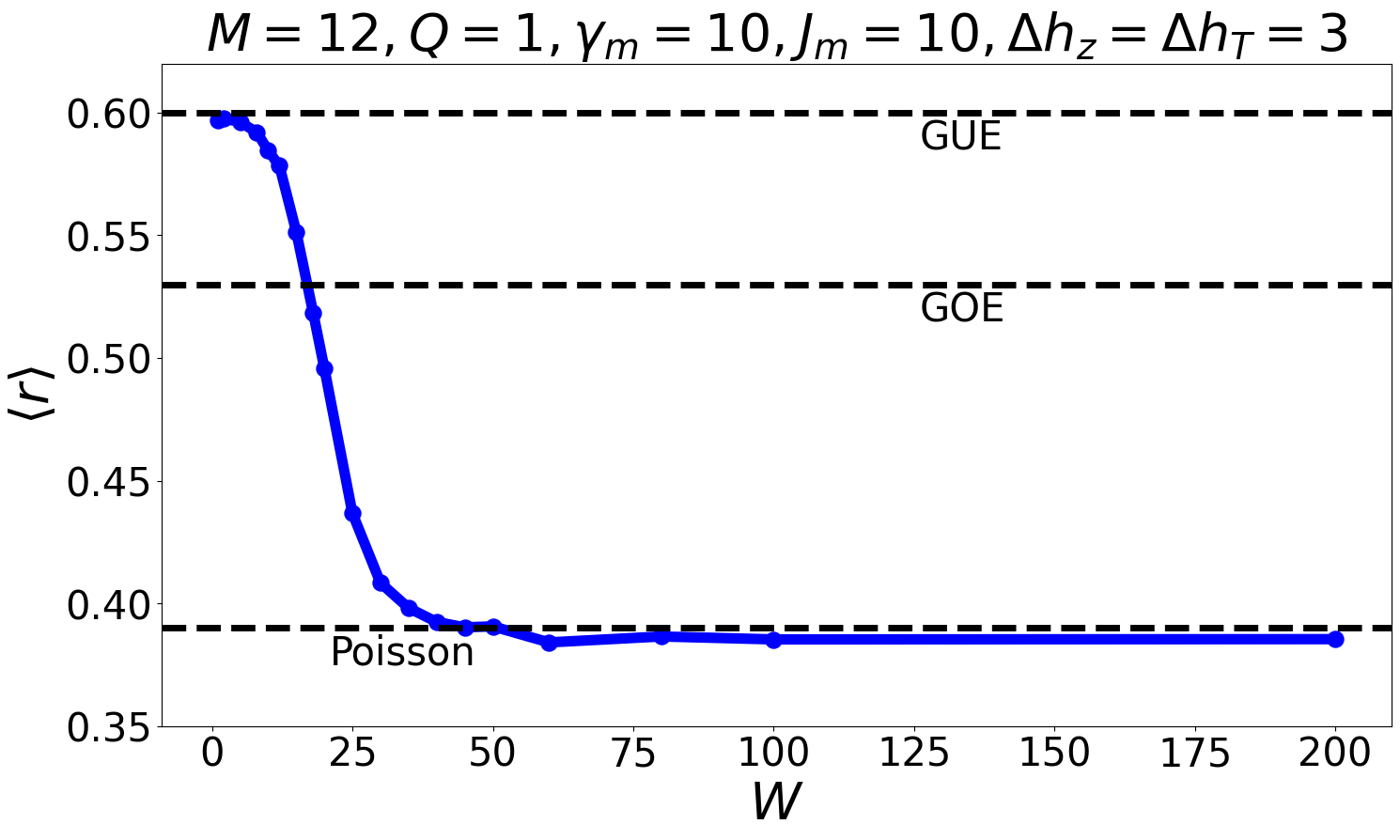}}
\subfigure[]{\label{fig: exp_M10_Q2_t10_J10_W}\includegraphics[width=75mm]{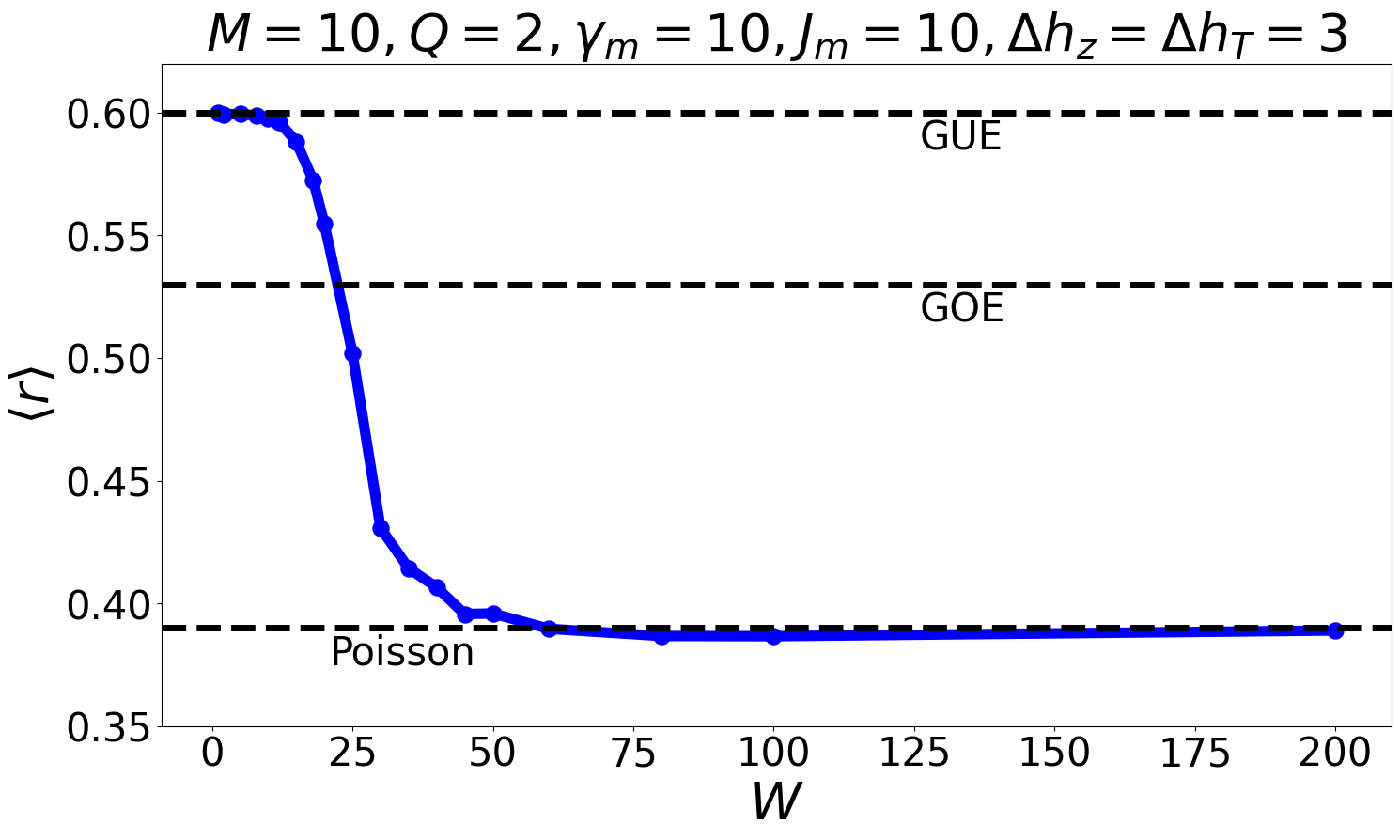}}
\subfigure[]{\label{fig: exp_M9_Q3_t10_J10_W}\includegraphics[width=75mm]{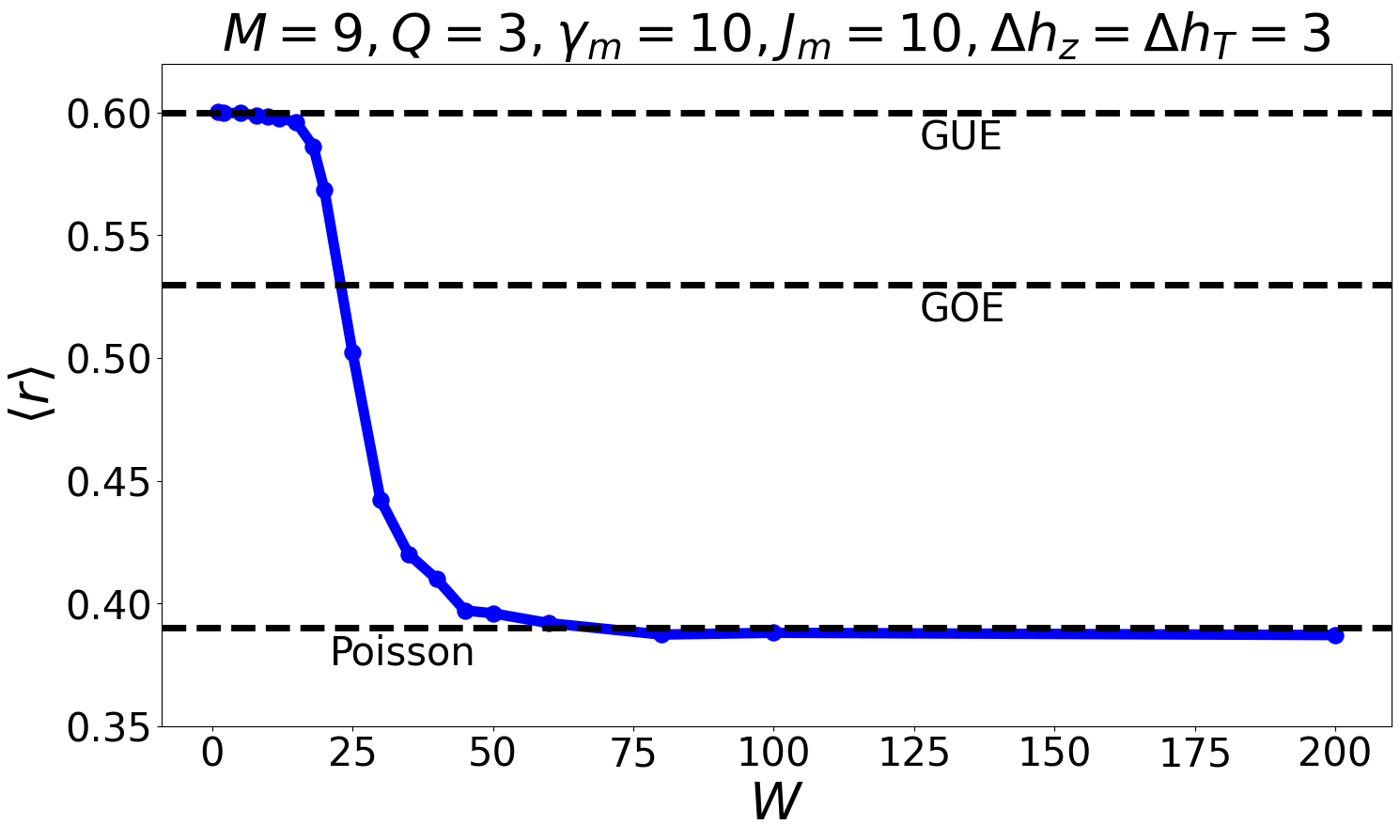}}
\caption{The level spacing statistics $r$-ratio for models with uniform $\gamma_m = \cdot J_m = 10$ and random magnetic field $\Delta h^z= \Delta h^T= 3$ for (a) $(M,Q)=(12,1)$, (b) $(M,Q)=(10,2)$, and (c) $(M,Q)=(9,3)$. Each data point is averaged over 20 realizations.}
\label{fig: r-ratio of t10_J10_hz3_hT3_W}
\end{figure}

When $h_x, h_y \ne 0$, there are no longer any blockades in the large $J/\gamma$ limit, as discussed in Sec.~\ref{section: disconnected subspaces by blocking}. In addition, from \figref{fig: exp_M12_Q1_t10_J8_hzhT}, \figref{fig: exp_M10_Q2_t10_J8_hzhT}, and \figref{fig: exp_M9_Q3_t10_J8_hzhT}, we see that the model demonstrates chaotic behavior within a specific region of the parameter space spanned by $\Delta h_z$ and $\Delta h_T$ if the disorder potential $W=0$. A natural question is whether the model will exhibit MBL as $W$ is increased. Our numerics suggest the answer is affirmative: in \figref{fig: r-ratio of t10_J10_hz3_hT3_W}, we show the $r$-ratio of models with $\Delta h^z = \Delta h^T = 3$ for various disorder strengths $W$. While the $r$-ratio indicates GUE distributed LSS at small $W$, we clearly observe a Poisson distribution at large $W$, consistent with MBL.

\subsection{Entanglement Entropy: MBL}

In this subsection, we present additional evidence on entanglement entropy to show that, at sufficiently large random on-site potential, the system exhibits MBL behavior.

The growth of the EE in MBL systems is different from that in chaotic systems. As pointed out in Refs.~\cite{rahul2015review,znidaric2008mbl}, the spreading of EE from a non-entangled initial condition is logarithmic for MBL systems, whereas in chaotic systems, the spreading follows a power-law. \figref{fig: expEE MBL} shows the EE before reaching their maximum values. The blue curves represent systems with no random on-site potential $W=0$, and the EE growth follows a power law: $S_A(t) \propto t$. For large random on-site potentials $W=60$, the $r$-ratio indicates a Poisson distribution of the LSS, see Section.~\ref{subsection: MBL from disorder W}. The green curves represents the EE growth of those systems, and it shows a logarithmic behavior. The EE provides another evidence that large random on-site potentials make the system becomes localized.

\begin{figure}[t]
\centering
\subfigure[]{\label{fig: expEE_M12_Q1_t10_J10_hz8}\includegraphics[width=42mm]{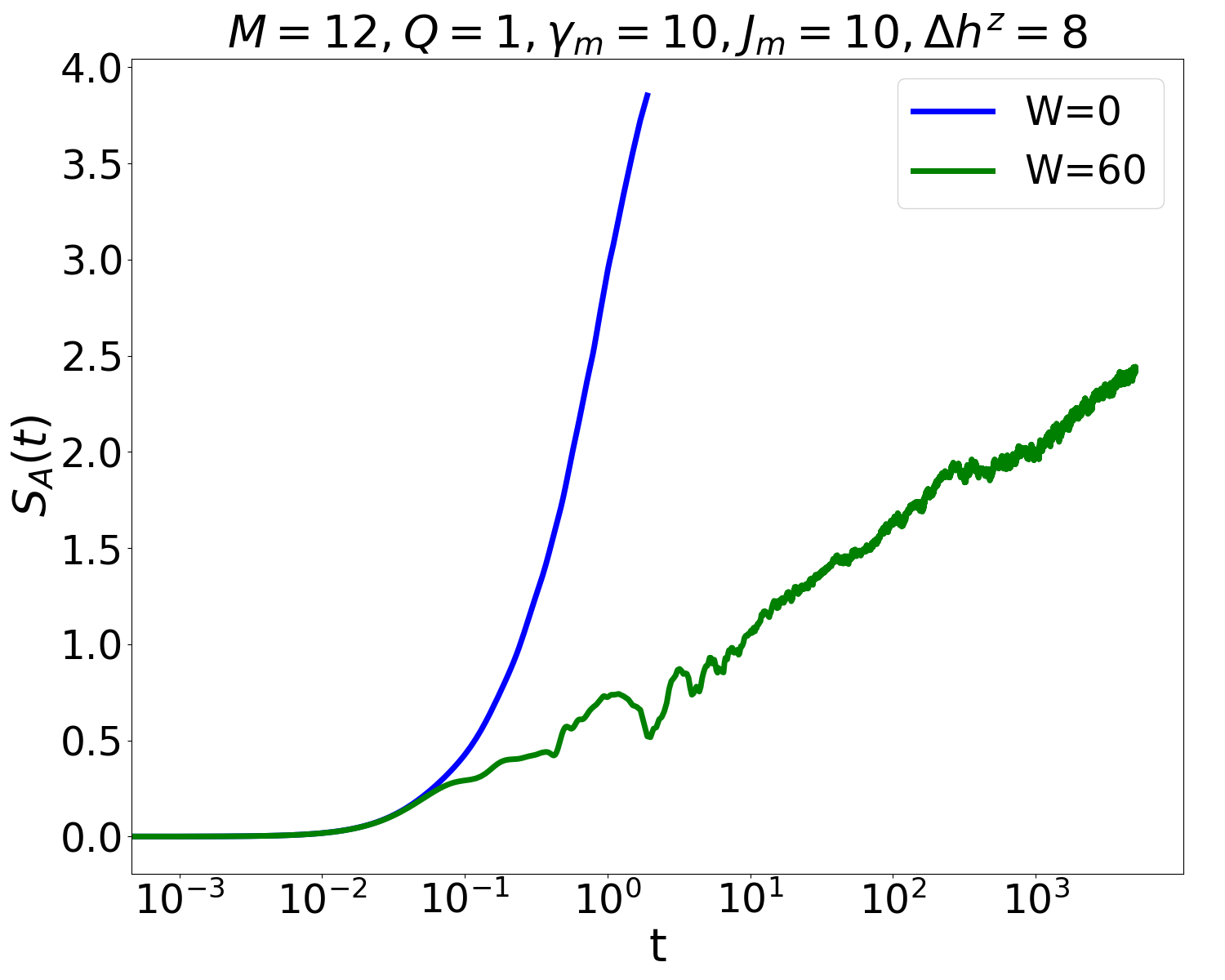}}
\subfigure[]{\label{fig: expEE_M12_Q1_t10_J10_hz3_hT3}\includegraphics[width=42mm]{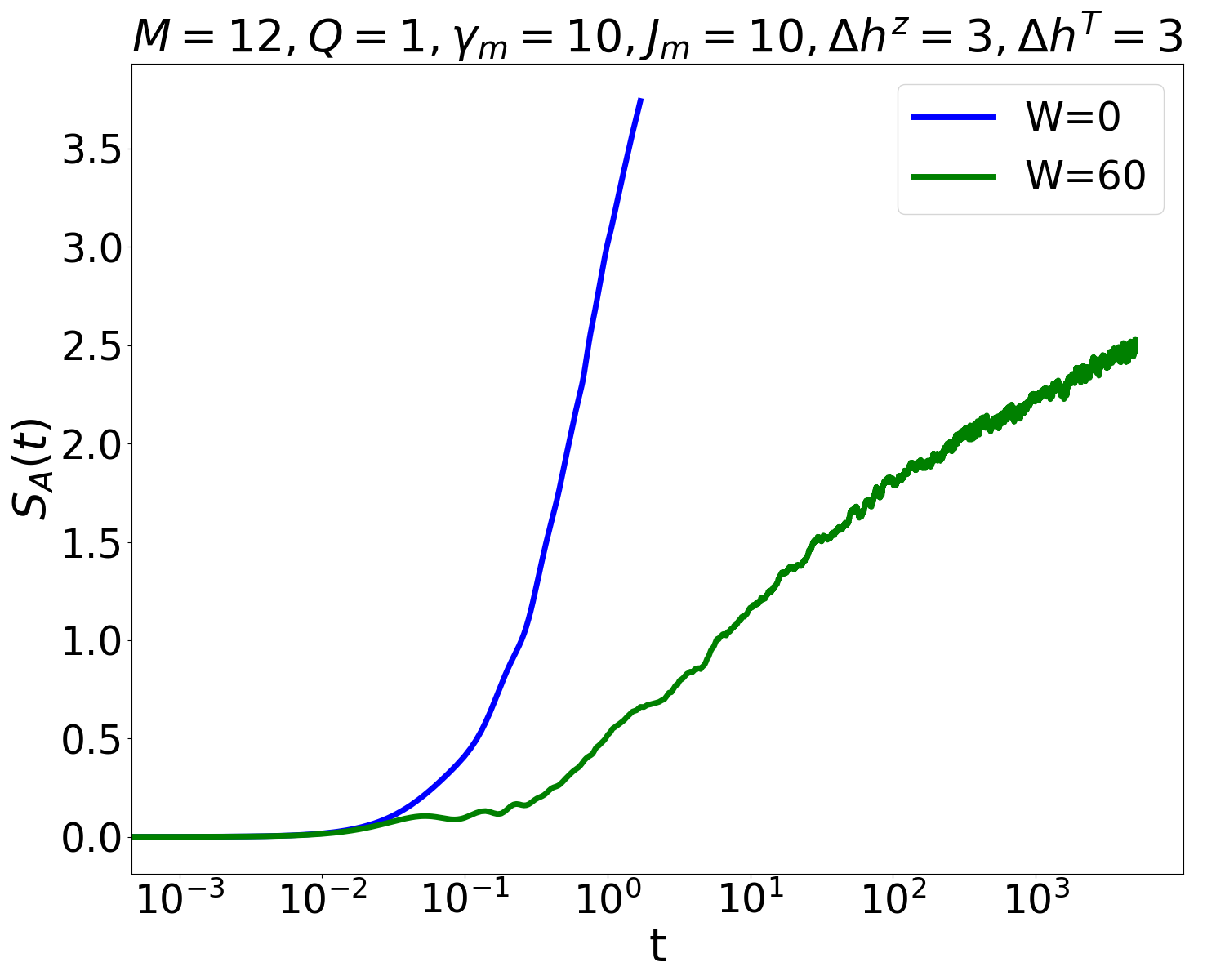}}
\subfigure[]{\label{fig: expEE_M10_Q2_t10_J10_hz8}\includegraphics[width=42mm]{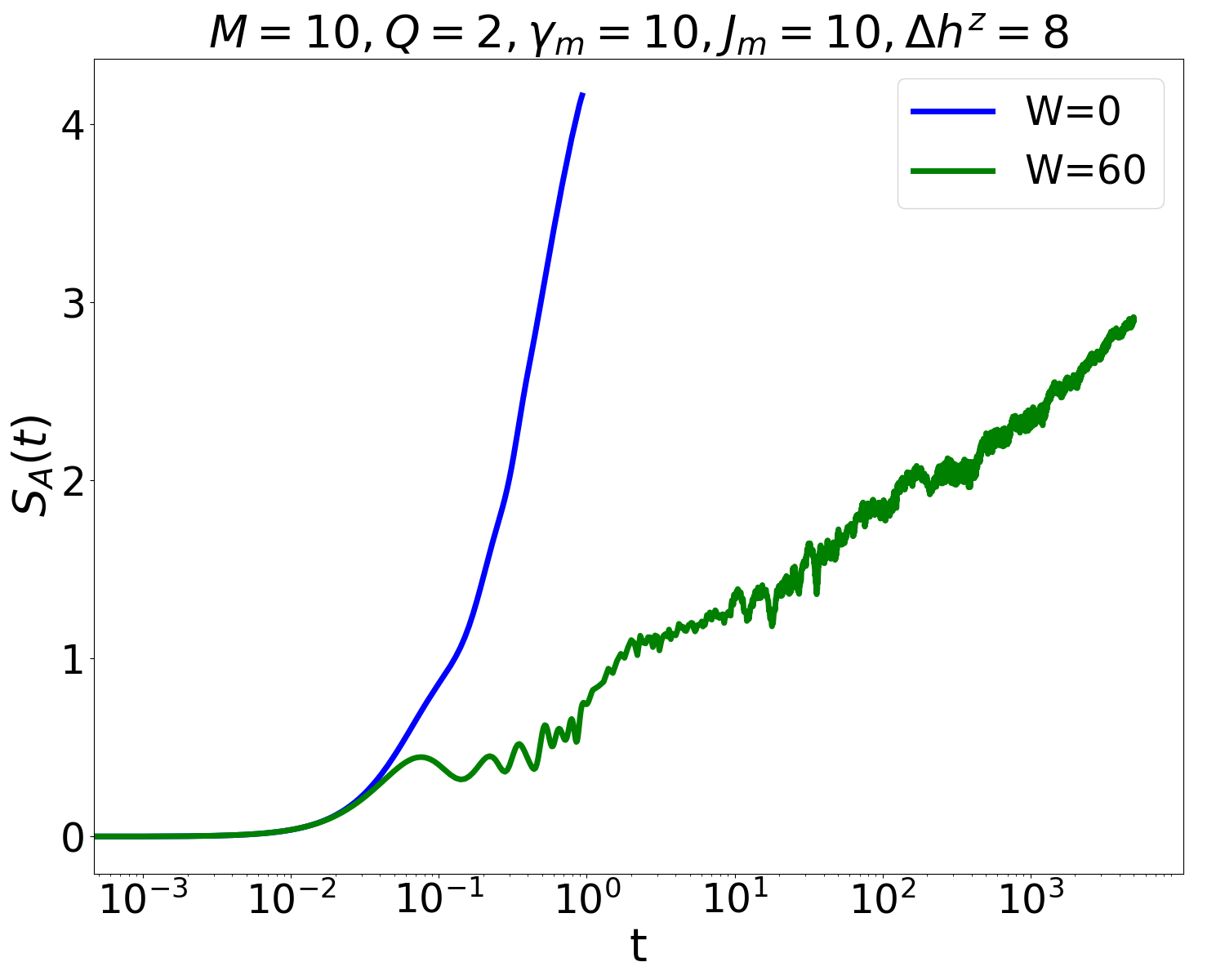}}
\subfigure[]{\label{fig: expEE_M10_Q2_t10_J10_hz3_hT3}\includegraphics[width=42mm]{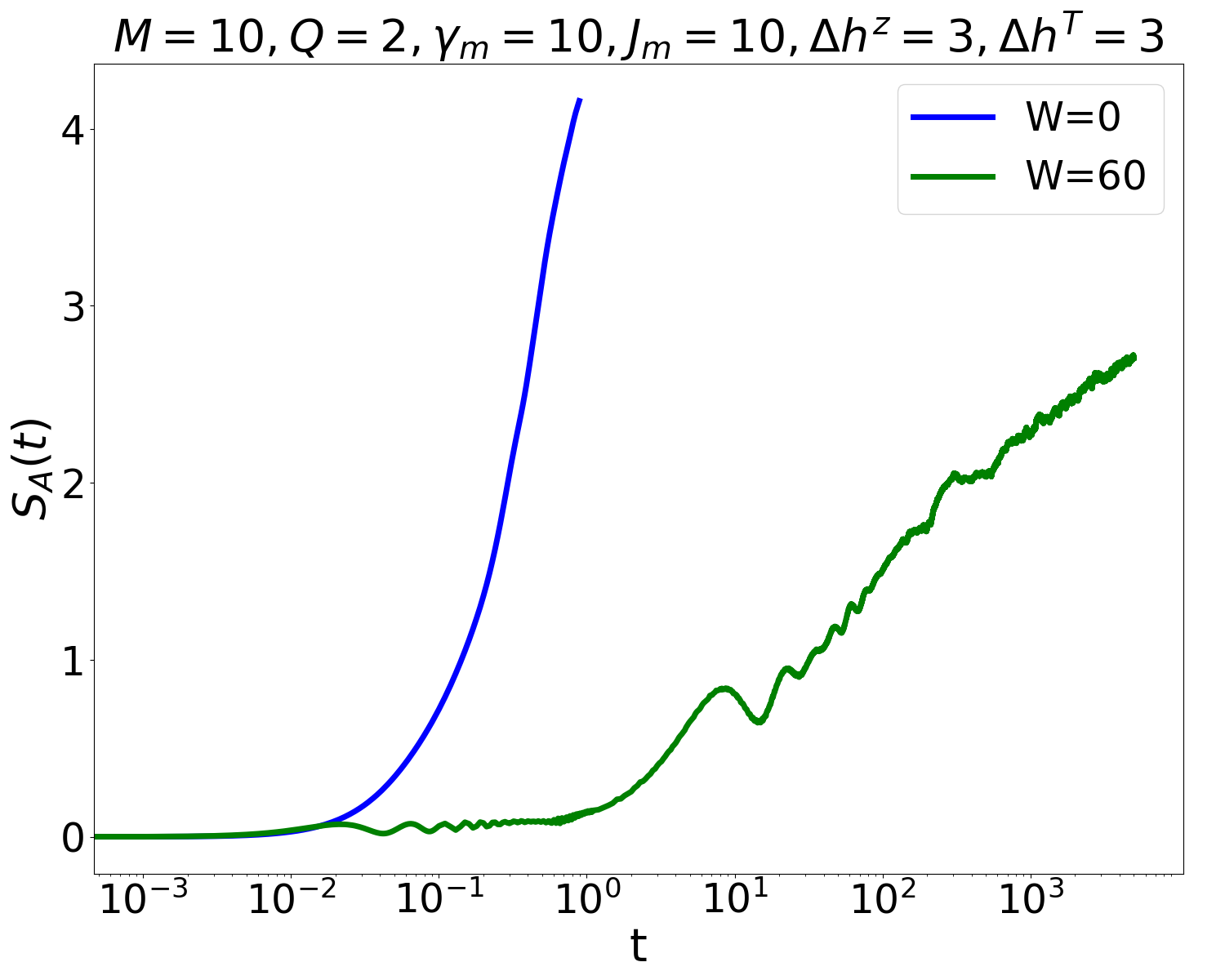}}
\subfigure[]{\label{fig: expEE_M9_Q3_t10_J10_hz8}\includegraphics[width=42mm]{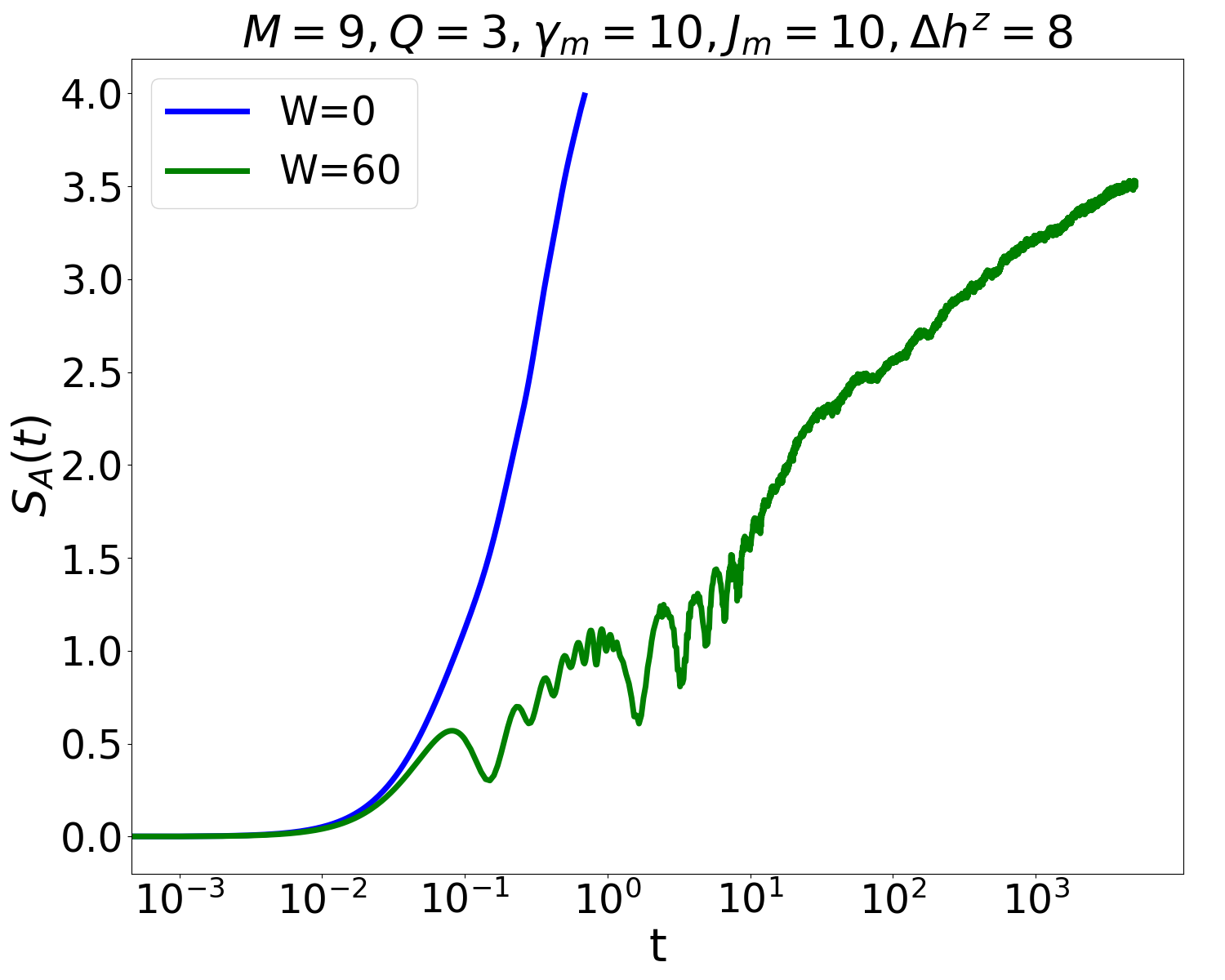}}
\subfigure[]{\label{fig: expEE_M9_Q3_t10_J10_hz3_hT3}\includegraphics[width=42mm]{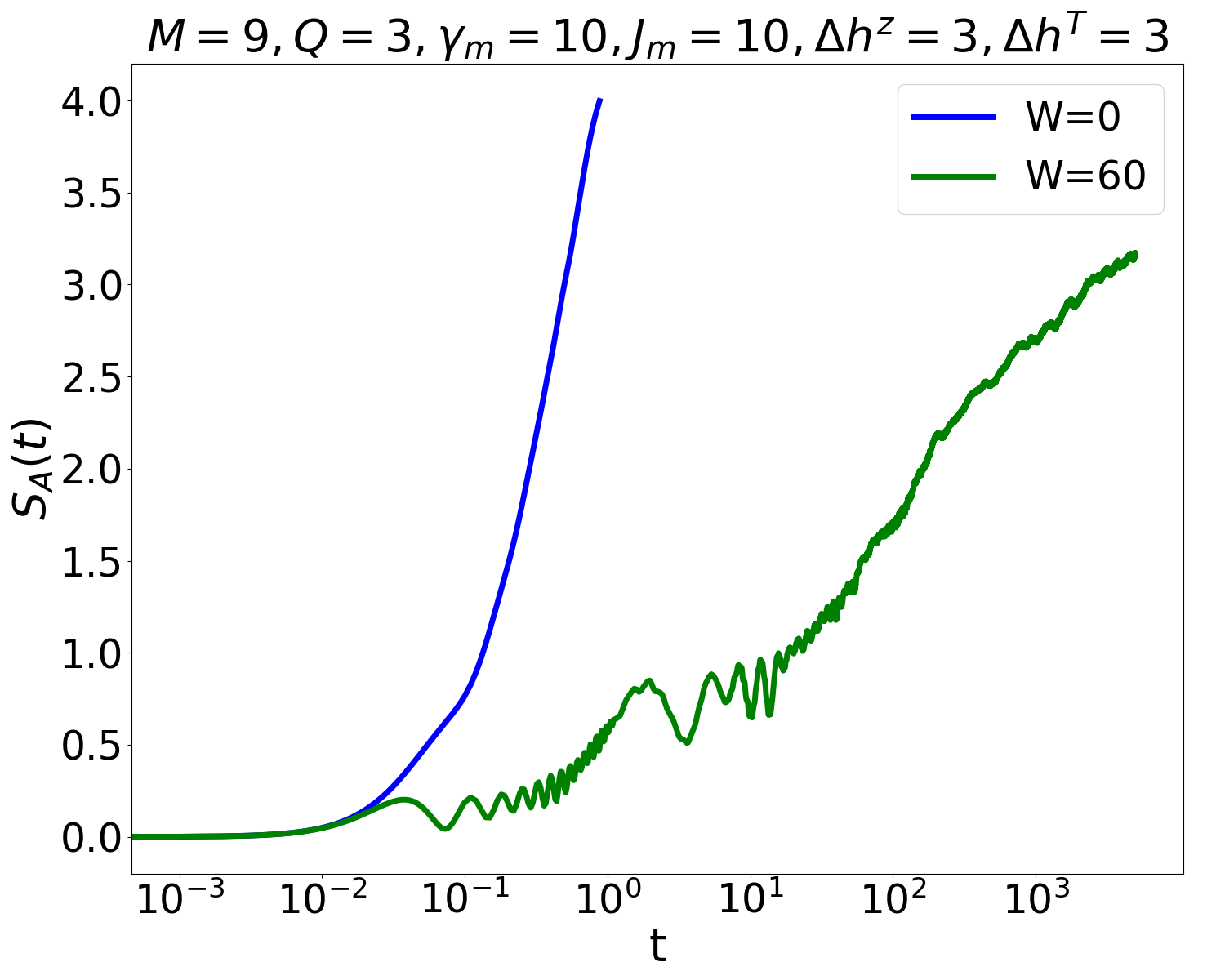}}
\caption{The entanglement entropy growth in systems featuring random on-site potentials with a standard deviation of $W=60$ (in green). As a reference, we also provide the EE growth in the absence of random on-site potentials, i.e., $W=0$ (in blue). The titles indicate the system parameters with the same interaction and hoppings: $\gamma_m=J_m=10$. The values for $(M, Q, \Delta h^z, \Delta h^T)$ are as follows: (a) $(12, 1, 8, 0)$, (b) $(12, 1, 3, 3)$, (c) $(10, 2, 8, 1)$, (d) $(10, 2, 3, 3)$, (e) $(9, 3, 8, 0)$, and (f) $(9, 3, 3, 3)$.}
\label{fig: expEE MBL}
\end{figure}

\section{Summary and Future Directions}
\label{section: Discussion}

In this paper, we have introduced a 1D extended quantum breakdown model that features a spatially asymmetric interaction between fermionic and spin-$1/2$ degrees of freedom and is descended from the recently introduced breakdown model~\cite{Lian:2022nqj}. Through a combination of exact analytic calculations and numerical simulations, we have shown that this model exhibits a host of intriguing dynamical phenomena, prominent amongst which is the Hilbert space fragmentation caused by the presence of exponentially many closed Krylov subspaces within most symmetry sectors. A key aspect of this model is that the fragmentation occurs in a basis of entangled states and thus provides another example of a ``quantum fragmented" system. Moreover, it exhibits extensive numbers of degenerate Krylov subspaces. The lack of thermalization is also revealed by considering the long-time entanglement dynamics of random initial Haar product states, which fail to saturate the expected Page value when the given symmetry sector is strongly fragmented, even if the initial state has a weight over \emph{all} Krylov subspaces. The saturation of the Page entropy is restored for symmetry sectors exhibiting weak fragmentation.

Within the $Q = 1$ sector, we also showed analytically that the fragmentation survives the addition of an on-site random potential (for uniform couplings $\gamma$ and $J$) and exponentially many Krylov subspaces persist.
%We also identified an invariant subspace with dimension $\text{dim} \sim M \cdot 2^{3M/4}$.

When we apply a non-zero magnetic field (along the $z$-direction) to the spin degrees of freedom, the Krylov subspaces merge and the system thermalizes. We numerically observe a crossover from a dynamical spin configuration to one where the spins are frozen as the field strength is increased, and provide a perturbative argument that well-describes the observed crossover. For a random magnetic field, we numerically observe a transition from a chaotic to an MBL phase within certain sectors, which we probe through level spacing statistics. Finally, we study the effects of static disorder on the system and again find an MBL phase using level spacing statistics and long-time dynamics of the entanglement entropy.

The extended quantum breakdown model exhibits rich dynamical behavior and hence provides a playground for exploring transitions between distinct dynamical phases. Here, we have initiated a study of these phases but several open questions remain: firstly, the physics of larger Krylov subspaces remains unclear and understanding these better would explain when the fragmentation is strong rather than weak. Further, in the presence of either a random magnetic field or static disorder, it remains to be understood how the critical field strength or disorder strength scales with system size at the MBL transition. For instance, in a different model with Hilbert space fragmentation, Ref.~\cite{herviouMBL2021} observed a size-independent scaling of the critical disorder strength in certain sectors --- further investigations of the extended breakdown model could shed light on whether this behavior is generic for strongly fragmented systems.

Finally, it would be interesting to establish a precise relation between the extended breakdown model and so-called ``quantum link" models~\cite{Chandrasekharan:1996ih}. While the former hosts fermions interacting asymmetrically with spins, the latter consists of fermions interacting symmetrically with dynamical gauge degrees of freedom and has recently enabled the simulation of 1D quantum electrodynamics on Rydberg atom chains~\cite{surace2020qlm}. Understanding whether there exist limiting conditions under which the quantum link model reduces to the extended breakdown model would thus also provide a natural route towards the latter's experimental realization.

\section*{Acknowledgements}
We are especially grateful to Yu-Min Hu for pointing out a relation between our model and the quantum link model. A.P. and N. R. acknowledge B. A. Bernevig and S. Moudgalya for previous collaboration on related topics. This material is based upon work supported by the Sivian Fund at the Institute for Advanced Study and the U.S. Department of Energy, Office of Science, Office of High Energy Physics under Award Number DE-SC0009988 (A. P.). A. P. thanks the KITP, which is supported by the National Science Foundation under Grant No. NSF PHY-1748958, for its hospitality during the ``Topology, Symmetry, and Interactions in Crystals" program, when part of this work was completed. B.L. acknowledge the Alfred P. Sloan Foundation, the National Science Foundation through Princeton University’s Materials Research Science and Engineering Center DMR-2011750, and the National Science Foundation under award DMR-2141966. Additional support is provided by the Gordon and Betty Moore Foundation through Grant GBMF8685 towards the Princeton theory program.

% \clearpage
% \onecolumngrid

%%%%%%%%%%%%%%%%%%%%%%%%%%%%%%%%%%%%%%%%%%%%%%%

%---------------------------------------
%BIBLIOGRAPHY
%---------------------------------------
\bibliography{mybib}

%%%%%%%%%%%%%%%%%%%%%%%%%%%%%%%%%%%%%%%%%%%%%%%

\appendix

\section{Relation to the original quantum breakdown model}
\label{appendix: the original breakdown model}
A quantum breakdown model for fermions with asymmetric interaction was introduced in Ref.~\cite{Lian:2022nqj}, which consists of a 1D chain of $M$ sites, and each site consists of $N$ fermionic degrees of freedom. The Hamiltonian takes the form
\ie
\label{eqn: the hamiltonian of the original model}
H = H_J + H_\mu + H_{\text{dis}}
.\fe

The first part is the asymmetric breakdown interaction:
\ie
\label{eqn: HI of the original model}
H_J = 
\sum_{m=1}^{M-1} \sum_{i<j<k}^{N} \sum_{l=1}^{N} 
\left( J_{m, l}^{ijk} a_{m+1, i}^{\dagger} a_{m+1, j}^{\dagger} a_{m+1, k}^{\dagger} a_{m, l} 
+ h.c. \right)
,\fe
where $a_{m,i}$ and $a_{m,i}^{\dagger}$ are the annihilation and creation operators of the $i$-th fermion mode on the $m$-th site ($1 \le i \le N$, $1 \le m \le M$), and $J_{m, l}^{ijk}$ are complex parameters which are totally anti-symmetric in indices $i$, $j$, $k$.

The second part of the Hamiltonian is an on-site potential uniform within each site:
\ie
\label{eqn: Hmu of the original model}
H_\mu = \sum_{m=1}^{M} \mu_m \hat{n}_m
,\fe
where $\hat{n}_m = \sum_{i=1}^{N} a_{m, i}^{\dagger} a_{m, i}$ is the fermion number operator on the $m$-th site, and $\mu_m$ is the on-site potential on the $m$-th site.

The last part is a disorder potential:
\ie
\label{eqn: Hdis of the original model}
H_{\text{dis}} = \sum_{m=1}^{M} \sum_{i=1}^{N} \nu_{m, i} a_{m, i}^{\dagger} a_{m, i}
,\fe
where $\nu_{m,i}$ are Gaussian random potentials with mean value and variance given by
\ie
\langle \nu_{m, i} \rangle = 0 \;, \;\;\; \langle \nu_{m, i}^2 \rangle = W^2
.\fe

The quantum breakdown model in Eq.~\eqref{eqn: the hamiltonian of the original model} has a conserved global $U(1)$ charge
\ie
\label{eqn: Q in the original model}
q = \sum_{m=1}^{M} 3^{M-m} \hat{n}_m
.\fe
The conserved charge allows us to block diagonalize the Hamiltonian, reducing the computational complexity significantly.

For models with $N=3$, if the disorder potential is absent, the Hamiltonian $H=H_\mu + H_I$ can be simplified by $U(3)$ rotations on $a_{m, i}$ on each site. We define a new fermion basis $c_{m, i}$ by
\ie
a_{m, l} = \sum_{l^\prime=1}^{3} U_{l l^\prime}^{(m)} c_{m, l^\prime}
,\fe
where the matrices $U^{(m)}$ satisfy
\ie
\sum_{l} J_{m, l}^{123} U_{l l^\prime}^{(m)} = \sqrt{3} J_m \delta_{1, l^\prime} \det\left( U^{(m+1)} \right)
,\fe
with
\ie
J_m \equiv \sqrt{\frac{1}{3} \sum_{l=1}^{3} \left| J_{m, l}^{123} \right|^2 }
.\fe
In the new basis, the Hamiltonian can be rewritten as
\ie
\label{eqn: original breakdown model simplified}
& H_J = \sum_{m=1}^{M-1}{\sqrt{3} J_m c_{m+1, 1}^\dagger c_{m+1, 2}^\dagger c_{m+1, 3}^\dagger c_{m, 1} + h.c.}
\\
& H_\mu = \sum_{m=1}^{M}{\sum_{l=1}^{3} \mu_m c_{m, l}^\dagger c_{m, l}}
.\fe

The extended breakdown model originates from the breakdown model with $N=3$. We allow electrons to hop between nearest neighbors, which renders Eq.~\eqref{eqn: Q in the original model} a non-conserved charge. The Hilbert space dimension therefore cannot be divided into different conserved charge sectors, c.f., Eq.~\eqref{eqn: Q in the original model}. If we wish to calculate the energy spectrum, we will have to diagonalize the Hamiltonian matrix with dimension $2^{3M} \times 2^{3M}$. The dimension grows rapidly as the system size $M$ increases, which makes exact diagonalization numerically impractical for large system sizes.

In Eq.~\eqref{eqn: original breakdown model simplified}, the fermion modes of $i=2,3$ are always simultaneously excited. Therefore, the states where either all $i=2,3$ fermion modes are fully filled or completely empty on each site form closed subspaces. In these subspaces, we represent the fully-filled/empty states on site $m$ by spin eigenstates of the Pauli matrix $\sigma_m^z$, i.e.,
\ie
\begin{cases}
c_{m, 1}^\dagger c_{m, 2}^\dagger c_{m, 3}^\dagger |\Omega\rangle \mapsto c_{m, 1}^\dagger |\Omega\rangle \otimes | \uparrow \rangle_m \equiv c_m^\dagger |\Omega\rangle \otimes | \uparrow \rangle_m
\\
c_{m, 1}^\dagger |\Omega\rangle \mapsto c_{m, 1}^\dagger |\Omega\rangle \otimes | \downarrow \rangle_m \equiv c_m^\dagger |\Omega\rangle \otimes | \downarrow \rangle_m
\end{cases}
,\fe
where $|\Omega\rangle$ is the vacuum state of the fermion chain (no fermion), and we re-defined $c_{m, 1}^\dagger$ as $c_m^\dagger$. These subspaces constitute the primary focus of this paper. In this paper, we denote the spin state on the $m$-th site as $s_m = \{ \ket{\uparrow}_m, \ket{\downarrow}_m \}$ (or $s_m = \{ | 1 \rangle_m, | 0 \rangle_m \}$).

\section{Explicit form of spin states in Sec.~\ref{section: Notations}}
\label{appendix: Explicit forms of the notations}

\begin{widetext}

\subsection{Notations for $H_{\gamma J}$: $\mathcal{K}^{(1)}$}
\label{appendix: Explicit forms of the notations for K1}
The explicit form of the spin states with arrows defined in \ref{subsubsection: Notations for K1} are:
\begin{alignat}{2}
\begin{tikzpicture}[baseline={(0, -0.1)}]
\draw (0.4, 0) circle (.2);
\draw (1.2, 0) circle (.2);
\draw (0.6, 0) -- (1.0, 0);
\node at (0, 0) {$\cdots$};
\node at (1.7, 0) {$\cdots$};
\node at (0.8, -0.2) {\small $0$};
\node at (1.2, -0.4) {\small $m$};
\draw[<-] (0.65, 0.15) -- (0.95, 0.15);
\end{tikzpicture}
\equiv
&\left(-\frac{1}{\gamma_{m-1}}\right)\times
&&\begin{tikzpicture}[baseline={(0, -0.1)}]
\draw (0.4, 0) circle (.2);
\draw (1.2, 0) circle (.2);
\draw (0.6, 0) -- (1.0, 0);
\node at (0, 0) {$\cdots$};
\node at (1.7, 0) {$\cdots$};
\node at (0.8, -0.2) {\small $0$};
\node at (1.2, -0.4) {\small $m$};
\end{tikzpicture}
+
\left(-\frac{J_{m-1}}{\gamma_{m-1}^2}\right)\times
\begin{tikzpicture}[baseline={(0, -0.1)}]
\draw (0.4, 0) circle (.2);
\draw (1.2, 0) circle (.2);
\draw (0.6, 0) -- (1.0, 0);
\node at (0, 0) {$\cdots$};
\node at (1.7, 0) {$\cdots$};
\node at (0.8, -0.2) {\small $1$};
\node at (1.2, -0.4) {\small $m$};
\end{tikzpicture}
\notag\\
%%%%%%%%%%%%%%%%%%%%%%%%%%%%%%%%%%%%%%%%%%%%%%
\begin{tikzpicture}[baseline={(0, -0.1)}]
\draw (0.4, 0) circle (.2);
\draw (1.2, 0) circle (.2);
\draw (0.6, 0) -- (1.0, 0);
\node at (0, 0) {$\cdots$};
\node at (1.7, 0) {$\cdots$};
\node at (0.8, -0.2) {\small $1$};
\node at (1.2, -0.4) {\small $m$};
\draw[<-] (0.65, 0.15) -- (0.95, 0.15);
\end{tikzpicture}
\equiv
&\left(-\frac{1}{\gamma_{m-1}}\right) \times
&&\begin{tikzpicture}[baseline={(0, -0.1)}]
\draw (0.4, 0) circle (.2);
\draw (1.2, 0) circle (.2);
\draw (0.6, 0) -- (1.0, 0);
\node at (0, 0) {$\cdots$};
\node at (1.7, 0) {$\cdots$};
\node at (0.8, -0.2) {\small $1$};
\node at (1.2, -0.4) {\small $m$};
\end{tikzpicture}
\notag\\
%%%%%%%%%%%%%%%%%%%%%%%%%%%%%%%%%%%%%%%%%%%%%%
\begin{tikzpicture}[baseline={(0, -0.1)}]
\draw (0.4, 0) circle (.2);
\draw (1.2, 0) circle (.2);
\draw (0.6, 0) -- (1.0, 0);
\node at (0, 0) {$\cdots$};
\node at (1.7, 0) {$\cdots$};
\node at (0.8, -0.2) {\small $0$};
\node at (1.2, -0.4) {\small $m$};
\draw[->] (0.65, 0.15) -- (0.95, 0.15);
\end{tikzpicture}
\equiv
&\left(-\frac{1}{\gamma_{m-1}}\right) \times
&&\begin{tikzpicture}[baseline={(0, -0.1)}]
\draw (0.4, 0) circle (.2);
\draw (1.2, 0) circle (.2);
\draw (0.6, 0) -- (1.0, 0);
\node at (0, 0) {$\cdots$};
\node at (1.7, 0) {$\cdots$};
\node at (0.8, -0.2) {\small $0$};
\node at (1.2, -0.4) {\small $m$};
\end{tikzpicture}
\notag\\
%%%%%%%%%%%%%%%%%%%%%%%%%%%%%%%%%%%%%%%%%%%%%%
\begin{tikzpicture}[baseline={(0, -0.1)}]
\draw (0.4, 0) circle (.2);
\draw (1.2, 0) circle (.2);
\draw (0.6, 0) -- (1.0, 0);
\node at (0, 0) {$\cdots$};
\node at (1.7, 0) {$\cdots$};
\node at (0.8, -0.2) {\small $1$};
\node at (1.2, -0.4) {\small $m$};
\draw[->] (0.65, 0.15) -- (0.95, 0.15);
\end{tikzpicture}
\equiv
&\left(-\frac{J_{m-1}}{\gamma_{m-1}^2}\right)\times
&&\begin{tikzpicture}[baseline={(0, -0.1)}]
\draw (0.4, 0) circle (.2);
\draw (1.2, 0) circle (.2);
\draw (0.6, 0) -- (1.0, 0);
\node at (0, 0) {$\cdots$};
\node at (1.7, 0) {$\cdots$};
\node at (0.8, -0.2) {\small $0$};
\node at (1.2, -0.4) {\small $m$};
\end{tikzpicture}
+
\left(-\frac{1}{\gamma_{m-1}}\right)\times
\begin{tikzpicture}[baseline={(0, -0.1)}]
\draw (0.4, 0) circle (.2);
\draw (1.2, 0) circle (.2);
\draw (0.6, 0) -- (1.0, 0);
\node at (0, 0) {$\cdots$};
\node at (1.7, 0) {$\cdots$};
\node at (0.8, -0.2) {\small $1$};
\node at (1.2, -0.4) {\small $m$};
\end{tikzpicture}
.\end{alignat}
The states are designed such that Eq.~\eqref{eqn: H acting on arrow states} holds.

\subsection{The Krylov subspaces for models with $M=3$}

The characteristic polynomial for models with $(M, Q)=(3, 1)$ and uniform hopping coefficients and interactions $\gamma_m = \gamma$, $J_m = J$ can be factorized:
\ie
\label{eqn: characteristic polynomial of M=3}
|\det \left(\lambda \mathds{1} - H \right)| = \lambda^4 \left[ \lambda^2-\left( 2\gamma^2+J^2 \right) \right]^2 \left[ \lambda^4 - \left( 4\gamma^2+2J^2 \right) \lambda^2 + 4\gamma^4 \right]
.\fe
This implies that there are 4 zero modes, 2 Krylov subspaces with dimension 2, and 1 with dimension 4. One can easily work out what the root states of the subspaces are:
\begin{alignat}{1}
\label{eqn: root states of Krylov subspaces of a M, Q = 3, 1 model}
&\mathcal{K}^{(1)}:
\begin{cases}
    \begin{tikzpicture}[baseline={(0, -0.1)}]
    \draw (0.4, 0) circle (.2);
    \draw (1.2, 0) circle (.2);
    \draw (2.0, 0) circle (.2);
    \fill (0.4, 0) circle (2pt);
    \draw (0.6, 0) -- (1.0, 0);
    \draw (1.4, 0) -- (1.8, 0);
    \node at (0.8, -0.2) {\small $0$};
    \node at (1.6, -0.2) {\small $0$};
    \draw[<-] (0.65, 0.15) -- (0.95, 0.15);
    \end{tikzpicture}
    -
    \begin{tikzpicture}[baseline={(0, -0.1)}]
    \draw (0.4, 0) circle (.2);
    \draw (1.2, 0) circle (.2);
    \draw (2.0, 0) circle (.2);
    \fill (2.0, 0) circle (2pt);
    \draw (0.6, 0) -- (1.0, 0);
    \draw (1.4, 0) -- (1.8, 0);
    \node at (0.8, -0.2) {\small $0$};
    \node at (1.6, -0.2) {\small $0$};
    \draw[->] (1.45, 0.15) -- (1.75, 0.15);
    \end{tikzpicture}
    , \\
    \begin{tikzpicture}[baseline={(0, -0.1)}]
    \draw (0.4, 0) circle (.2);
    \draw (1.2, 0) circle (.2);
    \draw (2.0, 0) circle (.2);
    \fill (0.4, 0) circle (2pt);
    \draw (0.6, 0) -- (1.0, 0);
    \draw (1.4, 0) -- (1.8, 0);
    \node at (0.8, -0.2) {\small $0$};
    \node at (1.6, -0.2) {\small $1$};
    \draw[<-] (0.65, 0.15) -- (0.95, 0.15);
    \end{tikzpicture}
    -
    \begin{tikzpicture}[baseline={(0, -0.1)}]
    \draw (0.4, 0) circle (.2);
    \draw (1.2, 0) circle (.2);
    \draw (2.0, 0) circle (.2);
    \fill (2.0, 0) circle (2pt);
    \draw (0.6, 0) -- (1.0, 0);
    \draw (1.4, 0) -- (1.8, 0);
    \node at (0.8, -0.2) {\small $0$};
    \node at (1.6, -0.2) {\small $1$};
    \draw[->] (1.45, 0.15) -- (1.75, 0.15);
    \end{tikzpicture}
    , \\
    \begin{tikzpicture}[baseline={(0, -0.1)}]
    \draw (0.4, 0) circle (.2);
    \draw (1.2, 0) circle (.2);
    \draw (2.0, 0) circle (.2);
    \fill (0.4, 0) circle (2pt);
    \draw (0.6, 0) -- (1.0, 0);
    \draw (1.4, 0) -- (1.8, 0);
    \node at (0.8, -0.2) {\small $1$};
    \node at (1.6, -0.2) {\small $0$};
    \draw[<-] (0.65, 0.15) -- (0.95, 0.15);
    \end{tikzpicture}
    -
    \begin{tikzpicture}[baseline={(0, -0.1)}]
    \draw (0.4, 0) circle (.2);
    \draw (1.2, 0) circle (.2);
    \draw (2.0, 0) circle (.2);
    \fill (2.0, 0) circle (2pt);
    \draw (0.6, 0) -- (1.0, 0);
    \draw (1.4, 0) -- (1.8, 0);
    \node at (0.8, -0.2) {\small $1$};
    \node at (1.6, -0.2) {\small $0$};
    \draw[->] (1.45, 0.15) -- (1.75, 0.15);
    \end{tikzpicture}
    , \\
    \begin{tikzpicture}[baseline={(0, -0.1)}]
    \draw (0.4, 0) circle (.2);
    \draw (1.2, 0) circle (.2);
    \draw (2.0, 0) circle (.2);
    \fill (0.4, 0) circle (2pt);
    \draw (0.6, 0) -- (1.0, 0);
    \draw (1.4, 0) -- (1.8, 0);
    \node at (0.8, -0.2) {\small $1$};
    \node at (1.6, -0.2) {\small $1$};
    \draw[<-] (0.65, 0.15) -- (0.95, 0.15);
    \end{tikzpicture}
    -
    \begin{tikzpicture}[baseline={(0, -0.1)}]
    \draw (0.4, 0) circle (.2);
    \draw (1.2, 0) circle (.2);
    \draw (2.0, 0) circle (.2);
    \fill (2.0, 0) circle (2pt);
    \draw (0.6, 0) -- (1.0, 0);
    \draw (1.4, 0) -- (1.8, 0);
    \node at (0.8, -0.2) {\small $1$};
    \node at (1.6, -0.2) {\small $1$};
    \draw[->] (1.45, 0.15) -- (1.75, 0.15);
    \end{tikzpicture}
    , \\
\end{cases}
\\
&\mathcal{K}^{(2)}:
\begin{cases}
    J \;
    \begin{tikzpicture}[baseline={(0, -0.1)}]
    \draw (0.4, 0) circle (.2);
    \draw (1.2, 0) circle (.2);
    \draw (2.0, 0) circle (.2);
    \fill (1.2, 0) circle (2pt);
    \draw (0.6, 0) -- (1.0, 0);
    \draw (1.4, 0) -- (1.8, 0);
    \node at (0.8, -0.2) {\small $0$};
    \node at (1.6, -0.2) {\small $0$};
    \end{tikzpicture}
    - \gamma \;
    \begin{tikzpicture}[baseline={(0, -0.1)}]
    \draw (0.4, 0) circle (.2);
    \draw (1.2, 0) circle (.2);
    \draw (2.0, 0) circle (.2);
    \fill (1.2, 0) circle (2pt);
    \draw (0.6, 0) -- (1.0, 0);
    \draw (1.4, 0) -- (1.8, 0);
    \node at (0.8, -0.2) {\small $0$};
    \node at (1.6, -0.2) {\small $1$};
    \end{tikzpicture}
    + \gamma \;
    \begin{tikzpicture}[baseline={(0, -0.1)}]
    \draw (0.4, 0) circle (.2);
    \draw (1.2, 0) circle (.2);
    \draw (2.0, 0) circle (.2);
    \fill (1.2, 0) circle (2pt);
    \draw (0.6, 0) -- (1.0, 0);
    \draw (1.4, 0) -- (1.8, 0);
    \node at (0.8, -0.2) {\small $1$};
    \node at (1.6, -0.2) {\small $0$};
    \end{tikzpicture}
    , \\
    J \;
    \begin{tikzpicture}[baseline={(0, -0.1)}]
    \draw (0.4, 0) circle (.2);
    \draw (1.2, 0) circle (.2);
    \draw (2.0, 0) circle (.2);
    \fill (1.2, 0) circle (2pt);
    \draw (0.6, 0) -- (1.0, 0);
    \draw (1.4, 0) -- (1.8, 0);
    \node at (0.8, -0.2) {\small $1$};
    \node at (1.6, -0.2) {\small $1$};
    \end{tikzpicture}
    - \gamma \;
    \begin{tikzpicture}[baseline={(0, -0.1)}]
    \draw (0.4, 0) circle (.2);
    \draw (1.2, 0) circle (.2);
    \draw (2.0, 0) circle (.2);
    \fill (1.2, 0) circle (2pt);
    \draw (0.6, 0) -- (1.0, 0);
    \draw (1.4, 0) -- (1.8, 0);
    \node at (0.8, -0.2) {\small $0$};
    \node at (1.6, -0.2) {\small $1$};
    \end{tikzpicture}
    + \gamma \;
    \begin{tikzpicture}[baseline={(0, -0.1)}]
    \draw (0.4, 0) circle (.2);
    \draw (1.2, 0) circle (.2);
    \draw (2.0, 0) circle (.2);
    \fill (1.2, 0) circle (2pt);
    \draw (0.6, 0) -- (1.0, 0);
    \draw (1.4, 0) -- (1.8, 0);
    \node at (0.8, -0.2) {\small $1$};
    \node at (1.6, -0.2) {\small $0$};
    \end{tikzpicture}
    ,
\end{cases}
\\
&\mathcal{K}^{(4)}:
    J \;
    \begin{tikzpicture}[baseline={(0, -0.1)}]
    \draw (0.4, 0) circle (.2);
    \draw (1.2, 0) circle (.2);
    \draw (2.0, 0) circle (.2);
    \fill (1.2, 0) circle (2pt);
    \draw (0.6, 0) -- (1.0, 0);
    \draw (1.4, 0) -- (1.8, 0);
    \node at (0.8, -0.2) {\small $1$};
    \node at (1.6, -0.2) {\small $0$};
    \end{tikzpicture}
    - \gamma \;
    \begin{tikzpicture}[baseline={(0, -0.1)}]
    \draw (0.4, 0) circle (.2);
    \draw (1.2, 0) circle (.2);
    \draw (2.0, 0) circle (.2);
    \fill (1.2, 0) circle (2pt);
    \draw (0.6, 0) -- (1.0, 0);
    \draw (1.4, 0) -- (1.8, 0);
    \node at (0.8, -0.2) {\small $0$};
    \node at (1.6, -0.2) {\small $0$};
    \end{tikzpicture}
    - \gamma \;
    \begin{tikzpicture}[baseline={(0, -0.1)}]
    \draw (0.4, 0) circle (.2);
    \draw (1.2, 0) circle (.2);
    \draw (2.0, 0) circle (.2);
    \fill (1.2, 0) circle (2pt);
    \draw (0.6, 0) -- (1.0, 0);
    \draw (1.4, 0) -- (1.8, 0);
    \node at (0.8, -0.2) {\small $1$};
    \node at (1.6, -0.2) {\small $1$};
    \end{tikzpicture}
    .
\end{alignat}

\subsection{Notations for $H_{\bar{\gamma} \bar{J}}$: $\mathcal{K}^{(2)}$}
\label{appendix: Explicit forms of the notations for K2}
Motivated by the root states of $\mathcal{K}^{(2)}$ in Eq.~\eqref{eqn: root states of Krylov subspaces of a M, Q = 3, 1 model}, the singly-rectangularized states that meet the requirements in \ref{subsubsection: Notations for K2} are:
\begin{alignat}{8}
%%%%%%%%%%%%%%%%%%%%% 0 %%%%%%%%%%%%%%%%%%%%%
\begin{tikzpicture}[baseline={(0, -0.1)}]
\draw (0.0, 0.25) rectangle (2.4, -0.25);
\draw (0.4, 0) circle (.2);
\draw (1.2, 0) circle (.2);
\draw (2.0, 0) circle (.2);
% \fill (1.2, 0) circle (2pt);
\draw (0.6, 0) -- (1.0, 0);
\draw (1.4, 0) -- (1.8, 0);
\node at (0, 0) {$\cdots$};
\node at (2.5, 0) {$\cdots$};
\node at (1.2, 0.4) {\small $0$};
\end{tikzpicture}
\equiv \; &
J
\begin{tikzpicture}[baseline={(0, -0.1)}]
\draw (0.4, 0) circle (.2);
\draw (1.2, 0) circle (.2);
\draw (2.0, 0) circle (.2);
% \fill (1.2, 0) circle (2pt);
\draw (0.6, 0) -- (1.0, 0);
\draw (1.4, 0) -- (1.8, 0);
\node at (0, 0) {$\cdots$};
\node at (2.5, 0) {$\cdots$};
\node at (0.8, -0.2) {\small $0$};
\node at (1.6, -0.2) {\small $0$};
\end{tikzpicture}
- \gamma
\begin{tikzpicture}[baseline={(0, -0.1)}]
\draw (0.4, 0) circle (.2);
\draw (1.2, 0) circle (.2);
\draw (2.0, 0) circle (.2);
% \fill (1.2, 0) circle (2pt);
\draw (0.6, 0) -- (1.0, 0);
\draw (1.4, 0) -- (1.8, 0);
\node at (0, 0) {$\cdots$};
\node at (2.5, 0) {$\cdots$};
\node at (0.8, -0.2) {\small $0$};
\node at (1.6, -0.2) {\small $1$};
\end{tikzpicture}
+ \gamma
\begin{tikzpicture}[baseline={(0, -0.1)}]
\draw (0.4, 0) circle (.2);
\draw (1.2, 0) circle (.2);
\draw (2.0, 0) circle (.2);
% \fill (1.2, 0) circle (2pt);
\draw (0.6, 0) -- (1.0, 0);
\draw (1.4, 0) -- (1.8, 0);
\node at (0, 0) {$\cdots$};
\node at (2.5, 0) {$\cdots$};
\node at (0.8, -0.2) {\small $1$};
\node at (1.6, -0.2) {\small $0$};
\end{tikzpicture}
,
\notag \\
%%%%%%%%%%%%%%%%%%%%%%%%%%%%%%%%%%
%%%%%%%%%%%%%%%%%%%%% 0L %%%%%%%%%%%%%%%%%%%%%
\begin{tikzpicture}[baseline={(0, -0.1)}]
\draw (0.0, 0.25) rectangle (2.4, -0.25);
\draw (0.4, 0) circle (.2);
\draw (1.2, 0) circle (.2);
\draw (2.0, 0) circle (.2);
% \fill (1.2, 0) circle (2pt);
\draw (0.6, 0) -- (1.0, 0);
\draw (1.4, 0) -- (1.8, 0);
\draw[<-] (0.65, 0.4) -- (0.95, 0.4);
\node at (0, 0) {$\cdots$};
\node at (2.5, 0) {$\cdots$};
\node at (1.2, 0.4) {\small $0$};
\end{tikzpicture}
\equiv \; &
\gamma^2
\begin{tikzpicture}[baseline={(0, -0.1)}]
\draw (0.4, 0) circle (.2);
\draw (1.2, 0) circle (.2);
\draw (2.0, 0) circle (.2);
% \fill (1.2, 0) circle (2pt);
\draw (0.6, 0) -- (1.0, 0);
\draw (1.4, 0) -- (1.8, 0);
\node at (0, 0) {$\cdots$};
\node at (2.5, 0) {$\cdots$};
\node at (0.8, -0.2) {\small $0$};
\node at (1.6, -0.2) {\small $1$};
\end{tikzpicture}
- \gamma^2
\begin{tikzpicture}[baseline={(0, -0.1)}]
\draw (0.4, 0) circle (.2);
\draw (1.2, 0) circle (.2);
\draw (2.0, 0) circle (.2);
% \fill (1.2, 0) circle (2pt);
\draw (0.6, 0) -- (1.0, 0);
\draw (1.4, 0) -- (1.8, 0);
\node at (0, 0) {$\cdots$};
\node at (2.5, 0) {$\cdots$};
\node at (0.8, -0.2) {\small $1$};
\node at (1.6, -0.2) {\small $0$};
\end{tikzpicture}
,
\notag \\
%%%%%%%%%%%%%%%%%%%%% 0R %%%%%%%%%%%%%%%%%%%%%
%%%%%%%%%%%%%%%%%%%%%%%%%%%%%%%%%%
\begin{tikzpicture}[baseline={(0, -0.1)}]
\draw (0.0, 0.25) rectangle (2.4, -0.25);
\draw (0.4, 0) circle (.2);
\draw (1.2, 0) circle (.2);
\draw (2.0, 0) circle (.2);
% \fill (1.2, 0) circle (2pt);
\draw (0.6, 0) -- (1.0, 0);
\draw (1.4, 0) -- (1.8, 0);
\draw[->] (1.45, 0.4) -- (1.75, 0.4);
\node at (0, 0) {$\cdots$};
\node at (2.5, 0) {$\cdots$};
\node at (1.2, 0.4) {\small $0$};
\end{tikzpicture}
\equiv \; &
- \gamma J
\begin{tikzpicture}[baseline={(0, -0.1)}]
\draw (0.4, 0) circle (.2);
\draw (1.2, 0) circle (.2);
\draw (2.0, 0) circle (.2);
% \fill (1.2, 0) circle (2pt);
\draw (0.6, 0) -- (1.0, 0);
\draw (1.4, 0) -- (1.8, 0);
\node at (0, 0) {$\cdots$};
\node at (2.5, 0) {$\cdots$};
\node at (0.8, -0.2) {\small $0$};
\node at (1.6, -0.2) {\small $0$};
\end{tikzpicture}
+ (\gamma^2+J^2)
\begin{tikzpicture}[baseline={(0, -0.1)}]
\draw (0.4, 0) circle (.2);
\draw (1.2, 0) circle (.2);
\draw (2.0, 0) circle (.2);
% \fill (1.2, 0) circle (2pt);
\draw (0.6, 0) -- (1.0, 0);
\draw (1.4, 0) -- (1.8, 0);
\node at (0, 0) {$\cdots$};
\node at (2.5, 0) {$\cdots$};
\node at (0.8, -0.2) {\small $0$};
\node at (1.6, -0.2) {\small $1$};
\end{tikzpicture}
\notag \\
& - \gamma^2
\begin{tikzpicture}[baseline={(0, -0.1)}]
\draw (0.4, 0) circle (.2);
\draw (1.2, 0) circle (.2);
\draw (2.0, 0) circle (.2);
% \fill (1.2, 0) circle (2pt);
\draw (0.6, 0) -- (1.0, 0);
\draw (1.4, 0) -- (1.8, 0);
\node at (0, 0) {$\cdots$};
\node at (2.5, 0) {$\cdots$};
\node at (0.8, -0.2) {\small $1$};
\node at (1.6, -0.2) {\small $0$};
\end{tikzpicture}
+ \gamma J
\begin{tikzpicture}[baseline={(0, -0.1)}]
\draw (0.4, 0) circle (.2);
\draw (1.2, 0) circle (.2);
\draw (2.0, 0) circle (.2);
% \fill (1.2, 0) circle (2pt);
\draw (0.6, 0) -- (1.0, 0);
\draw (1.4, 0) -- (1.8, 0);
\node at (0, 0) {$\cdots$};
\node at (2.5, 0) {$\cdots$};
\node at (0.8, -0.2) {\small $1$};
\node at (1.6, -0.2) {\small $1$};
\end{tikzpicture},
\notag \\
%%%%%%%%%%%%%%%%%%%%%%%%%%%%%%%%%%
%%%%%%%%%%%%%%%%%%%%% 1 %%%%%%%%%%%%%%%%%%%%%
\begin{tikzpicture}[baseline={(0, -0.1)}]
\draw (0.0, 0.25) rectangle (2.4, -0.25);
\draw (0.4, 0) circle (.2);
\draw (1.2, 0) circle (.2);
\draw (2.0, 0) circle (.2);
% \fill (1.2, 0) circle (2pt);
\draw (0.6, 0) -- (1.0, 0);
\draw (1.4, 0) -- (1.8, 0);
\node at (0, 0) {$\cdots$};
\node at (2.5, 0) {$\cdots$};
\node at (1.2, 0.4) {\small $1$};
\end{tikzpicture}
\equiv \; &
- \gamma
\begin{tikzpicture}[baseline={(0, -0.1)}]
\draw (0.4, 0) circle (.2);
\draw (1.2, 0) circle (.2);
\draw (2.0, 0) circle (.2);
% \fill (1.2, 0) circle (2pt);
\draw (0.6, 0) -- (1.0, 0);
\draw (1.4, 0) -- (1.8, 0);
\node at (0, 0) {$\cdots$};
\node at (2.5, 0) {$\cdots$};
\node at (0.8, -0.2) {\small $0$};
\node at (1.6, -0.2) {\small $1$};
\end{tikzpicture}
+ \gamma
\begin{tikzpicture}[baseline={(0, -0.1)}]
\draw (0.4, 0) circle (.2);
\draw (1.2, 0) circle (.2);
\draw (2.0, 0) circle (.2);
% \fill (1.2, 0) circle (2pt);
\draw (0.6, 0) -- (1.0, 0);
\draw (1.4, 0) -- (1.8, 0);
\node at (0, 0) {$\cdots$};
\node at (2.5, 0) {$\cdots$};
\node at (0.8, -0.2) {\small $1$};
\node at (1.6, -0.2) {\small $0$};
\end{tikzpicture}
+ J
\begin{tikzpicture}[baseline={(0, -0.1)}]
\draw (0.4, 0) circle (.2);
\draw (1.2, 0) circle (.2);
\draw (2.0, 0) circle (.2);
% \fill (1.2, 0) circle (2pt);
\draw (0.6, 0) -- (1.0, 0);
\draw (1.4, 0) -- (1.8, 0);
\node at (0, 0) {$\cdots$};
\node at (2.5, 0) {$\cdots$};
\node at (0.8, -0.2) {\small $1$};
\node at (1.6, -0.2) {\small $1$};
\end{tikzpicture}
\notag \\
%%%%%%%%%%%%%%%%%%%%%%%%%%%%%%%%%%
%%%%%%%%%%%%%%%%%%%%% 1L %%%%%%%%%%%%%%%%%%%%%
\begin{tikzpicture}[baseline={(0, -0.1)}]
\draw (0.0, 0.25) rectangle (2.4, -0.25);
\draw (0.4, 0) circle (.2);
\draw (1.2, 0) circle (.2);
\draw (2.0, 0) circle (.2);
% \fill (1.2, 0) circle (2pt);
\draw (0.6, 0) -- (1.0, 0);
\draw (1.4, 0) -- (1.8, 0);
\draw[<-] (0.65, 0.4) -- (0.95, 0.4);
\node at (0, 0) {$\cdots$};
\node at (2.5, 0) {$\cdots$};
\node at (1.2, 0.4) {\small $1$};
\end{tikzpicture}
\equiv \; &
\gamma J
\begin{tikzpicture}[baseline={(0, -0.1)}]
\draw (0.4, 0) circle (.2);
\draw (1.2, 0) circle (.2);
\draw (2.0, 0) circle (.2);
% \fill (1.2, 0) circle (2pt);
\draw (0.6, 0) -- (1.0, 0);
\draw (1.4, 0) -- (1.8, 0);
\node at (0, 0) {$\cdots$};
\node at (2.5, 0) {$\cdots$};
\node at (0.8, -0.2) {\small $0$};
\node at (1.6, -0.2) {\small $0$};
\end{tikzpicture}
+ (\gamma^2 + J^2)
\begin{tikzpicture}[baseline={(0, -0.1)}]
\draw (0.4, 0) circle (.2);
\draw (1.2, 0) circle (.2);
\draw (2.0, 0) circle (.2);
% \fill (1.2, 0) circle (2pt);
\draw (0.6, 0) -- (1.0, 0);
\draw (1.4, 0) -- (1.8, 0);
\node at (0, 0) {$\cdots$};
\node at (2.5, 0) {$\cdots$};
\node at (0.8, -0.2) {\small $0$};
\node at (1.6, -0.2) {\small $1$};
\end{tikzpicture}
\notag \\
& - \gamma^2
\begin{tikzpicture}[baseline={(0, -0.1)}]
\draw (0.4, 0) circle (.2);
\draw (1.2, 0) circle (.2);
\draw (2.0, 0) circle (.2);
% \fill (1.2, 0) circle (2pt);
\draw (0.6, 0) -- (1.0, 0);
\draw (1.4, 0) -- (1.8, 0);
\node at (0, 0) {$\cdots$};
\node at (2.5, 0) {$\cdots$};
\node at (0.8, -0.2) {\small $1$};
\node at (1.6, -0.2) {\small $0$};
\end{tikzpicture}
- \gamma J
\begin{tikzpicture}[baseline={(0, -0.1)}]
\draw (0.4, 0) circle (.2);
\draw (1.2, 0) circle (.2);
\draw (2.0, 0) circle (.2);
% \fill (1.2, 0) circle (2pt);
\draw (0.6, 0) -- (1.0, 0);
\draw (1.4, 0) -- (1.8, 0);
\node at (0, 0) {$\cdots$};
\node at (2.5, 0) {$\cdots$};
\node at (0.8, -0.2) {\small $1$};
\node at (1.6, -0.2) {\small $1$};
\end{tikzpicture}
\notag \\
%%%%%%%%%%%%%%%%%%%%% 1R %%%%%%%%%%%%%%%%%%%%%
%%%%%%%%%%%%%%%%%%%%%%%%%%%%%%%%%%
\begin{tikzpicture}[baseline={(0, -0.1)}]
\draw (0.0, 0.25) rectangle (2.4, -0.25);
\draw (0.4, 0) circle (.2);
\draw (1.2, 0) circle (.2);
\draw (2.0, 0) circle (.2);
% \fill (1.2, 0) circle (2pt);
\draw (0.6, 0) -- (1.0, 0);
\draw (1.4, 0) -- (1.8, 0);
\draw[->] (1.45, 0.4) -- (1.75, 0.4);
\node at (0, 0) {$\cdots$};
\node at (2.5, 0) {$\cdots$};
\node at (1.2, 0.4) {\small $1$};
\end{tikzpicture}
\equiv \; &
\gamma^2
\begin{tikzpicture}[baseline={(0, -0.1)}]
\draw (0.4, 0) circle (.2);
\draw (1.2, 0) circle (.2);
\draw (2.0, 0) circle (.2);
% \fill (1.2, 0) circle (2pt);
\draw (0.6, 0) -- (1.0, 0);
\draw (1.4, 0) -- (1.8, 0);
\node at (0, 0) {$\cdots$};
\node at (2.5, 0) {$\cdots$};
\node at (0.8, -0.2) {\small $0$};
\node at (1.6, -0.2) {\small $1$};
\end{tikzpicture}
- \gamma^2
\begin{tikzpicture}[baseline={(0, -0.1)}]
\draw (0.4, 0) circle (.2);
\draw (1.2, 0) circle (.2);
\draw (2.0, 0) circle (.2);
% \fill (1.2, 0) circle (2pt);
\draw (0.6, 0) -- (1.0, 0);
\draw (1.4, 0) -- (1.8, 0);
\node at (0, 0) {$\cdots$};
\node at (2.5, 0) {$\cdots$};
\node at (0.8, -0.2) {\small $1$};
\node at (1.6, -0.2) {\small $0$};
\end{tikzpicture}
.\end{alignat}
One can easily verify that
\begin{alignat}{1}
\hat{h}_m^{(L)} \;
\begin{tikzpicture}[baseline={(0, -0.1)}]
\draw (0.0, 0.25) rectangle (2.4, -0.25);
\draw (0.4, 0) circle (.2);
\draw (1.2, 0) circle (.2);
\draw (2.0, 0) circle (.2);
\fill (1.2, 0) circle (2pt);
\draw (0.6, 0) -- (1.0, 0);
\draw (1.4, 0) -- (1.8, 0);
\node at (0, 0) {$\cdots$};
\node at (2.5, 0) {$\cdots$};
\node at (1.2, 0.4) {\small $r_j$};
\node at (1.2, -0.4) {\small $m$};
\end{tikzpicture}
&=
\begin{tikzpicture}[baseline={(0, -0.1)}]
\draw (0.0, 0.25) rectangle (2.4, -0.25);
\draw (0.4, 0) circle (.2);
\draw (1.2, 0) circle (.2);
\draw (2.0, 0) circle (.2);
\fill (0.4, 0) circle (2pt);
\draw (0.6, 0) -- (1.0, 0);
\draw (1.4, 0) -- (1.8, 0);
\draw[<-] (0.65, 0.4) -- (0.95, 0.4);
\node at (0, 0) {$\cdots$};
\node at (2.5, 0) {$\cdots$};
\node at (1.2, 0.4) {\small $r_j$};
\node at (1.2, -0.4) {\small $m$};
\end{tikzpicture}
, \notag \\
%%%%%%%%%%%%%%%%%%%%%%%%%%%%%%%%%%%%%%%%%%%%%%%%%%%%%%
\hat{h}_m^{(R)}
\begin{tikzpicture}[baseline={(0, -0.1)}]
\draw (0.0, 0.25) rectangle (2.4, -0.25);
\draw (0.4, 0) circle (.2);
\draw (1.2, 0) circle (.2);
\draw (2.0, 0) circle (.2);
\fill (1.2, 0) circle (2pt);
\draw (0.6, 0) -- (1.0, 0);
\draw (1.4, 0) -- (1.8, 0);
\node at (0, 0) {$\cdots$};
\node at (2.5, 0) {$\cdots$};
\node at (1.2, 0.4) {\small $r_j$};
\node at (1.2, -0.4) {\small $m$};
\end{tikzpicture}
&=
\begin{tikzpicture}[baseline={(0, -0.1)}]
\draw (0.0, 0.25) rectangle (2.4, -0.25);
\draw (0.4, 0) circle (.2);
\draw (1.2, 0) circle (.2);
\draw (2.0, 0) circle (.2);
\fill (2.0, 0) circle (2pt);
\draw (0.6, 0) -- (1.0, 0);
\draw (1.4, 0) -- (1.8, 0);
\draw[->] (1.45, 0.4) -- (1.75, 0.4);
\node at (0, 0) {$\cdots$};
\node at (2.5, 0) {$\cdots$};
\node at (1.2, 0.4) {\small $r_j$};
\node at (1.2, -0.4) {\small $m$};
\end{tikzpicture}
.\end{alignat}
In addition, it is also straight forward to show that
\begin{alignat}{1}
\label{eqn: The coefficients for spanned space}
& \hat{h}_{m-1}^{(R)} \;
\begin{tikzpicture}[baseline={(0, -0.1)}]
\draw (0.0, 0.25) rectangle (2.4, -0.25);
\draw (0.4, 0) circle (.2);
\draw (1.2, 0) circle (.2);
\draw (2.0, 0) circle (.2);
\fill (0.4, 0) circle (2pt);
\draw (0.6, 0) -- (1.0, 0);
\draw (1.4, 0) -- (1.8, 0);
\draw[<-] (0.65, 0.4) -- (0.95, 0.4);
\node at (0, 0) {$\cdots$};
\node at (2.5, 0) {$\cdots$};
\node at (1.2, 0.4) {\small $0$};
\node at (1.2, -0.4) {\small $m$};
\end{tikzpicture}
=
\gamma^2
\begin{tikzpicture}[baseline={(0, -0.1)}]
\draw (0.0, 0.25) rectangle (2.4, -0.25);
\draw (0.4, 0) circle (.2);
\draw (1.2, 0) circle (.2);
\draw (2.0, 0) circle (.2);
\fill (1.2, 0) circle (2pt);
\draw (0.6, 0) -- (1.0, 0);
\draw (1.4, 0) -- (1.8, 0);
\node at (0, 0) {$\cdots$};
\node at (2.5, 0) {$\cdots$};
\node at (1.2, 0.4) {\small $1$};
\end{tikzpicture}
, \notag \\
%%%%%%%%%%%%%%%%%%%%%%%%%%%%
& \hat{h}_{m+1}^{(L)} \;
\begin{tikzpicture}[baseline={(0, -0.1)}]
\draw (0.0, 0.25) rectangle (2.4, -0.25);
\draw (0.4, 0) circle (.2);
\draw (1.2, 0) circle (.2);
\draw (2.0, 0) circle (.2);
\fill (2.0, 0) circle (2pt);
\draw (0.6, 0) -- (1.0, 0);
\draw (1.4, 0) -- (1.8, 0);
\draw[->] (1.45, 0.4) -- (1.75, 0.4);
\node at (0, 0) {$\cdots$};
\node at (2.5, 0) {$\cdots$};
\node at (1.2, 0.4) {\small $0$};
\node at (1.2, -0.4) {\small $m$};
\end{tikzpicture}
=
\left( 2\gamma^2+J^2 \right)
\begin{tikzpicture}[baseline={(0, -0.1)}]
\draw (0.0, 0.25) rectangle (2.4, -0.25);
\draw (0.4, 0) circle (.2);
\draw (1.2, 0) circle (.2);
\draw (2.0, 0) circle (.2);
\fill (1.2, 0) circle (2pt);
\draw (0.6, 0) -- (1.0, 0);
\draw (1.4, 0) -- (1.8, 0);
\node at (0, 0) {$\cdots$};
\node at (2.5, 0) {$\cdots$};
\node at (1.2, 0.4) {\small $0$};
\end{tikzpicture}
- \gamma^2
\begin{tikzpicture}[baseline={(0, -0.1)}]
\draw (0.0, 0.25) rectangle (2.4, -0.25);
\draw (0.4, 0) circle (.2);
\draw (1.2, 0) circle (.2);
\draw (2.0, 0) circle (.2);
\fill (1.2, 0) circle (2pt);
\draw (0.6, 0) -- (1.0, 0);
\draw (1.4, 0) -- (1.8, 0);
\node at (0, 0) {$\cdots$};
\node at (2.5, 0) {$\cdots$};
\node at (1.2, 0.4) {\small $1$};
\end{tikzpicture}
,\notag \\
%%%%%%%%%%%%%%%%%%%%%%%%%%%%
& \hat{h}_{m-1}^{(R)} \;
\begin{tikzpicture}[baseline={(0, -0.1)}]
\draw (0.0, 0.25) rectangle (2.4, -0.25);
\draw (0.4, 0) circle (.2);
\draw (1.2, 0) circle (.2);
\draw (2.0, 0) circle (.2);
\fill (0.4, 0) circle (2pt);
\draw (0.6, 0) -- (1.0, 0);
\draw (1.4, 0) -- (1.8, 0);
\draw[<-] (0.65, 0.4) -- (0.95, 0.4);
\node at (0, 0) {$\cdots$};
\node at (2.5, 0) {$\cdots$};
\node at (1.2, 0.4) {\small $1$};
\node at (1.2, -0.4) {\small $m$};
\end{tikzpicture}
=
- \gamma^2
\begin{tikzpicture}[baseline={(0, -0.1)}]
\draw (0.0, 0.25) rectangle (2.4, -0.25);
\draw (0.4, 0) circle (.2);
\draw (1.2, 0) circle (.2);
\draw (2.0, 0) circle (.2);
\fill (1.2, 0) circle (2pt);
\draw (0.6, 0) -- (1.0, 0);
\draw (1.4, 0) -- (1.8, 0);
\node at (0, 0) {$\cdots$};
\node at (2.5, 0) {$\cdots$};
\node at (1.2, 0.4) {\small $0$};
\end{tikzpicture}
+ \left( 2\gamma^2+J^2 \right)
\begin{tikzpicture}[baseline={(0, -0.1)}]
\draw (0.0, 0.25) rectangle (2.4, -0.25);
\draw (0.4, 0) circle (.2);
\draw (1.2, 0) circle (.2);
\draw (2.0, 0) circle (.2);
\fill (1.2, 0) circle (2pt);
\draw (0.6, 0) -- (1.0, 0);
\draw (1.4, 0) -- (1.8, 0);
\node at (0, 0) {$\cdots$};
\node at (2.5, 0) {$\cdots$};
\node at (1.2, 0.4) {\small $1$};
\end{tikzpicture}
, \notag \\
%%%%%%%%%%%%%%%%%%%%%%%%%%%%
& \hat{h}_{m+1}^{(L)} \;
\begin{tikzpicture}[baseline={(0, -0.1)}]
\draw (0.0, 0.25) rectangle (2.4, -0.25);
\draw (0.4, 0) circle (.2);
\draw (1.2, 0) circle (.2);
\draw (2.0, 0) circle (.2);
\fill (2.0, 0) circle (2pt);
\draw (0.6, 0) -- (1.0, 0);
\draw (1.4, 0) -- (1.8, 0);
\draw[->] (1.45, 0.4) -- (1.75, 0.4);
\node at (0, 0) {$\cdots$};
\node at (2.5, 0) {$\cdots$};
\node at (1.2, 0.4) {\small $1$};
\node at (1.2, -0.4) {\small $m$};
\end{tikzpicture}
=
\gamma^2
\begin{tikzpicture}[baseline={(0, -0.1)}]
\draw (0.0, 0.25) rectangle (2.4, -0.25);
\draw (0.4, 0) circle (.2);
\draw (1.2, 0) circle (.2);
\draw (2.0, 0) circle (.2);
\fill (1.2, 0) circle (2pt);
\draw (0.6, 0) -- (1.0, 0);
\draw (1.4, 0) -- (1.8, 0);
\node at (0, 0) {$\cdots$};
\node at (2.5, 0) {$\cdots$};
\node at (1.2, 0.4) {\small $0$};
\end{tikzpicture}
,\end{alignat}
which then implies Eq.~\eqref{eqn: H^2 acting on rectangularized states}.

\section{$H_{\bar{\gamma}\bar{J}}$: $Q=1$, $\mathcal{K}^{(4)}$ subspaces}
\label{appendix: Q=1, odd M, K=4}

Motivated by the root state of $\mathcal{K}^{(4)}$ in Eq.~\eqref{eqn: root states of Krylov subspaces of a M, Q = 3, 1 model}, we define the doubly neutral-/left-/right-rectangularized states that enable us to construct 4-dimensional Krylov subspaces as follows:

\begin{alignat}{12}
\label{eqn: Doubly rectangular exact form}
%%%%%%%%%%%%%%%%%%%%% 0 %%%%%%%%%%%%%%%%%%%%%
\begin{tikzpicture}[baseline={(0, -0.1)}]
\draw (0.05, 0.25) rectangle (2.35, -0.25);
\draw (0.0, 0.3) rectangle (2.4, -0.3);
\draw (0.4, 0) circle (.2);
\draw (1.2, 0) circle (.2);
\draw (2.0, 0) circle (.2);
\draw (0.6, 0) -- (1.0, 0);
\draw (1.4, 0) -- (1.8, 0);
\node at (0, 0) {$\cdots$};
\node at (2.5, 0) {$\cdots$};
\node at (1.2, 0.45) {\small $0$};
\end{tikzpicture}
\equiv \; &
- \gamma
\begin{tikzpicture}[baseline={(0, -0.1)}]
\draw (0.4, 0) circle (.2);
\draw (1.2, 0) circle (.2);
\draw (2.0, 0) circle (.2);
\draw (0.6, 0) -- (1.0, 0);
\draw (1.4, 0) -- (1.8, 0);
\node at (0, 0) {$\cdots$};
\node at (2.5, 0) {$\cdots$};
\node at (0.8, -0.2) {\small $0$};
\node at (1.6, -0.2) {\small $0$};
\end{tikzpicture}
+ J
\begin{tikzpicture}[baseline={(0, -0.1)}]
\draw (0.4, 0) circle (.2);
\draw (1.2, 0) circle (.2);
\draw (2.0, 0) circle (.2);
\draw (0.6, 0) -- (1.0, 0);
\draw (1.4, 0) -- (1.8, 0);
\node at (0, 0) {$\cdots$};
\node at (2.5, 0) {$\cdots$};
\node at (0.8, -0.2) {\small $1$};
\node at (1.6, -0.2) {\small $0$};
\end{tikzpicture}
- \gamma
\begin{tikzpicture}[baseline={(0, -0.1)}]
\draw (0.4, 0) circle (.2);
\draw (1.2, 0) circle (.2);
\draw (2.0, 0) circle (.2);
\draw (0.6, 0) -- (1.0, 0);
\draw (1.4, 0) -- (1.8, 0);
\node at (0, 0) {$\cdots$};
\node at (2.5, 0) {$\cdots$};
\node at (0.8, -0.2) {\small $1$};
\node at (1.6, -0.2) {\small $1$};
\end{tikzpicture}
,
\notag \\
%%%%%%%%%%%%%%%%%%%%%%%%%%%%%%%%%%
%%%%%%%%%%%%%%%%%%%%% 0L %%%%%%%%%%%%%%%%%%%%%
\begin{tikzpicture}[baseline={(0, -0.1)}]
\draw (0.05, 0.25) rectangle (2.35, -0.25);
\draw (0.0, 0.3) rectangle (2.4, -0.3);
\draw (0.4, 0) circle (.2);
\draw (1.2, 0) circle (.2);
\draw (2.0, 0) circle (.2);
\draw (0.6, 0) -- (1.0, 0);
\draw (1.4, 0) -- (1.8, 0);
\draw[<-] (0.65, 0.45) -- (0.95, 0.45);
\node at (0, 0) {$\cdots$};
\node at (2.5, 0) {$\cdots$};
\node at (1.2, 0.45) {\small $0$};
\end{tikzpicture}
\equiv \; &
(\gamma^2 + J^2)
\begin{tikzpicture}[baseline={(0, -0.1)}]
\draw (0.4, 0) circle (.2);
\draw (1.2, 0) circle (.2);
\draw (2.0, 0) circle (.2);
\draw (0.6, 0) -- (1.0, 0);
\draw (1.4, 0) -- (1.8, 0);
\node at (0, 0) {$\cdots$};
\node at (2.5, 0) {$\cdots$};
\node at (0.8, -0.2) {\small $0$};
\node at (1.6, -0.2) {\small $0$};
\end{tikzpicture}
- \gamma J
\begin{tikzpicture}[baseline={(0, -0.1)}]
\draw (0.4, 0) circle (.2);
\draw (1.2, 0) circle (.2);
\draw (2.0, 0) circle (.2);
\draw (0.6, 0) -- (1.0, 0);
\draw (1.4, 0) -- (1.8, 0);
\node at (0, 0) {$\cdots$};
\node at (2.5, 0) {$\cdots$};
\node at (0.8, -0.2) {\small $0$};
\node at (1.6, -0.2) {\small $1$};
\end{tikzpicture}
\notag \\ &
- \gamma J
\begin{tikzpicture}[baseline={(0, -0.1)}]
\draw (0.4, 0) circle (.2);
\draw (1.2, 0) circle (.2);
\draw (2.0, 0) circle (.2);
\draw (0.6, 0) -- (1.0, 0);
\draw (1.4, 0) -- (1.8, 0);
\node at (0, 0) {$\cdots$};
\node at (2.5, 0) {$\cdots$};
\node at (0.8, -0.2) {\small $1$};
\node at (1.6, -0.2) {\small $0$};
\end{tikzpicture}
+ \gamma^2
\begin{tikzpicture}[baseline={(0, -0.1)}]
\draw (0.4, 0) circle (.2);
\draw (1.2, 0) circle (.2);
\draw (2.0, 0) circle (.2);
\draw (0.6, 0) -- (1.0, 0);
\draw (1.4, 0) -- (1.8, 0);
\node at (0, 0) {$\cdots$};
\node at (2.5, 0) {$\cdots$};
\node at (0.8, -0.2) {\small $1$};
\node at (1.6, -0.2) {\small $1$};
\end{tikzpicture}
,
\notag \\
%%%%%%%%%%%%%%%%%%%%% 0R %%%%%%%%%%%%%%%%%%%%%
%%%%%%%%%%%%%%%%%%%%%%%%%%%%%%%%%%
\begin{tikzpicture}[baseline={(0, -0.1)}]
\draw (0.05, 0.25) rectangle (2.35, -0.25);
\draw (0.0, 0.3) rectangle (2.4, -0.3);
\draw (0.4, 0) circle (.2);
\draw (1.2, 0) circle (.2);
\draw (2.0, 0) circle (.2);
\draw (0.6, 0) -- (1.0, 0);
\draw (1.4, 0) -- (1.8, 0);
\draw[->] (1.45, 0.45) -- (1.75, 0.45);
\node at (0, 0) {$\cdots$};
\node at (2.5, 0) {$\cdots$};
\node at (1.2, 0.45) {\small $0$};
\end{tikzpicture}
\equiv \; &
- \gamma^2
\begin{tikzpicture}[baseline={(0, -0.1)}]
\draw (0.4, 0) circle (.2);
\draw (1.2, 0) circle (.2);
\draw (2.0, 0) circle (.2);
\draw (0.6, 0) -- (1.0, 0);
\draw (1.4, 0) -- (1.8, 0);
\node at (0, 0) {$\cdots$};
\node at (2.5, 0) {$\cdots$};
\node at (0.8, -0.2) {\small $0$};
\node at (1.6, -0.2) {\small $0$};
\end{tikzpicture}
- \gamma J
\begin{tikzpicture}[baseline={(0, -0.1)}]
\draw (0.4, 0) circle (.2);
\draw (1.2, 0) circle (.2);
\draw (2.0, 0) circle (.2);
\draw (0.6, 0) -- (1.0, 0);
\draw (1.4, 0) -- (1.8, 0);
\node at (0, 0) {$\cdots$};
\node at (2.5, 0) {$\cdots$};
\node at (0.8, -0.2) {\small $0$};
\node at (1.6, -0.2) {\small $1$};
\end{tikzpicture}
\notag \\ &
- \gamma J
\begin{tikzpicture}[baseline={(0, -0.1)}]
\draw (0.4, 0) circle (.2);
\draw (1.2, 0) circle (.2);
\draw (2.0, 0) circle (.2);
\draw (0.6, 0) -- (1.0, 0);
\draw (1.4, 0) -- (1.8, 0);
\node at (0, 0) {$\cdots$};
\node at (2.5, 0) {$\cdots$};
\node at (0.8, -0.2) {\small $1$};
\node at (1.6, -0.2) {\small $0$};
\end{tikzpicture}
+ (\gamma^2 + J^2)
\begin{tikzpicture}[baseline={(0, -0.1)}]
\draw (0.4, 0) circle (.2);
\draw (1.2, 0) circle (.2);
\draw (2.0, 0) circle (.2);
\draw (0.6, 0) -- (1.0, 0);
\draw (1.4, 0) -- (1.8, 0);
\node at (0, 0) {$\cdots$};
\node at (2.5, 0) {$\cdots$};
\node at (0.8, -0.2) {\small $1$};
\node at (1.6, -0.2) {\small $1$};
\end{tikzpicture},
\notag \\
%%%%%%%%%%%%%%%%%%%%%%%%%%%%%%%%%%
%%%%%%%%%%%%%%%%%%%%% 1 %%%%%%%%%%%%%%%%%%%%%
\begin{tikzpicture}[baseline={(0, -0.1)}]
\draw (0.05, 0.25) rectangle (2.35, -0.25);
\draw (0.0, 0.3) rectangle (2.4, -0.3);
\draw (0.4, 0) circle (.2);
\draw (1.2, 0) circle (.2);
\draw (2.0, 0) circle (.2);
\draw (0.6, 0) -- (1.0, 0);
\draw (1.4, 0) -- (1.8, 0);
\node at (0, 0) {$\cdots$};
\node at (2.5, 0) {$\cdots$};
\node at (1.2, 0.45) {\small $1$};
\end{tikzpicture}
\equiv \; &
-2 \gamma (\gamma^2 + J^2)
\begin{tikzpicture}[baseline={(0, -0.1)}]
\draw (0.4, 0) circle (.2);
\draw (1.2, 0) circle (.2);
\draw (2.0, 0) circle (.2);
\draw (0.6, 0) -- (1.0, 0);
\draw (1.4, 0) -- (1.8, 0);
\node at (0, 0) {$\cdots$};
\node at (2.5, 0) {$\cdots$};
\node at (0.8, -0.2) {\small $0$};
\node at (1.6, -0.2) {\small $0$};
\end{tikzpicture}
+ 2 J \gamma^2
\begin{tikzpicture}[baseline={(0, -0.1)}]
\draw (0.4, 0) circle (.2);
\draw (1.2, 0) circle (.2);
\draw (2.0, 0) circle (.2);
\draw (0.6, 0) -- (1.0, 0);
\draw (1.4, 0) -- (1.8, 0);
\node at (0, 0) {$\cdots$};
\node at (2.5, 0) {$\cdots$};
\node at (0.8, -0.2) {\small $0$};
\node at (1.6, -0.2) {\small $1$};
\end{tikzpicture}
\notag \\ &
+ 2 J (2\gamma^2 + J^2)
\begin{tikzpicture}[baseline={(0, -0.1)}]
\draw (0.4, 0) circle (.2);
\draw (1.2, 0) circle (.2);
\draw (2.0, 0) circle (.2);
\draw (0.6, 0) -- (1.0, 0);
\draw (1.4, 0) -- (1.8, 0);
\node at (0, 0) {$\cdots$};
\node at (2.5, 0) {$\cdots$};
\node at (0.8, -0.2) {\small $1$};
\node at (1.6, -0.2) {\small $0$};
\end{tikzpicture}
- 2 \gamma (\gamma^2 + J^2)
\begin{tikzpicture}[baseline={(0, -0.1)}]
\draw (0.4, 0) circle (.2);
\draw (1.2, 0) circle (.2);
\draw (2.0, 0) circle (.2);
\draw (0.6, 0) -- (1.0, 0);
\draw (1.4, 0) -- (1.8, 0);
\node at (0, 0) {$\cdots$};
\node at (2.5, 0) {$\cdots$};
\node at (0.8, -0.2) {\small $1$};
\node at (1.6, -0.2) {\small $1$};
\end{tikzpicture}
\notag \\
%%%%%%%%%%%%%%%%%%%%%%%%%%%%%%%%%%
%%%%%%%%%%%%%%%%%%%%% 1L %%%%%%%%%%%%%%%%%%%%%
\begin{tikzpicture}[baseline={(0, -0.1)}]
\draw (0.05, 0.25) rectangle (2.35, -0.25);
\draw (0.0, 0.3) rectangle (2.4, -0.3);
\draw (0.4, 0) circle (.2);
\draw (1.2, 0) circle (.2);
\draw (2.0, 0) circle (.2);
\draw (0.6, 0) -- (1.0, 0);
\draw (1.4, 0) -- (1.8, 0);
\draw[<-] (0.65, 0.45) -- (0.95, 0.45);
\node at (0, 0) {$\cdots$};
\node at (2.5, 0) {$\cdots$};
\node at (1.2, 0.45) {\small $1$};
\end{tikzpicture}
\equiv \; &
2 (J^4 + 3 J^2 \gamma^2 + \gamma^4)
\begin{tikzpicture}[baseline={(0, -0.1)}]
\draw (0.4, 0) circle (.2);
\draw (1.2, 0) circle (.2);
\draw (2.0, 0) circle (.2);
\draw (0.6, 0) -- (1.0, 0);
\draw (1.4, 0) -- (1.8, 0);
\node at (0, 0) {$\cdots$};
\node at (2.5, 0) {$\cdots$};
\node at (0.8, -0.2) {\small $0$};
\node at (1.6, -0.2) {\small $0$};
\end{tikzpicture}
- 2 \gamma J (2\gamma^2 + J^2)
\begin{tikzpicture}[baseline={(0, -0.1)}]
\draw (0.4, 0) circle (.2);
\draw (1.2, 0) circle (.2);
\draw (2.0, 0) circle (.2);
\draw (0.6, 0) -- (1.0, 0);
\draw (1.4, 0) -- (1.8, 0);
\node at (0, 0) {$\cdots$};
\node at (2.5, 0) {$\cdots$};
\node at (0.8, -0.2) {\small $0$};
\node at (1.6, -0.2) {\small $1$};
\end{tikzpicture}
\notag \\ &
- 2 \gamma J (2\gamma^2 + J^2)
\begin{tikzpicture}[baseline={(0, -0.1)}]
\draw (0.4, 0) circle (.2);
\draw (1.2, 0) circle (.2);
\draw (2.0, 0) circle (.2);
\draw (0.6, 0) -- (1.0, 0);
\draw (1.4, 0) -- (1.8, 0);
\node at (0, 0) {$\cdots$};
\node at (2.5, 0) {$\cdots$};
\node at (0.8, -0.2) {\small $1$};
\node at (1.6, -0.2) {\small $0$};
\end{tikzpicture}
+ 2 \gamma^2 (\gamma^2 + J^2)
\begin{tikzpicture}[baseline={(0, -0.1)}]
\draw (0.4, 0) circle (.2);
\draw (1.2, 0) circle (.2);
\draw (2.0, 0) circle (.2);
\draw (0.6, 0) -- (1.0, 0);
\draw (1.4, 0) -- (1.8, 0);
\node at (0, 0) {$\cdots$};
\node at (2.5, 0) {$\cdots$};
\node at (0.8, -0.2) {\small $1$};
\node at (1.6, -0.2) {\small $1$};
\end{tikzpicture}
\notag \\
%%%%%%%%%%%%%%%%%%%%% 1R %%%%%%%%%%%%%%%%%%%%%
%%%%%%%%%%%%%%%%%%%%%%%%%%%%%%%%%%
\begin{tikzpicture}[baseline={(0, -0.1)}]
\draw (0.05, 0.25) rectangle (2.35, -0.25);
\draw (0.0, 0.3) rectangle (2.4, -0.3);
\draw (0.4, 0) circle (.2);
\draw (1.2, 0) circle (.2);
\draw (2.0, 0) circle (.2);
\draw (0.6, 0) -- (1.0, 0);
\draw (1.4, 0) -- (1.8, 0);
\draw[->] (1.45, 0.45) -- (1.75, 0.45);
\node at (0, 0) {$\cdots$};
\node at (2.5, 0) {$\cdots$};
\node at (1.2, 0.45) {\small $1$};
\end{tikzpicture}
\equiv \; &
2 \gamma^2 (\gamma^2 + J^2)
\begin{tikzpicture}[baseline={(0, -0.1)}]
\draw (0.4, 0) circle (.2);
\draw (1.2, 0) circle (.2);
\draw (2.0, 0) circle (.2);
\draw (0.6, 0) -- (1.0, 0);
\draw (1.4, 0) -- (1.8, 0);
\node at (0, 0) {$\cdots$};
\node at (2.5, 0) {$\cdots$};
\node at (0.8, -0.2) {\small $0$};
\node at (1.6, -0.2) {\small $0$};
\end{tikzpicture}
- 2 \gamma J (2\gamma^2 + J^2)
\begin{tikzpicture}[baseline={(0, -0.1)}]
\draw (0.4, 0) circle (.2);
\draw (1.2, 0) circle (.2);
\draw (2.0, 0) circle (.2);
\draw (0.6, 0) -- (1.0, 0);
\draw (1.4, 0) -- (1.8, 0);
\node at (0, 0) {$\cdots$};
\node at (2.5, 0) {$\cdots$};
\node at (0.8, -0.2) {\small $0$};
\node at (1.6, -0.2) {\small $1$};
\end{tikzpicture}
\notag \\ &
- 2 \gamma J (2\gamma^2 + J^2)
\begin{tikzpicture}[baseline={(0, -0.1)}]
\draw (0.4, 0) circle (.2);
\draw (1.2, 0) circle (.2);
\draw (2.0, 0) circle (.2);
\draw (0.6, 0) -- (1.0, 0);
\draw (1.4, 0) -- (1.8, 0);
\node at (0, 0) {$\cdots$};
\node at (2.5, 0) {$\cdots$};
\node at (0.8, -0.2) {\small $1$};
\node at (1.6, -0.2) {\small $0$};
\end{tikzpicture}
+ 2 (J^4 + 3 J^2 \gamma^2 + \gamma^4)
\begin{tikzpicture}[baseline={(0, -0.1)}]
\draw (0.4, 0) circle (.2);
\draw (1.2, 0) circle (.2);
\draw (2.0, 0) circle (.2);
\draw (0.6, 0) -- (1.0, 0);
\draw (1.4, 0) -- (1.8, 0);
\node at (0, 0) {$\cdots$};
\node at (2.5, 0) {$\cdots$};
\node at (0.8, -0.2) {\small $1$};
\node at (1.6, -0.2) {\small $1$};
\end{tikzpicture}
.\end{alignat}

The doubly-rectangularized states satisfy:

\begin{enumerate}
\item When acted by the Hamiltonian once, the neutral-rectangularized spin states transform as
\begin{alignat}{8}
    \hat{h}_{m}^{(L)} 
    \begin{tikzpicture}[baseline={(0, -0.1)}]
    \draw (0.05, 0.25) rectangle (2.35, -0.25);
    \draw (0.0, 0.3) rectangle (2.4, -0.3);
    \draw (0.4, 0) circle (.2);
    \draw (1.2, 0) circle (.2);
    \draw (2.0, 0) circle (.2);
    \fill (1.2, 0) circle (2pt);
    \draw (0.6, 0) -- (1.0, 0);
    \draw (1.4, 0) -- (1.8, 0);
    \node at (0, 0) {$\cdots$};
    \node at (2.5, 0) {$\cdots$};
    \node at (1.2, 0.45) {\small $0$};
    \node at (1.2, -0.45) {\small $m$};
    \end{tikzpicture}
    &=
    \begin{tikzpicture}[baseline={(0, -0.1)}]
    \draw (0.05, 0.25) rectangle (2.35, -0.25);
    \draw (0.0, 0.3) rectangle (2.4, -0.3);
    \draw (0.4, 0) circle (.2);
    \draw (1.2, 0) circle (.2);
    \draw (2.0, 0) circle (.2);
    \fill (0.4, 0) circle (2pt);
    \draw (0.6, 0) -- (1.0, 0);
    \draw (1.4, 0) -- (1.8, 0);
    \draw[<-] (0.65, 0.45) -- (0.95, 0.45);
    \node at (0, 0) {$\cdots$};
    \node at (2.5, 0) {$\cdots$};
    \node at (1.2, 0.45) {\small $0$};
    \node at (1.2, -0.45) {\small $m$};
    \end{tikzpicture}
    \notag \\
    \hat{h}_{m}^{(R)} 
    \begin{tikzpicture}[baseline={(0, -0.1)}]
    \draw (0.05, 0.25) rectangle (2.35, -0.25);
    \draw (0.0, 0.3) rectangle (2.4, -0.3);
    \draw (0.4, 0) circle (.2);
    \draw (1.2, 0) circle (.2);
    \draw (2.0, 0) circle (.2);
    \fill (1.2, 0) circle (2pt);
    \draw (0.6, 0) -- (1.0, 0);
    \draw (1.4, 0) -- (1.8, 0);
    \node at (0, 0) {$\cdots$};
    \node at (2.5, 0) {$\cdots$};
    \node at (1.2, 0.45) {\small $0$};
    \node at (1.2, -0.45) {\small $m$};
    \end{tikzpicture}
    &=
    \begin{tikzpicture}[baseline={(0, -0.1)}]
    \draw (0.05, 0.25) rectangle (2.35, -0.25);
    \draw (0.0, 0.3) rectangle (2.4, -0.3);
    \draw (0.4, 0) circle (.2);
    \draw (1.2, 0) circle (.2);
    \draw (2.0, 0) circle (.2);
    \fill (2.0, 0) circle (2pt);
    \draw (0.6, 0) -- (1.0, 0);
    \draw (1.4, 0) -- (1.8, 0);
    \draw[->] (1.45, 0.45) -- (1.75, 0.45);
    \node at (0, 0) {$\cdots$};
    \node at (2.5, 0) {$\cdots$};
    \node at (1.2, 0.45) {\small $0$};
    \node at (1.2, -0.45) {\small $m$};
    \end{tikzpicture}
    \notag \\
    \hat{h}_{m}^{(L)} 
    \begin{tikzpicture}[baseline={(0, -0.1)}]
    \draw (0.05, 0.25) rectangle (2.35, -0.25);
    \draw (0.0, 0.3) rectangle (2.4, -0.3);
    \draw (0.4, 0) circle (.2);
    \draw (1.2, 0) circle (.2);
    \draw (2.0, 0) circle (.2);
    \fill (1.2, 0) circle (2pt);
    \draw (0.6, 0) -- (1.0, 0);
    \draw (1.4, 0) -- (1.8, 0);
    \node at (0, 0) {$\cdots$};
    \node at (2.5, 0) {$\cdots$};
    \node at (1.2, 0.45) {\small $1$};
    \node at (1.2, -0.45) {\small $m$};
    \end{tikzpicture}
    &=
    \begin{tikzpicture}[baseline={(0, -0.1)}]
    \draw (0.05, 0.25) rectangle (2.35, -0.25);
    \draw (0.0, 0.3) rectangle (2.4, -0.3);
    \draw (0.4, 0) circle (.2);
    \draw (1.2, 0) circle (.2);
    \draw (2.0, 0) circle (.2);
    \fill (0.4, 0) circle (2pt);
    \draw (0.6, 0) -- (1.0, 0);
    \draw (1.4, 0) -- (1.8, 0);
    \draw[<-] (0.65, 0.45) -- (0.95, 0.45);
    \node at (0, 0) {$\cdots$};
    \node at (2.5, 0) {$\cdots$};
    \node at (1.2, 0.45) {\small $1$};
    \node at (1.2, -0.45) {\small $m$};
    \end{tikzpicture}
    \notag \\
    \hat{h}_{m}^{(R)} 
    \begin{tikzpicture}[baseline={(0, -0.1)}]
    \draw (0.05, 0.25) rectangle (2.35, -0.25);
    \draw (0.0, 0.3) rectangle (2.4, -0.3);
    \draw (0.4, 0) circle (.2);
    \draw (1.2, 0) circle (.2);
    \draw (2.0, 0) circle (.2);
    \fill (1.2, 0) circle (2pt);
    \draw (0.6, 0) -- (1.0, 0);
    \draw (1.4, 0) -- (1.8, 0);
    \node at (0, 0) {$\cdots$};
    \node at (2.5, 0) {$\cdots$};
    \node at (1.2, 0.45) {\small $1$};
    \node at (1.2, -0.45) {\small $m$};
    \end{tikzpicture}
    &=
    \begin{tikzpicture}[baseline={(0, -0.1)}]
    \draw (0.05, 0.25) rectangle (2.35, -0.25);
    \draw (0.0, 0.3) rectangle (2.4, -0.3);
    \draw (0.4, 0) circle (.2);
    \draw (1.2, 0) circle (.2);
    \draw (2.0, 0) circle (.2);
    \fill (2.0, 0) circle (2pt);
    \draw (0.6, 0) -- (1.0, 0);
    \draw (1.4, 0) -- (1.8, 0);
    \draw[->] (1.45, 0.45) -- (1.75, 0.45);
    \node at (0, 0) {$\cdots$};
    \node at (2.5, 0) {$\cdots$};
    \node at (1.2, 0.45) {\small $1$};
    \node at (1.2, -0.45) {\small $m$};
    \end{tikzpicture}
.\end{alignat}

\item When acted by the Hamiltonian twice, the neutral-rectangularized spin states transform as
\begin{alignat}{8}
    \left( \hat{h}_{m-1}^{(R)} \hat{h}_m^{(L)} + \hat{h}_{m+1}^{(L)} \hat{h}_m^{(R)} \right) \;
    \begin{tikzpicture}[baseline={(0, -0.1)}]
    \draw (0.05, 0.25) rectangle (2.35, -0.25);
    \draw (0.0, 0.3) rectangle (2.4, -0.3);
    \draw (0.4, 0) circle (.2);
    \draw (1.2, 0) circle (.2);
    \draw (2.0, 0) circle (.2);
    \fill (1.2, 0) circle (2pt);
    \draw (0.6, 0) -- (1.0, 0);
    \draw (1.4, 0) -- (1.8, 0);
    \node at (0, 0) {$\cdots$};
    \node at (2.5, 0) {$\cdots$};
    \node at (1.2, 0.45) {\small $0$};
    \node at (1.2, -0.45) {\small $m$};
    \end{tikzpicture}
    = \; &
    \hat{h}_{m-1}^{(R)} \bigg(
    \begin{tikzpicture}[baseline={(0, -0.1)}]
    \draw (0.05, 0.25) rectangle (2.35, -0.25);
    \draw (0.0, 0.3) rectangle (2.4, -0.3);
    \draw (0.4, 0) circle (.2);
    \draw (1.2, 0) circle (.2);
    \draw (2.0, 0) circle (.2);
    \fill (0.4, 0) circle (2pt);
    \draw (0.6, 0) -- (1.0, 0);
    \draw (1.4, 0) -- (1.8, 0);
    \draw[<-] (0.65, 0.45) -- (0.95, 0.45);
    \node at (0, 0) {$\cdots$};
    \node at (2.5, 0) {$\cdots$};
    \node at (1.2, 0.45) {\small $0$};
    \node at (1.2, -0.45) {\small $m$};
    \end{tikzpicture}
    \bigg)
    +
    \hat{h}_{m+1}^{(L)}\bigg(
    \begin{tikzpicture}[baseline={(0, -0.1)}]
    \draw (0.05, 0.25) rectangle (2.35, -0.25);
    \draw (0.0, 0.3) rectangle (2.4, -0.3);
    \draw (0.4, 0) circle (.2);
    \draw (1.2, 0) circle (.2);
    \draw (2.0, 0) circle (.2);
    \fill (2.0, 0) circle (2pt);
    \draw (0.6, 0) -- (1.0, 0);
    \draw (1.4, 0) -- (1.8, 0);
    \draw[->] (1.45, 0.45) -- (1.75, 0.45);
    \node at (0, 0) {$\cdots$};
    \node at (2.5, 0) {$\cdots$};
    \node at (1.2, 0.45) {\small $0$};
    \node at (1.2, -0.45) {\small $m$};
    \end{tikzpicture}
    \bigg)
    \notag \\
    = \; &
    \begin{tikzpicture}[baseline={(0, -0.1)}]
    \draw (0.05, 0.25) rectangle (2.35, -0.25);
    \draw (0.0, 0.3) rectangle (2.4, -0.3);
    \draw (0.4, 0) circle (.2);
    \draw (1.2, 0) circle (.2);
    \draw (2.0, 0) circle (.2);
    \fill (1.2, 0) circle (2pt);
    \draw (0.6, 0) -- (1.0, 0);
    \draw (1.4, 0) -- (1.8, 0);
    \node at (0, 0) {$\cdots$};
    \node at (2.5, 0) {$\cdots$};
    \node at (1.2, 0.45) {\small $1$};
    \node at (1.2, -0.45) {\small $m$};
    \end{tikzpicture}
    \notag \\
%%%%%%%%%%%%%%%%%%%%%%%%%%%%%%%%%%%%%%%%%%%%%%%%%%
%%%%%%%%%%%%%%%%%%%%%%%%%%%%%%%%%%%%%%%%%%%%%%%%%%
    \left( \hat{h}_{m-1}^{(R)} \hat{h}_m^{(L)} + \hat{h}_{m+1}^{(L)} \hat{h}_m^{(R)} \right) \;
    \begin{tikzpicture}[baseline={(0, -0.1)}]
    \draw (0.05, 0.25) rectangle (2.35, -0.25);
    \draw (0.0, 0.3) rectangle (2.4, -0.3);
    \draw (0.4, 0) circle (.2);
    \draw (1.2, 0) circle (.2);
    \draw (2.0, 0) circle (.2);
    \fill (1.2, 0) circle (2pt);
    \draw (0.6, 0) -- (1.0, 0);
    \draw (1.4, 0) -- (1.8, 0);
    \node at (0, 0) {$\cdots$};
    \node at (2.5, 0) {$\cdots$};
    \node at (1.2, 0.45) {\small $1$};
    \node at (1.2, -0.45) {\small $m$};
    \end{tikzpicture}    
    = \; &
    \hat{h}_{m-1}^{(R)} \bigg(
    \begin{tikzpicture}[baseline={(0, -0.1)}]
    \draw (0.05, 0.25) rectangle (2.35, -0.25);
    \draw (0.0, 0.3) rectangle (2.4, -0.3);
    \draw (0.4, 0) circle (.2);
    \draw (1.2, 0) circle (.2);
    \draw (2.0, 0) circle (.2);
    \fill (0.4, 0) circle (2pt);
    \draw (0.6, 0) -- (1.0, 0);
    \draw (1.4, 0) -- (1.8, 0);
    \draw[<-] (0.65, 0.45) -- (0.95, 0.45);
    \node at (0, 0) {$\cdots$};
    \node at (2.5, 0) {$\cdots$};
    \node at (1.2, 0.45) {\small $1$};
    \node at (1.2, -0.45) {\small $m$};
    \end{tikzpicture}
    \bigg)
    +
    \hat{h}_{m+1}^{(L)}\bigg(
    \begin{tikzpicture}[baseline={(0, -0.1)}]
    \draw (0.05, 0.25) rectangle (2.35, -0.25);
    \draw (0.0, 0.3) rectangle (2.4, -0.3);
    \draw (0.4, 0) circle (.2);
    \draw (1.2, 0) circle (.2);
    \draw (2.0, 0) circle (.2);
    \fill (2.0, 0) circle (2pt);
    \draw (0.6, 0) -- (1.0, 0);
    \draw (1.4, 0) -- (1.8, 0);
    \draw[->] (1.45, 0.45) -- (1.75, 0.45);
    \node at (0, 0) {$\cdots$};
    \node at (2.5, 0) {$\cdots$};
    \node at (1.2, 0.45) {\small $1$};
    \node at (1.2, -0.45) {\small $m$};
    \end{tikzpicture}
    \bigg)
    \notag \\
    = \; &
    (-4 \gamma^4)
    \begin{tikzpicture}[baseline={(0, -0.1)}]
    \draw (0.05, 0.25) rectangle (2.35, -0.25);
    \draw (0.0, 0.3) rectangle (2.4, -0.3);
    \draw (0.4, 0) circle (.2);
    \draw (1.2, 0) circle (.2);
    \draw (2.0, 0) circle (.2);
    \fill (1.2, 0) circle (2pt);
    \draw (0.6, 0) -- (1.0, 0);
    \draw (1.4, 0) -- (1.8, 0);
    \node at (0, 0) {$\cdots$};
    \node at (2.5, 0) {$\cdots$};
    \node at (1.2, 0.45) {\small $0$};
    \node at (1.2, -0.45) {\small $m$};
    \end{tikzpicture}
    + (4 \gamma^2 + 2J^2)
    \begin{tikzpicture}[baseline={(0, -0.1)}]
    \draw (0.05, 0.25) rectangle (2.35, -0.25);
    \draw (0.0, 0.3) rectangle (2.4, -0.3);
    \draw (0.4, 0) circle (.2);
    \draw (1.2, 0) circle (.2);
    \draw (2.0, 0) circle (.2);
    \fill (1.2, 0) circle (2pt);
    \draw (0.6, 0) -- (1.0, 0);
    \draw (1.4, 0) -- (1.8, 0);
    \node at (0, 0) {$\cdots$};
    \node at (2.5, 0) {$\cdots$};
    \node at (1.2, 0.45) {\small $1$};
    \node at (1.2, -0.45) {\small $m$};
    \end{tikzpicture}
\end{alignat}
\end{enumerate}

\end{widetext}

For models with system size $M=4Z+3$ ($Z \in \mathbb{Z}_{\geq 0}$), an extensive number of root states $| \Psi \rangle$ of $\mathcal{K}^{(4)}$ satisfy:
\ie
    \langle \Psi | H^2 \hat{n}_m H^2 | \Psi \rangle &= 0
    \\
    \langle \Psi | H^4 \hat{n}_m H^4 | \Psi \rangle &= 0
,\fe
for $m \in 4Z$ ($Z \in \mathbb{Z}_{>0}$). These root states can be constructed as follows:
\begin{enumerate}
    \item We express the root state $| \Psi \rangle$ as 
    \ie
        | \Psi \rangle = \sum_{n=0}^{Z} \sum_{r_i=0, 1} (-1)^n c_{\vec{r}}^{(n)} | \Phi_{\{ \vec{r} \}}^{(4n+2)} \rangle
    ,\fe
    where $\vec{r}=\{ r_1, r_2, \cdots, r_{Z+1} \}$. Each of the $(Z+1) \cdot 2^{Z+1}$ component $\Phi_{\{ \vec{r} \}}^{(4n+2)}$ is a product state. For the components $| \Phi_{\{ \vec{r} \}}^{(4n+2)} \rangle$, a fermion occupies the $(4n+2)$-th site.

    \item For the components $| \Phi_{\{ \vec{r} \}}^{(4n+2)} \rangle$, where the fermion occupies the $f_1=(4n+2)$-th site, we construct the components from the rectangular building blocks introduced in Eq.~\eqref{eqn: Doubly rectangular exact form} as follows:
    \ie
    \begin{cases}
        \text{Doubly right-rectangularize:}
        \\
        \;\;\;\{ (m-1, m, m+1) | m \in 4\mathbb{Z}_{\geq 0}+2 \text{ and } m<f_1 \},
        \\
        \text{doubly neutral-rectangularize:}
        \\
        \;\;\;(f_1-1, f_1, f_1+1),
        \\
        \text{doubly left-rectangularize:}
        \\
        \;\;\;\{ (m-1, m, m+1) | m \in 4\mathbb{Z}_{\geq 0}+2 \text{ and } m>f_1 \}.
    \end{cases}
    \fe
    There are then a total of $(Z+1)$ rectangles. Fix the rectangular labels indicated by the subscript of the components: $r_1, \cdots r_{Z+1}$.
    
    \item Arbitrarily choose the spin states 
    \ie
    \label{eqn: zero modes of K4, spin states}
    \{(s_m, s_{m+1}) | m \in 4\mathbb{Z}_{>0} \}
    \fe
    to be either $0$ or $1$ ($2Z$ in total). For components $| \Phi_{\{ \vec{r} \}}^{(4n+2)} \rangle$, where the fermion occupies the $f_1=(4n+2)$-th site, fix the spin states as in Eq.~\eqref{eqn: spin states for rectangularized states}.

    \item In order to make the root states satisfy $\langle \Psi | H^2 \hat{n}_m H^2 | \Psi \rangle = 0$ for $m \in 4\mathbb{Z}_{>0}$, the coefficients $c_{\vec{r}}^{(n)}$ are fixed to be
    \ie
        c_{\vec{r}}^{(0)} = c_{\vec{r}}^{(1)} = \cdots = c_{\vec{r}}^{(N)} \equiv c_{\vec{r}}
    \fe

    \item In order to make the root states satisfy $\langle \Psi | H^4 \hat{n}_m H^4 | \Psi \rangle = 0$ for $m \in 4\mathbb{Z}_{>0}$, the coefficients $c_{\vec{r}}$ with the same number of $0$'s and $1$'s in $\vec{r}$ should be equal to each other:
    \ie
        c_{0,0,\cdots, 0} &\equiv \alpha_0
        \\
        c_{1,0,\cdots, 0} = c_{0,1,\cdots, 0} = \cdots = c_{0,0,\cdots, 1} &\equiv \alpha_1
        \\
        & \vdots
        \\
        c_{1,1,\cdots, 1} &\equiv \alpha_{Z+1}
    .\fe
    In addition, for $i \in [0, Z-1]$, we require
    \ie
    \alpha_i + (4\gamma^2 + 2J^2) \alpha_{i+1} + 4\gamma^4 \alpha_{i+2} = 0
    .\fe
\end{enumerate}
Given the above constructions, it can be shown that
\ie
    H^4 | \Psi \rangle = (4\gamma^2 + 2J^2) H^2 | \Psi \rangle - 4\gamma^4 | \Psi \rangle
,\fe
making $| \Psi \rangle$ a root state of $\mathcal{K}^{(4)}$. Since we can arbitrarily choose $2Z$ spin states in Eq.~\eqref{eqn: zero modes of K4, spin states}, the number of the root states constructed from the above procedure is $2^{2Z}$. Hence, the number of $\mathcal{K}^{(4)}$ in the symmetry sector, i.e., $D(\mathcal{K}^{(4)})$, increases exponentially with the system size.

\section{Characteristic polynomial factorization for models with uniform $\gamma_m$ and $J_m$}
\label{appendix: characteristic polynomial factorization}

In this Appendix, we demonstrate polynomial factorization for models with Hilbert space dimensions of up to $\lesssim 12000$. We substitute integers for $\gamma_m=\gamma$ and $J_m=J$ in the Hamiltonian, and factorize the characteristic polynomial explicitly to determine the structure of the Krylov subspaces. We consider several sets of integers $\left\{ \gamma_m, J_m \right\}$ to check that the factorization does not depend on some fine-tuned choice.

Given the characteristic polynomial for a Hamiltonian $H$, we denote the number of distinct factors as $I$, and denote the factors as $f_{i}$:
\ie
\label{eqn: characteristic polynomial in a product form}
\det(\lambda \mathds{1} - H) = \prod_{i=1}^{I} \left(f_{i}(\lambda)\right)^{d_{i}}
,\fe
where the integers $d_{i}$ denote the degeneracies of the factor $f_{i}(\lambda)$. In addition, we denote the degree of $f_{i}(\lambda)$ as $n_{i}$. Taking Eq.~\eqref{eqn: characteristic polynomial of M=3} as an example, we have
\ie
\begin{cases}
    f_{1}(\lambda) = \lambda, \; n_{1}=1 , \; d_{1} = 4;
    \\
    f_{2}(\lambda) = \left[ \lambda^2-\left( 2\gamma^2+J^2 \right) \right], \; n_{2}=2 , \; d_{2} = 2;
    \\
    f_{3}(\lambda) = \left[ \lambda^4 - \left( 4\gamma^2+2J^2 \right) \lambda^2 + 4\gamma^4 \right] , \; n_{3}=4, \; d_{3} = 1;
\end{cases}
\fe

In the following tables, we present the degeneracies of the factors in different symmetry sectors. In the row labeled ``factor degrees,'' we display various values of $1 \le n_i \le |\mathcal{K}^{\text{max}}|$. In the row labeled ``factor degeneracies,'' we provide the corresponding factor degeneracies.
\ie
d_{i_1} + d_{i_2} + \cdots + d_{i_p}
,\fe
where $n_{i_1} = n_{i_2} = \cdots = n_{i_p}$. If $d_{i_1} = d_{i_2} = \cdots = d_{i_p}$, we impose the following abbreviation
\ie
d_{i_1}+\cdots+d_{i_p} \mapsto  d_{i_1} \times p
.\fe

The degeneracy plays an important role in the discussion of the dimension of the Krylov subspaces generated by
arbitrary states, which is discussed in Appendix~\ref{appendix: Krylov subspace generated by arbitrary state}.

\subsection{$Q=1$}
\label{appendix: Q=1 numerical factorization}
\begin{itemize}

\item $M=3$: $\dim(\mathcal{H}) = 12$

\begin{center}
\begin{tabular}{ || c || c | c | } 
\hline
$\#$ of zero modes & \multicolumn{2}{|c|}{4} \\
\hline
\hline
Factor degrees & $2$ & $4$ \\ 
\hline
Factor degeneracies & $2$ & $1$ \\ 
\hline
\end{tabular}
\end{center}

\item $M=4$: $\dim(\mathcal{H}) = 32$

\begin{center}
\begin{tabular}{ |c || c|c| } 
\hline
$\#$ of zero modes & \multicolumn{2}{|c|}{0} \\
\hline
\hline
Factor degrees & $4$ & $8$ \\ 
\hline
Factor degeneracies & $1+1+1+1 $ & $2$ \\ 
\hline
\end{tabular}
\end{center}

\item $M=5$: $\dim(\mathcal{H}) = 80$

\begin{center}
\begin{tabular}{ |c || c | c | c |} 
\hline
$\#$ of zero modes & \multicolumn{3}{|c|}{16} \\
\hline
\hline
Factor degrees & $2$ & $4$ & $8$ \\ 
\hline
Factor degeneracies & $2 + 2 $ & $1+2+3+4$ & 2 \\ 
\hline
\end{tabular}
\end{center}

\item $M=6$: $\dim(\mathcal{H}) = 192$

\begin{center}
\begin{tabular}{ |c || c | c |} 
\hline
$\#$ of zero modes & \multicolumn{2}{|c|}{0} \\
\hline
\hline
Factor degrees & $6$ & $12$ \\ 
\hline
Factor degeneracies & $ 1 \times 8 $ & $ 2 \times 6$  \\ 
\hline
\end{tabular}
\end{center}

\item $M=7$: $\dim(\mathcal{H}) = 448$

\begin{center}
\begin{tabular}{ |c || c | c | c | c | c |} 
\hline
$\#$ of zero modes & \multicolumn{5}{|c|}{64} \\
\hline
\hline
Factor degrees & $2$ & $4$ & $6$ & $8$ & $12$ \\ 
\hline
Factor degeneracies & $18$ & $2+4+6$ & $2$ & $1\times 3 + 2 + 4$ & $2\times 9$ \\ 
\hline
\end{tabular}
\end{center}

\begin{widetext}

\item $M=8$: $\dim(\mathcal{H}) = 1024$

\begin{center}
\begin{tabular}{ |c || c | c | c | c | c |} 
\hline
$\#$ of zero modes & \multicolumn{5}{|c|}{0} \\
\hline
\hline
Factor degrees & $2$ & $6$ & $8$ & $12$ & $16$ \\ 
\hline
Factor degeneracies & $16+16$ & $1\times 6 + 3 \times 2$ & $1\times 8$ & $2\times 3 + 4$ & $2\times 22$ \\ 
\hline
\end{tabular}
\end{center}

\item $M=9$: $\dim(\mathcal{H}) = 2304$

\begin{center}
\begin{tabular}{ |c || c | c | c |} 
\hline
$\#$ of zero modes & \multicolumn{3}{|c|}{256} \\
\hline
\hline
Factor degrees & $4$ &  $8$  & $16$ \\ 
\hline
Factor degeneracies & $2\times2+4\times4$ & $1\times4+2\times6+3+4+5\times3+16$ & $2\times48$  \\ 
\hline
\end{tabular}
\end{center}

\item $M=10$: $\dim(\mathcal{H}) = 5120$

\begin{center}
\begin{tabular}{ |c || c | c |} 
\hline
$\#$ of zero modes & \multicolumn{2}{|c|}{0} \\
\hline
\hline
Factor degrees & $10$ &  $20$ \\ 
\hline
Factor degeneracies & $1\times32$ & $2\times116+4\times2$ \\ 
\hline
\end{tabular}
\end{center}

\item $M=11$: $\dim(\mathcal{H}) = 11264$

\begin{center}
\begin{tabular}{ |c || c | c | c |} 
\hline
$\#$ of zero modes & \multicolumn{3}{|c|}{1024} \\
\hline
\hline
Factor degrees & $2$ &  $4$ & $6$  \\ 
\hline
Factor degeneracies & $22\times2+64\times2+168$ & $2+4+16+18\times2+46$ & $6\times1$ \\ 
\hline
\hline
Factor degrees & $8$ &  $10$ & $12$ \\ 
\hline
Factor degeneracies & $1+2\times5+3\times2+4\times3+16$ & $2\times5$ & $1\times7+2\times7+3+4\times2+6\times2+8\times2$ \\ 
\hline
\hline
Factor degrees &  $16$ &  $20$ & \\ 
\hline
Factor degeneracies  &  $2\times45+4\times2$ & $2\times159$ & \\ 
\hline
\end{tabular}
\end{center}

\end{widetext}

\end{itemize}

\subsection{$Q=2$}

\begin{itemize}

\item $M=4$: $\dim(\mathcal{H}) = 48$

\begin{center}
\begin{tabular}{ |c || c | c |} 
\hline
$\#$ of zero modes & \multicolumn{2}{|c|}{16} \\
\hline
\hline
Factor degrees & $4$ &  $8$ \\ 
\hline
Factor degeneracies & $2+2$ & $1+1$ \\ 
\hline
\end{tabular}
\end{center}

\item $M=5$: $\dim(\mathcal{H}) = 160$

\begin{center}
\begin{tabular}{ |c || c | c | c | c | c |} 
\hline
$\#$ of zero modes & \multicolumn{5}{|c|}{32} \\
\hline
\hline
Factor degrees & $2$ & $4$ & $16$ & $32$ & $36$ \\ 
\hline
Factor degeneracies & $2+2$ & $1$ & $1$ & $2$ & $1$ \\ 
\hline
\end{tabular}
\end{center}

\item $M=6$: $\dim(\mathcal{H}) = 480$

\begin{center}
\begin{tabular}{ |c || c |} 
\hline
$\#$ of zero modes & \multicolumn{1}{|c|}{96} \\
\hline
\hline
Factor degrees & $96$ \\ 
\hline
Factor degeneracies & $1+1+2$ \\ 
\hline
\end{tabular}
\end{center}

\item $M=7$: $\dim(\mathcal{H}) = 1344$

\begin{center}
\begin{tabular}{ |c || c | c | c |} 
\hline
$\#$ of zero modes & \multicolumn{3}{|c|}{192} \\
\hline
\hline
Factor degrees & $264$ & $288$ & $312$ \\ 
\hline
Factor degeneracies & $1$ & $2$ & $1$ \\ 
\hline
\end{tabular}
\end{center}

\item $M=8$: $\dim(\mathcal{H}) = 3584$

\begin{center}
\begin{tabular}{ |c || c |} 
\hline
$\#$ of zero modes & \multicolumn{1}{|c|}{512} \\
\hline
\hline
Factor degrees & $768$ \\ 
\hline
Factor degeneracies & $1+1+2$ \\ 
\hline
\end{tabular}
\end{center}

\item $M=9$: $\dim(\mathcal{H}) = 9216$

\begin{center}
\begin{tabular}{ |c || c | c | c |} 
\hline
$\#$ of zero modes & \multicolumn{3}{|c|}{1024} \\
\hline
\hline
Factor degrees & $1984$ & $2048$ & $2112$ \\ 
\hline
Factor degeneracies & $1$ & $2$ & $1$ \\ 
\hline
\end{tabular}
\end{center}

\end{itemize}

\subsection{$Q=3$}

\begin{itemize}

\item $M=6$: $\dim(\mathcal{H}) = 640$

\begin{center}
\begin{tabular}{ |c || c | c | c |} 
\hline
$\#$ of zero modes & \multicolumn{3}{|c|}{0} \\
\hline
\hline
Factor degrees & $64$ & $80$ & $96$ \\ 
\hline
Factor degeneracies & $1+1$ & $2+2$ & $1+1$ \\ 
\hline
\end{tabular}
\end{center}

\item $M=7$: $\dim(\mathcal{H}) = 2240$

\begin{center}
\begin{tabular}{ |c || c |} 
\hline
$\#$ of zero modes & \multicolumn{1}{|c|}{192} \\
\hline
\hline
Factor degrees & $512$ \\ 
\hline
Factor degeneracies & $1+1+2$ \\ 
\hline
\end{tabular}
\end{center}

\item $M=8$: $\dim(\mathcal{H}) = 7168$

\begin{center}
\begin{tabular}{ |c || c |} 
\hline
$\#$ of zero modes & \multicolumn{1}{|c|}{0} \\
\hline
\hline
Factor degrees & $1792$ \\ 
\hline
Factor degeneracies & $1+1+2$ \\ 
\hline
\end{tabular}
\end{center}

\end{itemize}

\section{Krylov subspace generated by arbitrary states}
\label{appendix: Krylov subspace generated by arbitrary state}

Consider a Hamiltonian $H$ acting on a Hilbert space $\mathcal{H}$ with dimension $\mathcal{N}$. Generally, for an arbitrary state $| \Psi \rangle$, the column vectors of the matrix $U = \left\{ | \Psi \rangle, H | \Psi \rangle, \cdots, H^{\mathcal{N}-1} | \Psi \rangle \right\}$ span the entire Hilbert space, which is to say that $\rank \left( U \right) = \mathcal{N}$. However, as we will show, the degeneracies of the energy spectrum $d_{i}$ will lead to $\rank \left( U \right) < \mathcal{N}$.

In the factorization of the characteristic polynomial of Eq.~\eqref{eqn: characteristic polynomial in a product form}, each factor $f_{i}(\lambda)$ corresponds to $n_{i}$ different energy levels, each with a degeneracy of $d_{i}$. We denote these energy levels as $E_{i,n}$, where $n \in \left[1, n_i \right]$, and they are ordered as $E_{i,1} \le E_{i,2} \le \cdots \le E_{i,n_i}$. The eigenstates associated with these energy levels are represented as $| \psi_{i, n}^{k} \rangle$, where $k\in \left[ 1, d_i \right]$. We define the projection operator 
\ie
\hat{\Pi}_{i,n} \equiv \sum_{k=1}^{d_i} | \psi_{i, n}^{k} \rangle \langle \psi_{i, n}^{k} |
\fe
that projects an arbitrary state onto the subspace with energy $E_{i, n}$.

Given an arbitrary state $| \Psi \rangle$, although the projected state $| \psi_{i, n}^{\perp} \rangle \equiv \hat{\Pi}_{i,n} | \Psi \rangle$ is spanned by $d_{i}$ eigenstates $| \psi_{i, n}^{k} \rangle$, the rank of $U_{i, n}^{\perp} \equiv \{ | \psi_{i, n}^{\perp} \rangle, H | \psi_{i, n}^{\perp} \rangle, \cdots \}$ only has rank $1$, since $H | \psi_{i, n}^{\perp} \rangle \propto | \psi_{i, n}^{\perp} \rangle$. Consequently, the rank of the matrix $U = \{ | \Psi \rangle, \cdots, H^{\mathcal{N}-1} | \Psi \rangle \}$, where
\ie
| \Psi \rangle 
= \sum_{i=1}^{I} \sum_{n=1}^{n_i} | \psi_{i, n}^{\perp} \rangle
,\fe
will be upper bounded by
\ie
\rank \left(U\right) \le \sum_{i} n_i
.\fe
The upper bound is saturated when $| \Psi \rangle$ overlaps with all subspaces:
\ie
\forall (i, n), \; \hat{\Pi}_{i, n} | \Psi \rangle \neq 0
.\fe
From the tables displayed in Appendix~\ref{appendix: characteristic polynomial factorization}, we see that the upper bounds in the $Q=1$ sector are
\begin{center}
\begin{tabular}{ | c || c | c | c | c | c |} 
\hline
$M$ & $3$ & $4$ & $5$ & $6$ & $7$ \\
\hline
$\text{max } \rank(U)$ & $7$ & $24$ & $29$ & $120$ & $169$ \\
\hline
\hline
$M$ & $8$ & $9$ & $10$ & $11$ &  \\
\hline
$\text{max } \rank(U)$ & $516$ & $921$ & $2680$ & $4371$ &  \\
\hline
\end{tabular}
\end{center}

%%%%%%%%%%%%%%%%%%%%%%%%%%%%%%%%%%%%%%%%%%%%%%%%%%%%%%%%%%%%%%%%%%%%%%%%%%%%%%%%%%%%%%%%%%%%%%%%%%%%%%%%%%%%%%%%%%%%%%%%%%%%%%%%%%%%%%%%%%%%%%%%%%%%%%%%%%%%%%%%%%%%%%%%%%%%%%%%%%%%%%%%%%%%%%%%%%%%%%%%%%%%%%%%%%%%%%%%%%%%%%%%%%%%%%%%%%%%%%%%%%%%%%%%
%\clearpage
%\onecolumngrid
%\begin{center}
%\large\textbf{Supplemental Material}
%\end{center}

%\section{Supplement}

\end{document}